\newtheoremstyle{customStyle1}  % name of the style to be used
{0pt}       % measure of space to leave above the theorem. E.g.: 3pt
{0pt}       % measure of space to leave below the theorem. E.g.: 3pt
{\normalfont}   % name of font to use in the body of the theorem
{\parindent}        % measure of space to indent
{\em}  % name of head font
{. --}   	 % punctuation between head and body
{.5em}       % space after theorem head
{\thmname{#1}\thmnumber{ #2}\thmnote{ (#3)}}  % Manually specify head
\theoremstyle{plain}
\newtheorem{thm}{\protect\theoremname}
\newtheorem{prop}[thm]{\protect\propositionname}
\newtheorem{lem}[thm]{\protect\lemmaname}
\newtheorem{cor}[thm]{\protect\corollaryname}
\newtheorem{thmMain}{\protect\theoremname}
\newtheorem*{thm*}{\protect\theoremname}
\newtheorem*{prop*}{\protect\propositionname}
\newtheorem*{lem*}{\protect\lemmaname}
\newtheorem*{cor*}{\protect\corollaryname}
\newenvironment{proof}[1][\protect\proofname]{\par
	\normalfont\topsep6\p@\@plus6\p@\relax
	\trivlist
	\itemindent\parindent
	\item[\hskip\labelsep\scshape #1]\ignorespaces
}{%
	\endtrivlist\@endpefalse
}
\providecommand{\proofname}{Proof}
\definecolor{mygrey}{gray}{0.35}
\definecolor{myblue}{rgb}{0.2,0.2,0.8}
\definecolor{myzard}{cmyk}{0,0,0.05,0}
\definecolor{mywhite}{rgb}{1,1,1}
\definecolor{myred}{rgb}{0.9,0.1,0.}
\definecolor{kangdaorange}{rgb}{1,0.4,0.}
\newcommand{\tr}{\operatorname{\bf{tr}}} % trace of a matrix
\newcommand{\Real}{\operatorname{\bf{Re}}} % Realteil
\newcommand{\Imag}{\operatorname{\bf{Im}}} % Imaginaerteil
\renewcommand{\vec}[1]{\text{\boldmath$#1$}}
\newcommand{\bra}[1]{\langle #1|}
\newcommand{\ket}[1]{|#1\rangle}
\newcommand{\braket}[2]{\langle #1|#2\rangle}
\newcommand{\ketbra}[2]{|#1\rangle\!\langle#2|}
\providecommand{\corollaryname}{Corollary}
\providecommand{\lemmaname}{Lemma}
\providecommand{\propositionname}{Proposition}
\providecommand{\theoremname}{Theorem}
\begin{document}

\title{Quantum coherence and state conversion: theory and experiment}

\author{Kang-Da Wu}

\thanks{These authors contributed equally to this work.}
\affiliation{CAS Key Laboratory of Quantum Information, University of Science and Technology of China, \\ Hefei 230026, People's Republic of China}
\affiliation{CAS Center For Excellence in Quantum Information and Quantum Physics, University of Science and Technology of China, Hefei, 230026, People's Republic of China}

\author{Thomas Theurer}

\thanks{These authors contributed equally to this work.}
\affiliation{Institute of Theoretical Physics, Universität Ulm, Albert-Einstein-Allee 11, D-89069 Ulm, Germany}

\author{Guo-Yong Xiang}

\email{gyxiang@ustc.edu.cn}
\affiliation{CAS Key Laboratory of Quantum Information, University of Science and Technology of China, \\ Hefei 230026, People's Republic of China}
\affiliation{CAS Center For Excellence in Quantum Information and Quantum Physics, University of Science and Technology of China, Hefei, 230026, People's Republic of China}

\author{Chuan-Feng Li}

\affiliation{CAS Key Laboratory of Quantum Information, University of Science and Technology of China, \\ Hefei 230026, People's Republic of China}
\affiliation{CAS Center For Excellence in Quantum Information and Quantum Physics, University of Science and Technology of China, Hefei, 230026, People's Republic of China}

\author{Guang-Can Guo}

\affiliation{CAS Key Laboratory of Quantum Information, University of Science and Technology of China, \\ Hefei 230026, People's Republic of China}
\affiliation{CAS Center For Excellence in Quantum Information and Quantum Physics, University of Science and Technology of China, Hefei, 230026, People's Republic of China}

\author{Martin B. Plenio}

\affiliation{Institute of Theoretical Physics, Universität Ulm, Albert-Einstein-Allee 11, D-89069 Ulm, Germany}

\author{Alexander Streltsov}

\email{a.streltsov@cent.uw.edu.pl}
\affiliation{Centre for Quantum Optical Technologies IRAU, Centre of New Technologies, University of Warsaw, Banacha 2c, 02-097 Warsaw, Poland}
\affiliation{Faculty of Applied Physics and Mathematics, Gda\'{n}sk University of Technology, 80-233 Gda\'{n}sk, Poland}
\affiliation{National Quantum Information Centre in Gda\'{n}sk, 81-824 Sopot, Poland}

\maketitle
\textbf{The resource theory of coherence~\cite{BaumgratzPhysRevLett.113.140401,Levi_2014,WinterPhysRevLett.116.120404,YadinPhysRevX.6.041028,StreltsovRevModPhys.89.041003,aberg2006quantifying} studies the operational value of superpositions in quantum technologies. A key question in this theory concerns the efficiency of manipulation and inter-conversion~\cite{WinterPhysRevLett.116.120404,ChitambarPhysRevLett.117.030401,ChitambarPhysRevLett.116.070402,StreltsovPhysRevX.7.011024,PhysRevLett.119.140402}
of the resource. Here we solve this question completely for qubit states by determining the optimal probabilities for mixed state conversions via stochastic incoherent operations.
Extending the discussion to distributed scenarios, we introduce and address the task of \emph{assisted incoherent state conversion} where the process is enhanced by making use of correlations with a second party. Building on these results, we demonstrate experimentally that the optimal state conversion probabilities can be achieved in a linear optics set-up. This paves the way towards real world applications of coherence transformations in current quantum technologies.
}

Practical constraints on our ability to manipulate physical systems restrict the control we can exert on them. It is, for example, exceedingly difficult to exchange quantum systems undisturbed over long distances~\cite{HorodeckiRevModPhys.81.865}. When manipulating spatially separated subsystems, effectively, this limits us to Local Operations and Classical Communication (LOCC).  Under these operations, it is only possible to prepare certain states, i.e. separable ones. The states which cannot be produced under LOCC are called entangled and elevated to resources: Consuming them allows to implement operations such as quantum state teleportation~\cite{PhysRevLett.70.1895} to achieve perfect quantum state transfer which would not be possible with LOCC alone. This has important consequences, e.g. in quantum communication and other quantum technologies, but also for our understanding of the view of the fundamental laws of nature~\cite{EinsteinPhysRev.47.777,PlenioPhysRevLett.78.2275,plenio2007introduction,HorodeckiRevModPhys.81.865}. 

Entanglement is explored within the framework of quantum resource theories, which can also be used to investigate other non-classical  features of quantum mechanics in a systematic way. A concept underlying many facets of non-classicality, including entanglement, is the superposition principle. Since a quantum system naturally decoheres in the presence of unavoidable interactions between the system and its environment~\cite{ZurekRevModPhys.75.715,BromleyPhysRevLett.114.210401}, superposition is itself a resource, which is studied in the recently developed resource theory of quantum coherence~\cite{BaumgratzPhysRevLett.113.140401,WinterPhysRevLett.116.120404,YadinPhysRevX.6.041028,StreltsovRevModPhys.89.041003,aberg2006quantifying}. In this framework, the set of free operations analogous to LOCC in entanglement theory are incoherent operations (IOs), corresponding to quantum measurements which cannot create coherence even if postselection is applied on the individual measurement outcomes~\cite{BaumgratzPhysRevLett.113.140401}.
	
One of the central questions in any resource theory is the state conversion problem, i.e. the characterization of the possible interconversion of resources under transformations allowed by the corresponding resource theory. The answer to this question leads to a preorder on the resource states which determines their usefulness or value in technological applications, since a given state can be used in all protocols which require a state that can be created from it. This opens new perspectives on how such resources can lead to practical advantages in quantum metrology~\cite{PhysRevLett.116.150502,MarvianPhysRevA.94.052324}, quantum algorithms~\cite{HilleryPhysRevA.93.012111,Matera2058-9565-1-1-01LT01}, and even quantum dynamics in biology~\cite{doi:10.1080/00405000.2013.829687}. 

In this work, we provide a full solution to the coherence conversion problem for qubit systems, the fundamental building blocks in quantum information: We determine the optimal conversion probability between two states via IOs. When the technology to build quantum computers becomes available, it is likely that they will appear initially in small numbers. They have complete control over their qubits and can assist a less powerful remote client restricted to IOs. In particular they can assist him in state conversions, which we study under the name of assisted incoherent state conversion, solving the problem of optimality for two-qubit pure and Werner states. Moreover, we demonstrate the first experimental realization of non-unitary IOs using photonic quantum technologies and show their capability of implementing optimal state conversion on qubits both with and without assistance. This is an important step towards the experimental investigation and systematic manipulation of coherence in quantum technological applications.

\section*{Results}

\subsection* {Theoretical framework}
In this section, we describe the foundations of this work.
In the resource theory of coherence, an orthonormal basis of states $\{\ket{i}\}$, usually motivated on physical grounds as being easy to synthesize or store, are considered classical. Any mixture of such states, $\rho=\sum_{i}p_{i}\ket{i}\!\bra{i}$, is referred to as "free" and termed \emph{incoherent}, similar to probability distributions on classical states. The free operations are referred to as \emph{incoherent operations} (IO)~\cite{BaumgratzPhysRevLett.113.140401}: these are quantum
transformations $\Lambda$ which admit an incoherent Kraus decomposition
$\Lambda[\rho]=\sum_{i}K_{i}\rho K_{i}^{\dagger}$ with incoherent Kraus operators $K_{i}$; i.e., $K_{i}\ket{m}\sim\ket{n}$
for incoherent states $\ket{m}$ and $\ket{n}$. To implement a stochastic IO, we formally postselect a deterministic IO according to the measurement outcomes~$i$. Obviously, in the presence of such restrictions, the amount of coherence in a quantum system can never increase.

Most of the analysis presented in the following can be reduced to the mathematically simpler family of \emph{strictly incoherent operations} (SIO). These are operations that can be decomposed into strictly incoherent Kraus operators $K_i$, which are defined by the property that both $K_{i}$ and $K_{i}^{\dagger}$ are incoherent, and correspond to quantum processes which do not use coherence~\cite{WinterPhysRevLett.116.120404,YadinPhysRevX.6.041028}. First, we will solve the problem of state conversion for qubits under the restricted sets of operations IO and SIO theoretically. Then, we extend our analysis to the problem of \emph{assisted incoherent state conversion}, which we introduce now as a game between two parties, Alice and Bob. Initially, they share a bipartite quantum state $\rho^{AB}$, and the aim of the game is to establish a certain state $\sigma^B$ on Bob's side. Clearly, if all quantum transformations were allowed locally, Bob could achieve this task by simply erasing his local system and preparing the desired state $\sigma^B$. However, the situation changes if Bob is constrained to local IOs: in this case he cannot prepare the state $\sigma^B$ if the state has coherence. Moreover, as we will show later, correlations in the joint state $\rho^{AB}$ can be used to enhance Bob's conversion possibilities, if Alice assists Bob by measuring her particle and communicating the measurement outcome.\\

\begin{figure*}[htp]
	\label{fig:exp}
	\includegraphics[scale=0.042]{}
	\caption{\label{fig:exp} \textbf{Experimental protocols and setup}. In (a-c), the three experiments performed in our laboratory are illustrated: (a) single-qubit conversion without assistance; (b) pure entangled two-qubit state conversion with assistance; (c) noisy two-qubit state conversion with assistance. The whole setup in (d) can be divided into three modules: (I) state preparation; (II) incoherent state conversion with or without assistance; (III) tomography. In (I), we can prepare a class of single-qubit states as in Eq.~(\ref{qubitmixstate}) for Bob, its purification shared with Alice, and a class of Werner states; in (II), we experimentally implement the incoherent operations, both with and without assistance; in (III), we identify the quantum states of Bob. The optical components appearing in the setup are $\beta-$barium borate (BBO), half-wave plate (H$_i$), quarter-wave plate (Q), beam displacer (BD), adjustable aperture (AA), interference filter (IF), beam splitter (BS), mirror (M), quartz plate (QP), polarizing beam splitter (PBS), polarization controller (PC), single photon detector (SPD), fiber coupler (FC), and unbalanced interferometer (UI). $K_{1,2}$ denotes the outcomes of Kraus operators 1(2).}
\end{figure*}

\subsection*{Optimal conversion without assistance}
For a general qubit state $\rho$, the exact shape of the state space which can be achieved by IOs is described in the following Theorem, making use of the Bloch sphere representation which we introduce in the Methods Section, where $\vec{r}$ and $\vec{s}$ are the Bloch vectors of the initial and the final state, respectively, and using the definition $r^2=r_x^2+r_y^2$. 
\begin{thm}\label{thm:Main}
A qubit state $\sigma$ is reachable via a stochastic SIO or IO transformation from a fixed initial qubit state $\rho$ with a given probability $p$ iff
\begin{subequations}
		\begin{align}
			&r^2 s_{z}^{2}+\left(1-r_{z}^{2}\right)s^2\le r^2, \label{eq:firstThmMain}\\
			&p^2s^2 \le \frac{r^2}{1+|r_z|}\left[2p-(1-|r_z|)\right]. \label{eq:secondThmMain}
		\end{align}
	\end{subequations}

\end{thm}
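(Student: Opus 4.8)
The plan is to work throughout in the Bloch picture and exploit the rigidity of incoherent Kraus operators on a single qubit. First I would fix a normal form: the incoherent unitaries on a qubit are the phase gates $\diag(1,e^{i\phi})$ together with $X$ times a phase gate, so up to these free operations I may take $r_y=s_y=0$, $r_x=r\ge 0$, $s_x=s\ge 0$ and $r_z\ge 0$; note that both \eqref{eq:firstThmMain} and \eqref{eq:secondThmMain} are invariant under $s_z\mapsto-s_z$, consistent with $X$ being free. Next I would recall that every incoherent Kraus operator on a qubit is, up to a phase gate on each side, one of $\diag(\sqrt a,\sqrt b)$, the antidiagonal operator $\sqrt g\,\ketbra01+\sqrt d\,\ketbra10$, or a rank-one operator $\ket j\bra\chi$ — the rank-one ones being restricted to $\ket j\bra k$ for SIO. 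Since $\mathrm{SIO}\subseteq\mathrm{IO}$, it suffices to prove necessity of the two conditions for IO and sufficiency with SIO. For the necessity of \eqref{eq:firstThmMain} I would use the quantity $\mu(\rho)=r^2/(1-r_z^2)=|\rho_{01}|^2/(\rho_{00}\rho_{11})$: a diagonal or antidiagonal Kraus operator leaves $\mu$ invariant, a rank-one one produces an incoherent output (for which \eqref{eq:firstThmMain} is trivial), and coarse-graining several outcomes into one can only decrease $\mu$, by the elementary inequality $(\sqrt{AB}+\sqrt{GD})^2\rho_{00}\rho_{11}\le(A\rho_{00}+G\rho_{11})(B\rho_{11}+D\rho_{00})$, which is just $(\sqrt{AD}\,\rho_{00}-\sqrt{BG}\,\rho_{11})^2\ge0$ rearranged (and padding the diagonal entries by the nonnegative weights coming from rank-one terms only helps). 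Hence a reachable $\sigma$ with $\sigma_{00},\sigma_{11}>0$ obeys $\mu(\sigma)\le\mu(\rho)$, which rearranges exactly to \eqref{eq:firstThmMain}, the incoherent case being immediate.

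For sufficiency I would exhibit a concrete protocol: a partial dephasing $\rho\mapsto\lambda\rho+(1-\lambda)Z\rho Z$ (a free convex combination of incoherent unitaries), followed by the three-outcome incoherent instrument with Kraus operators $\diag(\sqrt a,\sqrt b)$, $\sqrt g\,\ketbra01+\sqrt d\,\ketbra10$ and $\diag(\sqrt{1-a-d},\sqrt{1-b-g})$, declaring the union of the first two outcomes the ``success''. Computing its action on the dephased state, the success probability $p$ and the success state $\sigma=M/p$ obey $p\sigma_{00}=a\rho_{00}+g\rho_{11}$, $p\sigma_{11}=b\rho_{11}+d\rho_{00}$ and $p\,|\sigma_{01}|=(\sqrt{ab}+\sqrt{gd})\,\lambda\,|\rho_{01}|$, subject only to $a+d\le1$ and $b+g\le1$ (all that the completing Kraus operator requires). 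Solving these for $a,b,g,d,\lambda$ shows that any $\sigma$ satisfying \eqref{eq:firstThmMain} is reached with every probability up to the larger root of \eqref{eq:secondThmMain}, capped by $1$; the cap is attained — deterministic conversion is possible — exactly when $s\le r$. Convexly mixing in an extra identity branch then brings the success probability continuously down, so the whole admissible range of $p$ is realized. Targets with $s=0$ are dealt with separately by an incoherent ``prepare-$\sigma$'' channel, which is where the IO-specific rank-one operators genuinely enter.

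The main obstacle is the converse half of \eqref{eq:secondThmMain}: no stochastic incoherent operation does better. Using the normal form, any success branch is $M=\sum_{i\in S}K_i\rho K_i^\dagger$; grouping the diagonal terms into effective weights $(A,B)$, the antidiagonal terms into $(G,D)$, and discarding the rank-one terms (for a coherent target they contribute nothing to $M_{01}$ while eating into the diagonal budget, hence strictly hurt), the coarse-graining estimate gives $|M_{01}|\le(\sqrt{AB}+\sqrt{GD})|\rho_{01}|$, while $\sum_{i\in S}K_i^\dagger K_i\le\id$ forces $A+D\le1$, $B+G\le1$, and also $p=\tr(\rho\sum_{i\in S}K_i^\dagger K_i)\le1$. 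Imposing $M\propto\sigma$ ties $A,B,G,D$ together through $\sigma_{00}/\sigma_{11}=(1+s_z)/(1-s_z)$, and one is left with maximising $\sqrt{AB}+\sqrt{GD}$ over this box- and linearly-constrained configuration; from $p^2s^2=4p^2|\sigma_{01}|^2\le 4(\sqrt{AB}+\sqrt{GD})^2|\rho_{01}|^2$, carrying out that (elementary but fiddly) optimisation bounds the right-hand side by $\dfrac{r^2}{1+|r_z|}\bigl(2p-(1-|r_z|)\bigr)$, i.e.\ \eqref{eq:secondThmMain}. I expect this constrained optimisation, together with the case analysis confirming that the rank-one remainder, the free phases, and the extra IO operators $\ket j\bra\chi$ never improve the bound (they are relevant only for the $s=0$ endpoint and for matching SIO with IO), to be where the real effort lies; the remainder is Bloch-vector bookkeeping.
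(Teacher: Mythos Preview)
Your approach is sound and genuinely different from the paper's. The paper proceeds by a single unified calculation: it writes the generic successful SIO branch with four Kraus operators, argues that the two rank-one ones may be dropped on the boundary, and then substitutes a full trigonometric parametrisation $(a_1,a_2,b_1,b_2)\leadsto(t,\theta,\phi)$. Fixing $s_x$ and maximising $s_z$ over $\phi$ yields the ellipsoid~\eqref{eq:firstThmMain} independently of $t$, while the constraints $l_a,l_b\le1$ restrict the range of $t$ and hence of $s_x$, producing the cylinder~\eqref{eq:secondThmMain}; necessity and sufficiency thus emerge together from one parametrisation, with convexity (mixing with free states) filling in the interior.

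Your route is more modular. For the necessity of~\eqref{eq:firstThmMain} you use monotonicity of $\mu(\rho)=|\rho_{01}|^2/(\rho_{00}\rho_{11})$ under stochastic SIO; this is precisely the qubit instance of the $\Delta$-robustness monotone that the paper establishes separately (Thm.~\ref{thm:necessaryStochSIO}) but does \emph{not} invoke in its proof of Thm.~\ref{thm:Main}. For~\eqref{eq:secondThmMain} you replace the trigonometry by Cauchy--Schwarz plus a box-constrained maximisation of $uv$ subject to $u\rho_{00}+v\rho_{11}=p$ and $u,v\le1$ (with $u=A+D$, $v=B+G$); this is short and gives the bound directly, and your observation that the rank-one IO operators only pad the diagonal budget handles the IO case without a separate IO$\,=\,$SIO lemma. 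What you give up is that your sufficiency half---the two-Kraus protocol with dephasing and convex mixing---still needs the verification that every $(s,s_z,p)$ in the region is actually hit, and carrying out the ``solving these for $a,b,g,d,\lambda$'' step on the boundary amounts to essentially the same computation the paper performs through its angle substitution. So you have made the necessity arguments more transparent, at the price of deferring rather than avoiding the constructive work.
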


\noindent For a geometrical interpretation of this Theorem and further discussion we refer to the Methods Section. These results allow us to evaluate the optimal conversion probability $P\left(\rho\rightarrow\sigma\right)$ via IO and SIO for any two states $\rho$ and $\sigma$ of a single qubit. It holds that $P\left(\rho\rightarrow\sigma\right)=0$ if $r^2 s_{z}^{2}+\left(1-r_{z}^{2}\right)s^2>r^2$ and otherwise
\begin{align}
	P\left(\rho\rightarrow\sigma\right) = \min\left\{ \frac{r^2}{\left(1+|r_z|\right)s^2}\left[1+\sqrt{1-\frac{s^2\left(1-r_z^2\right)}{r^2}}\right],1\right\} \label{eq:Probability}
	\end{align}
Moreover, these theoretical results can be extended beyond qubits, leading to a necessary condition for state conversion via stochastic SIO, see Supplementary Information for more details.

\subsection*{Optimal conversion with assistance}
We now present our results with assistance, for pure entangled states and a class of mixed states. The task of assisted incoherent state conversion is equivalent to transforming a shared state $\rho^{AB}$ into a local state $\sigma^B$ on Bob's side via \emph{local quantum-incoherent operations and classical communication } (LQICC)~\cite{ChitambarPhysRevLett.116.070402,StreltsovPhysRevX.7.011024}.
These operations consist of general local operations on Alice's side, local IOs on Bob's side and the exchange of measurement results via a classical channel. The problem of optimal conversion of a general two qubit entangled state $\ket{\psi}^{AB}$ into an arbitrary local state $\sigma^B$ is solved in the following Theorem.

\begin{thm}
	\label{thm:PurificationAssistance}
	Let Alice and Bob share a pure two-qubit state  $\ket{\psi}^{AB}$ and denote Bob's local state by $\rho^B$. The maximal probability  $P_\mathrm{a}(\ket{\psi}^{AB}\rightarrow\sigma^{B})$ to prepare the qubit state $\sigma^B$ on Bob's side via one-way LQICC is given by
	
	\begin{equation}
	P_\mathrm{a}\left(\ket{\psi}^{AB}\rightarrow\sigma^{B}\right)=
	\min\left\{ 1,\left(1-\left|r_z\right|\right)\frac{1+\sqrt{1-s^2}}{s^2}\right\},\label{eq:PurificationAssistance}
	\end{equation}
	where $\vec{r}$ and $\vec{s}$ are the Bloch vectors of $\rho^B$ and $\sigma^B$, respectively.
\end{thm}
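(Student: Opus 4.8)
The plan is to reduce assisted conversion to the unassisted case of Theorem~\ref{thm:Main} by exploiting the structure of one-way LQICC on a pure bipartite state. Write $\ket{\psi}^{AB}$ in its Schmidt decomposition with respect to the $A$:$B$ split. Alice performs a general measurement $\{M_k\}$ on her system and communicates the outcome $k$ to Bob; conditioned on $k$, Bob holds a pure state $\ket{\phi_k}^B$ (since the global state was pure), which he then processes with a stochastic IO on his qubit. Bob's freedom in the post-measurement step is governed exactly by Theorem~\ref{thm:Main} applied to the initial state $\ket{\phi_k}\!\bra{\phi_k}$. So the overall success probability is an average over $k$ of the single-shot conversion probabilities, weighted by the probabilities $q_k$ of Alice's outcomes, and the problem becomes: maximise $\sum_k q_k\, P(\ket{\phi_k}\to\sigma^B)$ over all valid ensembles $\{q_k,\ket{\phi_k}\}$ that Alice can steer Bob into.

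The key structural fact is the Schr\"odinger--HJW (steering) theorem: the ensembles $\{q_k,\ket{\phi_k}^B\}$ realizable on Bob's side by Alice's measurements are precisely the pure-state ensembles whose average is Bob's reduced state $\rho^B$. Thus the task is to find
\[
\max\Bigl\{\, \sum_k q_k\, P\bigl(\ket{\phi_k}\to\sigma^B\bigr)\ \Big|\ \sum_k q_k\,\ket{\phi_k}\!\bra{\phi_k}=\rho^B \,\Bigr\}.
\]
For a pure qubit state $\ket{\phi_k}$ with Bloch vector $\vec{t}_k$ (so $|\vec{t}_k|=1$), the unassisted probability from Eq.~(\ref{eq:Probability}) simplifies: since $t_k^2 = 1-(t_k)_z^2$, the first constraint in Theorem~\ref{thm:Main} forces $s_z = 0$ for conversion to be possible at all (coherence can only be ``rotated'' in the $xy$-plane, not tilted), and then $P(\ket{\phi_k}\to\sigma^B)=\min\{1,\,(1-|(t_k)_z|)(1+\sqrt{1-s^2})/s^2\}$. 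Inside the region where the min is not yet $1$, this is an affine decreasing function of $|(t_k)_z|$. Therefore the average success probability is, up to the cap at $1$, an affine function of $\sum_k q_k |(t_k)_z|$, and maximising it amounts to choosing the ensemble that minimises $\sum_k q_k |(t_k)_z|$ subject to $\sum_k q_k \vec{t}_k = \vec{r}$.

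The remaining step is the optimisation over ensembles: minimise the average of $|(t_k)_z|$ over unit vectors averaging to $\vec r$. The clean way is to use a two-element ensemble lying in the plane spanned by $\vec r$ and the $z$-axis, with both $\vec t_k$ having $z$-components of a single sign --- this gives $\sum_k q_k|(t_k)_z| = |r_z|$, and a convexity/support-function argument shows one cannot do better, since $|(t_\cdot)_z|$ is convex and its average is bounded below by $|r_z|$ only when all $(t_k)_z$ share a sign, which is feasible precisely because $|\vec r|\le 1$. Plugging $\sum_k q_k|(t_k)_z| = |r_z|$ into the affine expression yields the claimed formula $\min\{1,(1-|r_z|)(1+\sqrt{1-s^2})/s^2\}$. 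One should also check the converse inequality --- that no one-way LQICC protocol beats this --- which follows from the same decomposition: any protocol induces such an ensemble, and each branch is bounded by Theorem~\ref{thm:Main}, so the averaged bound applies.

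I expect the main obstacle to be the ensemble optimisation, specifically arguing rigorously that $\min \sum_k q_k |(t_k)_z| = |r_z|$ and that the cap-at-$1$ nonlinearity does not allow a cleverer non-extremal ensemble to do better. The subtlety is that $P(\ket{\phi_k}\to\sigma^B)$ is affine in $|(t_k)_z|$ only on a subinterval and is then clipped to $1$; one has to verify that mixing in branches with $P=1$ (i.e. nearly incoherent $\ket{\phi_k}$) is never advantageous compared to concentrating the ``incoherent weight'' optimally, which is where a careful concavity/monotonicity argument on the clipped function $h(x)=\min\{1,(1-x)(1+\sqrt{1-s^2})/s^2\}$ — it is concave and nonincreasing in $x$ — does the job via Jensen applied to $\sum_k q_k h(|(t_k)_z|) \le h\!\left(\sum_k q_k |(t_k)_z|\right) \le h(|r_z|)$. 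Verifying the Bloch-plane reduction (that restricting to the $\vec r$–$z$ plane is WLOG) and the $s_z=0$ feasibility gate are comparatively routine.
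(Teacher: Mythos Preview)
Your approach is essentially the paper's: steer via HJW to pure-state ensembles of $\rho^B$, apply the single-qubit formula per branch, bound above using $\sum_k q_k|(t_k)_z|\ge|r_z|$, and achieve the bound with the two-point decomposition at fixed $z$-coordinate $r_z$. Your Jensen argument on the concave clipped function $h$ is a slightly cleaner packaging of the same inequality the paper obtains by dropping the $\min$ and then capping at~$1$.

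One correction: your claim that for pure initial states ``the first constraint in Theorem~\ref{thm:Main} forces $s_z=0$'' is wrong. For a pure $\ket{\phi_k}$ one has $t_k^2=1-(t_k)_z^2$, and substituting into Eq.~(\ref{eq:firstThmMain}) gives $s_z^2+s^2\le 1$, which holds for every qubit state $\sigma$; nothing is forced on $s_z$. A pure coherent qubit can be stochastically converted to \emph{any} target. This does not damage your argument, since the formula $P(\ket{\phi_k}\to\sigma)=\min\{1,(1-|(t_k)_z|)(1+\sqrt{1-s^2})/s^2\}$ is correct for all $\sigma$ and simply happens not to depend on $s_z$; but the sentence and its parenthetical should be deleted. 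Also, your assertion that Bob's post-measurement state is pure presumes rank-one POVM elements on Alice's side; this is WLOG for the maximisation, but you should say so (the paper instead starts from arbitrary ensembles and refines mixed members to pure ones at constant $z$).
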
	

When the state is subjected to noise, the probabilities of assisted incoherent state conversions are reduced. As an example, we consider the two-qubit Werner state,
\begin{equation}\label{wernerstate}
\rho_{\mathrm{w}}^{AB}=q_\mathrm{w}\ket{\phi^{+}}\!\bra{\phi^{+}}+(1-q_\mathrm{w})\frac{\openone}{4},
\end{equation}
with the maximally entangled state $\ket{\phi^{+}}=(\ket{00}+\ket{11})/\sqrt{2}$.
The optimal conversion probability for converting $\rho_{\mathrm{w}}^{AB}$ into the qubit state $\sigma^B$ via one-way LQICC can also be evaluated explicitly by,
\begin{equation} \label{eq:Werner-2}
P_\mathrm{a}\left(\rho_{\mathrm{w}}^{AB}\rightarrow\sigma^{B}\right)=\begin{cases}
1 & \mathrm{if}\,\,\,q_\mathrm{w}\geq \frac{s^2}{\sqrt{1-s_z^2}},\\  %\sqrt{\frac{s_{x}^{2}+s_{y}^{2}}{1-s_{z}^{2}}}, \\
0 & \mathrm{otherwise},
\end{cases}
\end{equation}
where $\vec{s}$ denotes the Bloch vector of $\sigma^B$.

Generally, correlations in the joint two-qubit state $\rho^{AB}$ always enhance the conversion possibilities of Bob whenever the state is not quantum-incoherent, i.e., not of the form $\rho^{AB}=\sum_{i}p_{i}\rho_{i}^{A}\otimes\ket{i}\!\bra{i}^{B}$, see Supplementary Information for more details.\\

\subsection*{Experimental results} 

\begin{figure*}[htp]
	\includegraphics[scale=0.13]{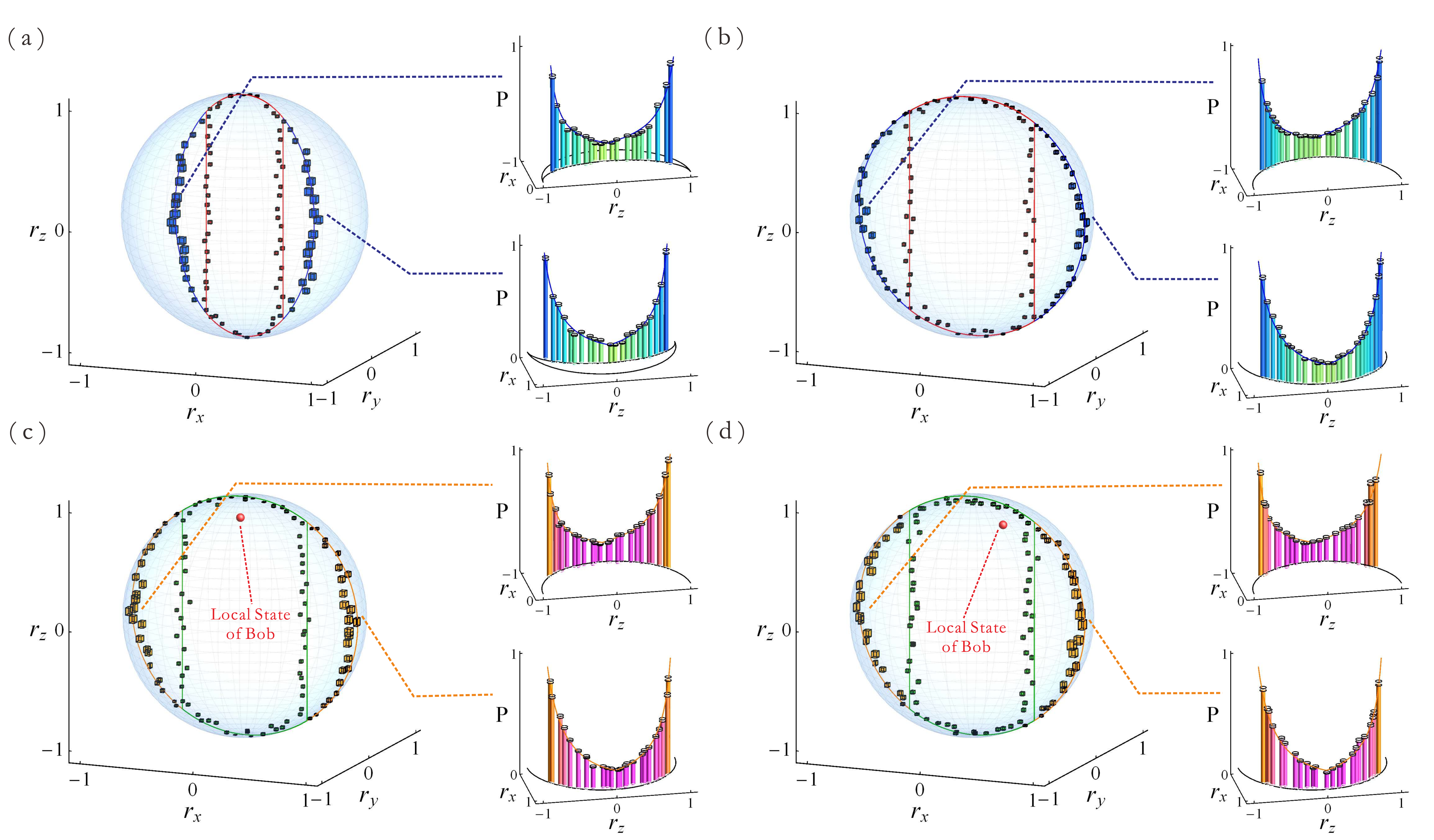}
	\caption{\label{fig:data2} \textbf{Experimental results for state conversion: single-qubit states and pure two-qubit states}. Experimental results for two local states for Bob as in Eq.~(\ref{qubitmixstate}) with Bloch coordinates $\left(\frac{1}{3},0,\frac{5}{6}\right)$ and $\left(\frac{\sqrt{11}}{6},0,\frac{5}{6}\right)$, are shown in (a,b) without assistance from Alice. The states are prepared with high fidelity up to 0.999. In the left of (a,b), the DC and SC boundaries in $x$-$z$ plane are shown in red and blue cubes, respectively, with each side representing the variance $\delta\left\langle r_i \right\rangle (i=x,y,z)$ derived from Poisson distribution of single photons. In the right of (a,b), conversion probabilities $P(\rho\!\rightarrow\!\sigma)$ for boundaries of SC are shown with respect to the $x$-$z$ Bloch coordinates of the target state $\sigma^B$. Experimental results for two local states for Bob, with Bloch coordinates $\left(0,0,\frac{5}{6}\right)$ and $\left(\frac{1}{3},0,\frac{5}{6}\right)$, sharing pure entangled states $\ket{\psi}^{AB}$ with Alice, are shown in (c,d). The experimental two-qubit states are prepared with fidelity of 0.989 and 0.982. In the left of (c,d), the DC and SC boundaries in $x$-$z$ plane are shown in green and orange cubes, respectively. In the right of (c,d), the probabilities of conversion $P_a(\ket{\psi}^{AB}\rightarrow\ket{\phi}^B)$ are shown with respect to the $x$-$z$ Bloch coordinates of the target state $\ket{\phi}^B$. Solid lines represent theoretical predictions.
	}
\end{figure*}

We experimentally implement the above protocols on several classes of input states, both with and without assistance. The experimental protocols and setup are illustrated in Fig.~\ref{fig:exp}. The setup consists of three modules, detailed descriptions can be found in the Methods and Supplementary Information.

For verifying the theoretical predictions of incoherent single-qubit state conversion, we experimentally initialize Bob's photon as
\begin{equation}\label{qubitmixstate}
\rho^B=\frac{1}{2}\left(\openone + r_x\sigma_x+r_z\sigma_z \right),
\end{equation}
where $r_x=\frac{1}{3}, r_z=\frac{5}{6}$ for a mixed input and $r_x=\frac{\sqrt{11}}{6}, r_z=\frac{5}{6}$ for a pure input. Our goal is now to convert the initial states using the IOs available in our experiments.
In Fig.~\ref{fig:data2}(a, b), the experimentally obtained boundary of state space for deterministic conversion (DC) is shown for the $x$-$z$ plane by red cubes. The boundaries with respect to stochastic conversion (SC) are shown by blue cubes.
All solid lines represent theoretical predictions from Theorem~\ref{thm:Main}. Also the experimental conversion probability is plotted as a function of $x$-$z$ coordinates of the target state $\sigma$ (blue cylinders) for two states. The solid lines represent the theoretical predictions from Eq.~(\ref{eq:Probability}).
There is a fundamental difference between pure and mixed inputs: A pure and coherent input can be converted via IOs into any qubit state at least stochastically. For mixed input, this is not the case.

To explore assisted conversions experimentally, we first let Bob share a pure entangled state $\ket{\psi}^{AB}$ with Alice, where Bob's local state is the state in Eq.~(\ref{qubitmixstate}) with Bloch coordinates $\left(0,0,\frac{5}{6}\right)$ and $\left(\frac{1}{3},0,\frac{5}{6}\right)$. Our experimental results are shown in Fig.~\ref{fig:data2}(c, d).
Remarkably, due to Theorem~\ref{thm:PurificationAssistance} the probability for assisted conversion only depends on the $z$-coordinate of the initial state, which explains the close similarity of both the state spaces and conversion probabilities in Fig.~\ref{fig:data2}(c, d). We also experimentally test two Werner states, one with entanglement and one without.
In Fig.~\ref{fig:data4}, the DC boundaries are shown in purple and blue for the two states. In accordance with Eq.~(\ref{eq:Werner-2}), resorting to SCs does not allow to prepare additional states.

\begin{figure}[htp]
	\label{fig:data4}
	\includegraphics[scale=0.12]{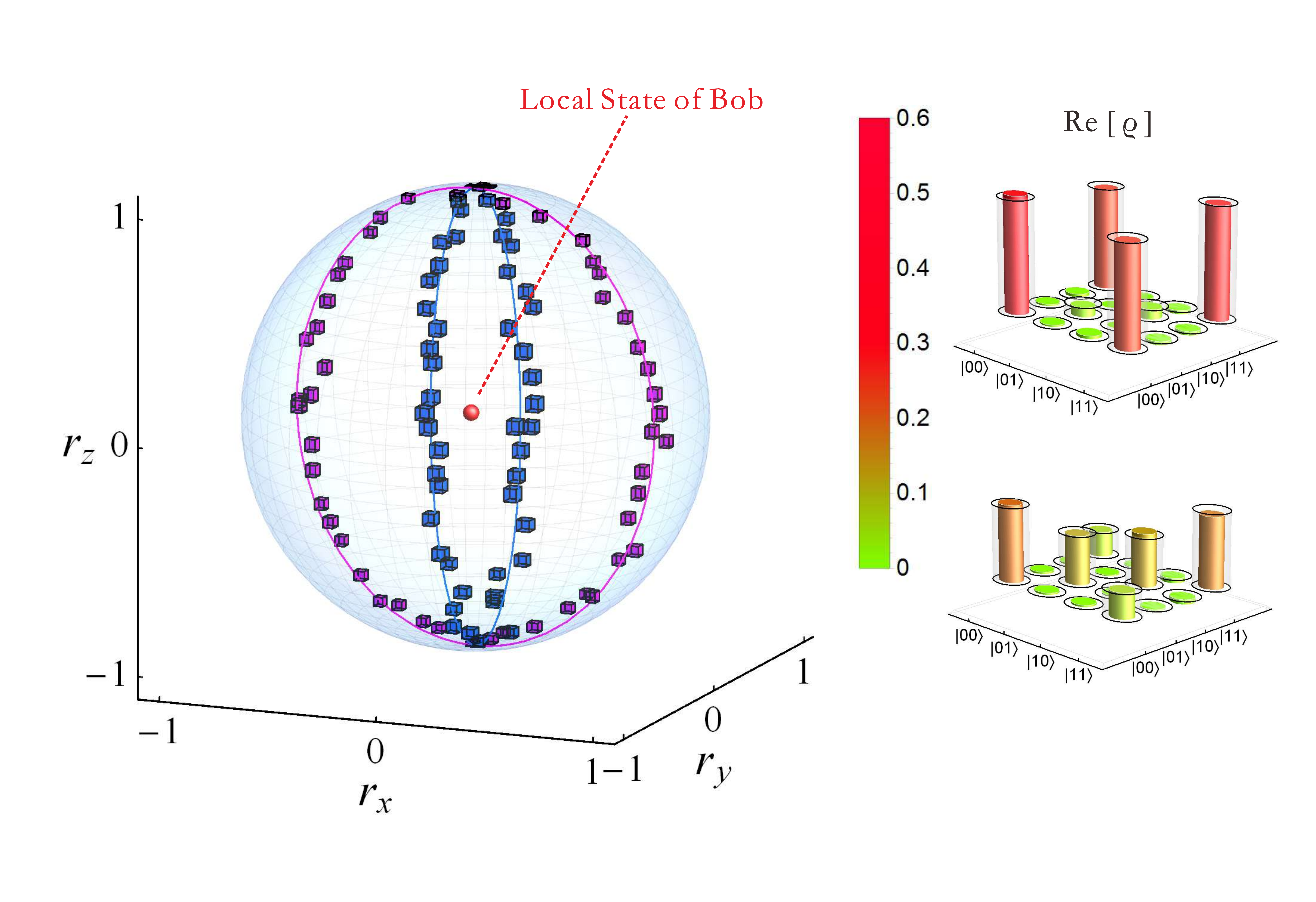}
	\caption{\label{fig:data4} \textbf{Experimental results for assisted incoherent state conversion: noisy two-qubit states}. Experimental results for local states for Bob, $\rho^B=\frac{1}{2}\openone$, sharing pure entangled states $\ket{\psi}^{AB}$ with Alice, subjected to a controllable proportion of white noise. Though Bob will find his system in a maximally mixed state with zero coherence, he can prepare certain coherent states if he takes advantage of Alice's assistance. In our experiment, we use two Werner states with $q_\mathrm{w}=0.8245$ and $0.2075$. In the right, the real parts of the tomographically reconstructed quantum states $\rho^{AB}_\mathrm{w}$ are shown,  with a fidelity of $0.986$ and $0.997$, respectively. In the left, the DC boundaries in $x$-$z$ plane for conversions $\rho_\mathrm{w}^{AB}\rightarrow\sigma^B$ are shown in purple and blue respectively.	
	}
\end{figure}

Compared to DC, Bob can expand the state space achieved with SC, and even obtain all qubit states by taking advantage of assistance. The expansion can be seen from the results in Fig.~\ref{fig:data1}, for the single-qubit state in Eq.~(\ref{qubitmixstate}) of Bob, and its purification $\ket{\psi}^{AB}$. In Fig.~\ref{fig:data1}(a), we experimentally show the boundary of accessible state space, both deterministically and stochastically, with and without assistance. Noting that the $\ell_1$ norm coherence~\cite{BaumgratzPhysRevLett.113.140401} reads: $C_{\ell_1}(\rho)=\sum_{i\neq j}|\rho_{ij}|=r$ for qubit states, we can obtain a similar relation between achievable coherence and maximal conversion probability as in Eq.~(\ref{eq:Probability}, \ref{eq:PurificationAssistance}), the experimental results are shown in 
Fig.~\ref{fig:data1}(b). Although local coherence can never be increased deterministically, we can still exceed the original coherence at the expense of success probability. A maximally coherent state $\ket{+}$ can be obtained by utilizing SC and assistance.

\section*{Discussion}
In this work we study the problem of quantum state conversion within the resource theory of quantum coherence, both theoretically and experimentally. The state conversion problem is important in any resource theory, since it determines the value of states for protocols using the resource under study. The result presented here are a significant generalization of recent results on single-shot coherence theory~\cite{PhysRevLett.119.140402,Regula1711.10512,Vijayan1804.06554,PhysRevLett.121.070404,SciRepCoherenceTransQubit} and single-shot resource theories in general~\cite{GourPhysRevA.95.062314}, and include necessary conditions on the existence of stochastic conversions, which we generalized to higher dimensions.

In most resource theories one is also interested in the possibilities of asymptotic state conversion, where many instances of the initial and final state are available. As we show in detail in the Supplementary Information, our results also pave the way towards a complete solution of this problem: our single-shot conversion rate gives a lower bound on
the asymptotic conversion rate, which is in some areas significantly
better than the best previously known bound~\cite{WinterPhysRevLett.116.120404}. In addition, it coincides for some states with an upper bound from \cite{WinterPhysRevLett.116.120404}, solving the asymptotic conversion problem in these cases. Moreover, the results allow us to investigate the irreversibly of coherence theory in the asymptotic limit and to determine the possible distillable coherence for fixed coherence cost.

Experimentally implementing non-unitary incoherent operations for the first time, we demonstrated that a quantum optical experiment can closely achieve the expected optimal conversion rates. The corresponding optical setup should be seen as a building block for more general transformations, also going beyond single qubits and IOs. The results presented in this work can then serve as benchmarks for these more advanced setups.

\begin{figure*}[htp]
	\label{fig:data1}
	\includegraphics[scale=0.163]{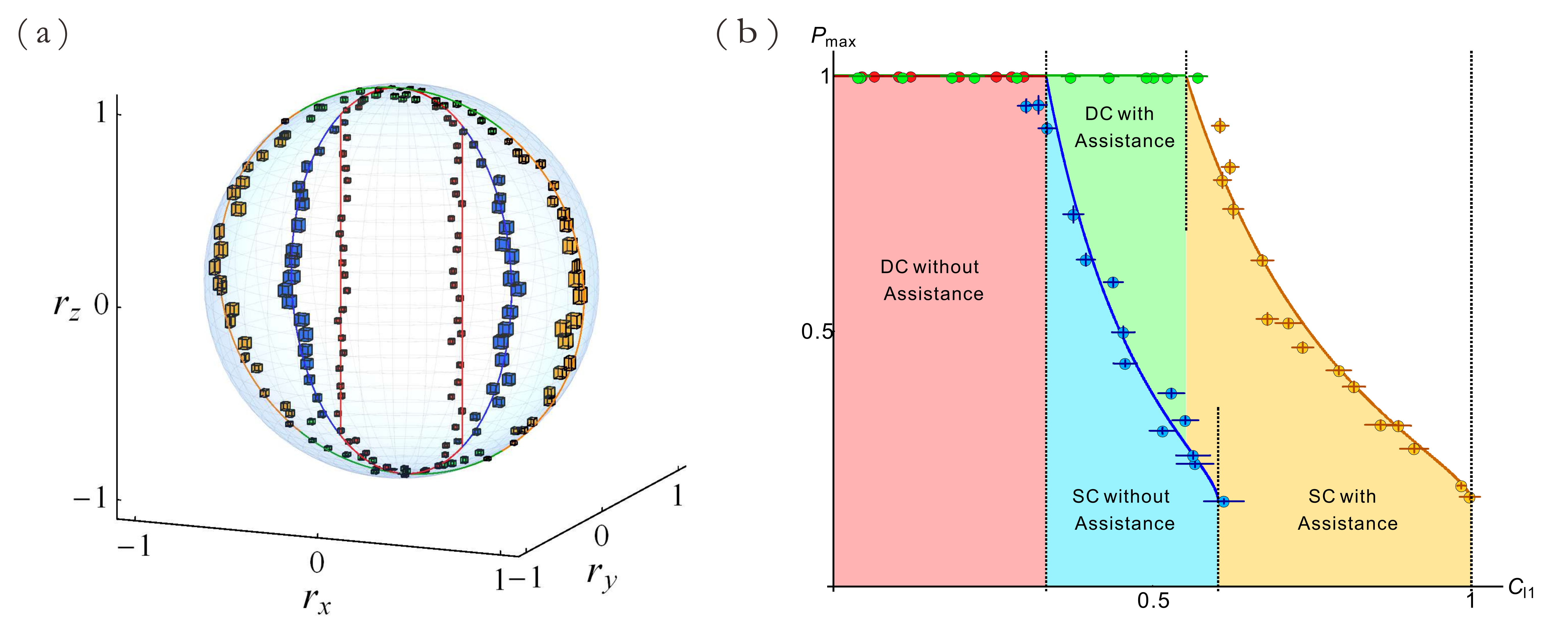}
	\caption{\label{fig:data1} \textbf{Experimental results for showing the capability of enlarging conversion boundaries via different protocols}. The local state of Bob is experimentally prepared as $\rho^B=\frac{1}{2}\left(\openone + \frac{1}{3}\sigma_x+\frac{5}{6}\sigma_z \right)$. In (a), we can see that the accessible states of Bob can be enlarged by using different conversion protocols, the red boundary can be achieved via DC without assistance, which shows the basic capability of local incoherent conversion. When we use SC, without assistance, we can make the conversion boundary larger, shown as blue. Combining the boundary of SC and DC, we obtain an ellipsoid in the Bloch space. With assistance from a pure source, we can enlarge the conversion boundary to the surface of the Bloch sphere. The boundaries of assisted conversion, both DC and SC, are shown as green and orange respectively. As the boundaries are rotationally invariant with respect to $z$, we focus on the $x$-$z$ plane by taking a round cross-section. In (b), the maximal success probabilities versus obtained $\ell_1$ norm coherence are plotted for these different protocols.}	
\end{figure*}

\section*{Methods}
{\bf{Bloch representation.}} Most of this work is concerned with qubits, which allows us to make frequent use of the Bloch representation stating that every qubit state $\rho$ can be represented by a subnormalized vector $\vec{r}=(r_x,r_y,r_z)$ through
$\rho=(\openone + \vec{r} \cdot \vec{\sigma})/2$,
where $\vec{\sigma}$ represents a vector containing the Pauli matrices.  As done above, we denote density operators by small Greek letters and their Bloch vectors by the respective small Latin letter.
Throughout the paper, we assume the eigenbasis of $\sigma_z$ to be incoherent. Then rotations about the $z$-axis of the Bloch sphere and their inverse are both in SIO and in IO, leading to an invariance of  conversion probabilities under these rotations. This makes it very convenient to introduce the quantity
\begin{equation}
r=\sqrt{r_x^2+r_y^2},
\end{equation}
corresponding to the distance of the state to the incoherent axis.\\

{\bf{Geometrical interpretation of Theorem ~\ref{thm:Main}.}} Theorem~\ref{thm:Main} has a convenient geometrical interpretation on the Bloch sphere: for fixed $\rho$, Eq.~(\ref{eq:firstThmMain}) defines an ellipsoid which is \emph{independent} of $p$ and Eq.~(\ref{eq:secondThmMain}) a cylinder which \emph{depends} on $p$. The states to which $\rho$ can be converted with probability $p$ lie inside their intersection. For $p \le 1-|r_z|$, the ellipsoid is entirely contained in the cylinder and Eq.~(\ref{eq:secondThmMain}) is automatically satisfied if Eq.~(\ref{eq:firstThmMain}) holds (see proof of Thm.~\ref{thm:Main}). Therefore, lowering the demanded probability of success below $1-|r_z|$ will not increase the set of reachable states. This implies that for mixed $\rho$, there is a discontinuity in the optimal conversion probability $P(\rho\!\rightarrow\!\sigma)$ and the states outside the ellipsoid cannot be achieved via stochastic IOs, even with arbitrary little probability. This also implies that the states outside the ellipsoid cannot be approximated, because no state in a neighborhood can be reached.\\

{\bf{Experimental setup.}} In Module (I) of (d), we experimentally prepare an entangled state 
\begin{equation}
\ket{\psi(\theta)}^{AB}=\cos2\theta\ket{00}+\sin2\theta\ket{11}
\end{equation}
via a \textit{spontaneous parametric down conversion} (SPDC) process, with arbitrary $\theta$, determined by the angle of the 404~nm H. The states 0 and 1 are encoded in the polarization degree. One of the two photons is sent to Alice for multiple uses; the other is sent to Bob for state preparation. 
In the case of (a), we implement state conversion without assistance, and $\theta$ is set to $0^\circ$. The second photon is used to initialize Bob's qubit in the desired state as in Eq.~(\ref{qubitmixstate}) by the combination of H$_{2,3}$, a PBS, and a QP. In this case, the first photon (Alice) is used as a trigger. 
In the case of (b), the first photon is sent to Alice, which will allow her to assist Bob. Two-qubit entangled states 
\begin{equation}
\ket{\psi}^{AB}=\sqrt{\mu_0}\ket{0}^A\ket{\beta_0}^B+\sqrt{\mu_1}\ket{1}^A\ket{\beta_1}^B
\end{equation}
can be prepared with Bob's local state being described in Eq.~(\ref{qubitmixstate}), where $\mu_{0, 1}$ and $\ket{\beta_{0, 1}}^B$ denote eigenvalues and eigenvectors of Bob's local state. 
In the case of (c), we can experimentally prepare mixed two qubit states consisting of a pure Bell state $\ket{\phi^+}^{AB}$ and a controllable amount of white noise $\frac{1}{4}\openone$. This is achieved by using unbalanced interference in Alice's arm. 
In Module (II) of (d), a class of SIOs are implemented on Bob's photons, by the combination of 6 Hs and 3 BDs (details in the Supplementary Information). In the case of (b,c), these operations can depend on the result of measurements made on Alice's qubit~\cite{wu2017experimentally,PRLresourceonversion}.
In Module (III) of (d), Bob can perform quantum state tomography~\cite{Qi13quantum} to identify the target states $\{p_i, \rho_i\}$ in his hand. A detailed description of the experimental setup can be found in the Supplementary Information.

We acknowledge useful discussions with Dario Egloff, Swapan Rana, Christine Silberhorn, and Jan Sperling. K.-D.W., G.-Y.X., C.-F.L., and G.-C.G. acknowledge support from the National Nature Science Foundation of China (NSFC; 11574291 and 11774334), National Key R\&D Program (2016YFA0301700), and Anhui Initiative in Quantum Information Technologies. MBP and TT acknowledge support by the ERC Synergy Grant BioQ (grant no 319130) and the BMBF project Q.Link.X.
AS acknowledges financial support by the ''Quantum Optical Technologies'' project, carried out within the International Research Agendas programme of the Foundation for Polish Science co-financed by the European Union under the European Regional Development Fund, the National
Science Center in Poland (POLONEZ UMO-2016/21/P/ST2/04054), and the
European Union's Horizon 2020 research and innovation programme
under the Marie Sk\l{}odowska-Curie grant agreement No. 665778.

\bibliography{bibliography}

\clearpage
\appendix

\section*{Supplementary Information}
In this Supplementary Information, we provide an extended discussion of both the theory and the experiments presented in the main text. In the first section, we gather useful theoretical results, which we apply in the second section to prove our theoretical findings stated in the main text, partially extending them beyond qubits. The third section is dedicated to an extended discussion of the implications of these findings, focusing on asymptotic conversions and the irreversibly of coherence theory. The fourth section describes the experimental aspect in more detail, which include state preparation, implementation of the incoherent operations and tomography of the output states.

\section{Useful results}
As mentioned in the main text, to implement a stochastic incoherent operation, we formally postselect a deterministic incoherent operation according to the measurement outcomes $i$. Now assume we deal with a stochastic operation that can be decomposed into incoherent Kraus operators which are not necessarily complete, i.e., $\sum_{i}K_{i}^{\dagger}K_{i}\leq\openone$, and transforms a state $\rho$ into the state
$\rho\rightarrow\Lambda[\rho]/p$ with conversion probability $p=\mathrm{Tr}(\Lambda[\rho])$. If we want to call this operation incoherent, we have to ensure that it is part of a deterministic incoherent operation, otherwise we would simply disregard the nonfree part of the operation. 
That this is always possible has been shown in \cite{PhysRevLett.119.230401}. Therefore we call all stochastic operations that can be decomposed into incoherent Kraus operators incoherent as well. If we can implement a stochastic transformations from a state $\rho$ to a state $\sigma$ with probability $p$, we will write $\rho\rightarrow p\sigma$. 

As we will see in the following and announced in the main text, most of the analysis in this work
can be reduced to the mathematically simpler family of \emph{strictly incoherent operations}. As in the case of IO, a free completion is possible, which is the content of the following Proposition.

\begin{prop}\label{prop:compSIO} Every stochastic quantum operation that can be decomposed into strictly incoherent Kraus operators is part of a deterministic SIO.
\end{prop}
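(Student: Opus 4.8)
The plan is to exploit the rigid structure of strictly incoherent Kraus operators to show that a \emph{single} diagonal operator always completes the given incomplete family to a deterministic SIO, so that the general completion machinery of \cite{PhysRevLett.119.230401} is not even needed here.

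First I would record the normal form of a strictly incoherent Kraus operator. By definition both $K$ and $K^\dagger$ are incoherent, so $K\ket{m}$ is proportional to a single incoherent state for every $m$ (at most one nonzero entry per column in the incoherent basis) and likewise for $K^\dagger$ (at most one nonzero entry per row). Hence $K=\sum_{m\in S}c_m\ketbra{\pi(m)}{m}$ for some subset $S$ of the label set, some injective map $\pi$ on $S$, and complex coefficients $c_m$. The decisive consequence is that $K^\dagger K=\sum_{m\in S}|c_m|^2\,\ketbra{m}{m}$ is \emph{diagonal} in the incoherent basis, because distinct indices $m$ are sent to orthogonal kets $\ket{\pi(m)}$; this is precisely where strictness enters, since for a merely incoherent $K$ the operator $K^\dagger K$ can carry off-diagonal terms.

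Given the stochastic operation with strictly incoherent Kraus operators $\{K_i\}_i$ satisfying $\sum_i K_i^\dagger K_i\le\openone$, it then follows that $D:=\sum_i K_i^\dagger K_i$ is a diagonal operator with $0\le D\le\openone$. I would define the completing operator $K_0:=\sqrt{\openone-D}$. Since $\openone-D$ is diagonal and positive semidefinite, $K_0$ is diagonal and positive semidefinite, and every diagonal operator is strictly incoherent ($K_0\ket{m}\propto\ket{m}$ and $K_0^\dagger\ket{m}\propto\ket{m}$). By construction $K_0^\dagger K_0+\sum_i K_i^\dagger K_i=\openone$, so $\{K_0\}\cup\{K_i\}_i$ is a complete family of strictly incoherent Kraus operators and hence defines a deterministic SIO; postselecting on the outcomes $i\ge1$ of this deterministic SIO recovers the original stochastic operation, which proves the claim.

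Since the argument reduces to the single observation that $K^\dagger K$ is diagonal, I do not anticipate a genuine obstacle; the only step that warrants a carefully written line of justification is exactly that diagonality claim, which is also what makes the SIO completion so much simpler than the IO case (there one must actually split the residual operator $\openone-\sum_i K_i^\dagger K_i$ into several incoherent pieces, whereas here it is already a single strictly incoherent operator).
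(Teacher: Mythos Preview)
Your proof is correct and follows essentially the same route as the paper: both observe that $K_i^\dagger K_i$ is diagonal for strictly incoherent $K_i$, conclude that the deficit $\openone-\sum_i K_i^\dagger K_i$ is a nonnegative diagonal operator, and complete the family with its (diagonal, hence strictly incoherent) square root. The only cosmetic difference is that the paper parametrizes each $K_n$ via a bijection $f_n$ on the full index set (with some coefficients possibly zero), whereas you use an injection on a subset $S$; the resulting completion is identical.
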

\begin{proof}
	Strictly incoherent Kraus operators $K_n$ are of the form
	\begin{align}
	K_{n}=\sum_{i}c_{i,n}\ketbra{f_{n}(i)}{i}
	\end{align}
	where $f_{n}(i)$ is a bijective function on $\{1,...,d\}$. If they form a stochastic quantum operation, we have
	\begin{align}
	\sum_{n}K_{n}^{\dagger}K_{n}=\sum_{n,i}|c_{i,n}|^{2}\ketbra{i}{i}\le\openone.
	\end{align}
	Therefore $\sum_{n}|c_{i,n}|^{2}\le1\forall i$ and we can define
	\begin{align}
	\tilde{c_{i}}=\sqrt{1-\sum_{n}|c_{i,n}|^{2}}
	\end{align}
	and 
	\begin{align}
	\tilde{K}=\sum_{i}\tilde{c_{i}}\ketbra{i}{i},
	\end{align}
	which is a strictly incoherent Kraus operator and has the property 
	\begin{align}
	\tilde{K}^{\dagger}\tilde{K}+\sum_{n}K_{n}^{\dagger}K_{n}=\openone.
	\end{align}
\end{proof}

\noindent This allows to prove the following:
\begin{prop} \label{prop:MixingSIO}
	For two states $\rho$,
	$\sigma$ and a probability $p$ let there be a stochastic SIO achieving the transformation 
	\begin{equation}
	\rho\rightarrow p\sigma.
	\end{equation}
	Then, for every incoherent state $\tau$ and every $0\leq q\leq1-p$,
	there exists a stochastic SIO achieving the transformation 
	\begin{equation}
	\rho\rightarrow p\sigma+q\tau.
	\end{equation}
\end{prop}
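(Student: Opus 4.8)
The plan is to take the given stochastic SIO and \emph{enlarge} it by a small family of extra strictly incoherent Kraus operators whose only job is to divert a controlled fraction of the operation's ``unused budget'' onto the incoherent state $\tau$. Let $\{K_n\}$ be strictly incoherent Kraus operators realizing $\rho\rightarrow p\sigma$, so that $\sum_n K_n\rho K_n^\dagger=p\sigma$ and $\sum_n K_n^\dagger K_n\le\openone$. First I would introduce the leftover operator $M:=\openone-\sum_n K_n^\dagger K_n\ge 0$. Since each $K_n^\dagger K_n$ is diagonal in the incoherent basis — this is the one place where \emph{strict} incoherence (rather than plain incoherence) is used — the operator $M$ is diagonal too, $M=\sum_j m_j\ketbra{j}{j}$ with $m_j\ge 0$, and a one-line trace computation gives $\operatorname{Tr}(M\rho)=1-p$. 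This identity is the crux of the argument: it is precisely why the hypothesis $q\le 1-p$ is what one needs, and it records how much weight is available at each incoherent level $j$, namely $m_j\rho_{jj}$.

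Next, writing $\tau=\sum_k\tau_k\ketbra{k}{k}$ and setting $t:=q/(1-p)\in[0,1]$ (the degenerate case $p=1$, which forces $q=0$, being trivial), I would define the new Kraus operators $L_{jk}:=\sqrt{t\,m_j\,\tau_k}\,\ketbra{k}{j}$, each of which is manifestly strictly incoherent (both it and its adjoint are proportional to a single transition $\ketbra{k}{j}$). The verification then splits into two short checks. First, sub-completeness is preserved: using $\sum_k\tau_k=1$ one gets $\sum_{jk}L_{jk}^\dagger L_{jk}=t\sum_j m_j\ketbra{j}{j}=tM$, hence $\sum_n K_n^\dagger K_n+\sum_{jk}L_{jk}^\dagger L_{jk}=\openone-(1-t)M\le\openone$. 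Second, the new operators output exactly $q\tau$: since $L_{jk}\rho L_{jk}^\dagger=t\,m_j\,\tau_k\,\rho_{jj}\,\ketbra{k}{k}$, summing over $j,k$ and using the identity from the previous paragraph gives $\sum_{jk}L_{jk}\rho L_{jk}^\dagger=t\operatorname{Tr}(M\rho)\,\tau=t(1-p)\tau=q\tau$. Adding the original contribution $p\sigma$, the family $\{K_n\}\cup\{L_{jk}\}$ achieves $\rho\rightarrow p\sigma+q\tau$, and Proposition~\ref{prop:compSIO} guarantees this family extends to a deterministic SIO, so the transformation is genuinely realizable by a stochastic SIO.

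The only real difficulty is bookkeeping rather than anything conceptual: one must distribute the auxiliary operators over the leftover budget $M$ with weights that (a) never exceed what is available at each incoherent level and (b) nonetheless sum — after the $\rho_{jj}$-weighting — to the correct total $q$. The factorized ansatz $L_{jk}\propto\sqrt{m_j\tau_k}$ takes care of (a) automatically, because $\sum_k L_{jk}^\dagger L_{jk}\propto m_j\ketbra{j}{j}$ stays within $M$; and the single scalar $t=q/(1-p)$ combined with $\operatorname{Tr}(M\rho)=1-p$ takes care of (b). Boundary cases such as $\rho_{jj}=0$ or $m_j=0$ are harmless since those terms simply vanish. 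One could instead argue via the observation that any incoherent $\tau$ is preparable from $\rho$ with certainty by an SIO (measure in the incoherent basis, then relabel) and then try to ``interleave'' this with the given operation, but making that intuition into a valid sub-complete Kraus family amounts to essentially the computation above, so I would present the direct construction.
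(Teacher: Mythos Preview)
Your proof is correct, but it takes a different route from the paper's. The paper argues at a higher, compositional level: it invokes Proposition~\ref{prop:compSIO} to complete the given stochastic SIO to a deterministic one, observes that the ``failure'' branch outputs some state $\mu$ with probability $1-p$, then applies total dephasing to $\mu$ followed by an incoherent-to-incoherent SIO conversion $\Delta[\mu]\to\tau$, and finally notes that one can accept this branch only with sub-probability $q\le 1-p$. The key structural fact used is that concatenations of SIO Kraus operators are again SIO. Your approach instead bypasses the completion-and-postprocessing picture entirely and writes down the extra Kraus operators $L_{jk}=\sqrt{t\,m_j\,\tau_k}\,\ketbra{k}{j}$ in one shot. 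What your version buys is explicitness and self-containment: the reader sees the sub-completeness and output checks directly, and you isolate precisely where strict incoherence enters (diagonality of $K_n^\dagger K_n$, hence of $M$). What the paper's version buys is modularity: it makes clear that the result follows from three reusable facts (free completion, dephasing is SIO, incoherent-to-incoherent conversions are free), and would transfer verbatim to any resource theory with those properties. Amusingly, the alternative you sketch in your final paragraph \emph{is} essentially the paper's proof.
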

\begin{proof}
	The key idea in this proof is that the set of strictly incoherent Kraus operators is closed under concatenation. Therefore, the overall map that describes the application of a SIO on post-selected output states of another SIO is still in SIO. From Prop.~\ref{prop:compSIO} follows that we can always complete a stochastic SIO for free. The part completing the map has, with probability $1-p$, a state $\mu$ as an output. Applying total dephasing to $\mu$, we obtain an incoherent state $\mu'$, which we can transform into $\tau$ using SIO. In addition, we can do this only stochastically, which proves the Proposition. 
\end{proof} 
Now we are ready to show the equivalence of IO and SIO with respect to probabilistic qubit state transformations.
\begin{thm}\label{thm:equivSIOandIO}
	Let $\rho$ and $\sigma$ be states of a single qubit. The following
	statements are equivalent:\\
	(1) There exists an IO implementing $\rho\rightarrow p\sigma$.\\
	(2) There exists a SIO implementing $\rho\rightarrow p\sigma$.
\end{thm}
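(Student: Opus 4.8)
The plan is to prove the two implications separately. The direction $(2)\Rightarrow(1)$ is immediate: every strictly incoherent Kraus operator is in particular incoherent, so any SIO is an IO and nothing has to be done. All the content is in $(1)\Rightarrow(2)$, and the guiding idea is that on a single qubit the incoherent Kraus operators that \emph{fail} to be strictly incoherent are of a very restricted and harmless type — they automatically output an incoherent state — so their contribution can be stripped off and re-injected afterwards using the free-completion machinery of Propositions~\ref{prop:compSIO} and \ref{prop:MixingSIO}.

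Concretely, I would first unpack statement (1): an IO implementing $\rho\to p\sigma$ (with postselection on a subset of outcomes) is the same as a family of incoherent Kraus operators $\{K_i\}$ with $\sum_i K_i^\dagger K_i\le\openone$ and $\sum_i K_i\rho K_i^\dagger = p\sigma$, where the missing part is completable for free as recalled in the ``Useful results'' section. Then I would classify incoherent Kraus operators of a qubit. Since $K_i$ sends each incoherent basis state to a multiple of an incoherent basis state, $K_i$ is either (i) of the shape $c_0\ketbra{0}{0}+c_1\ketbra{1}{1}$ or $c_0\ketbra{1}{0}+c_1\ketbra{0}{1}$ (zero coefficients allowed), in which case $K_i^\dagger$ has the same shape and $K_i$ is strictly incoherent; or (ii) a rank-one operator $K_i=\ket{j_i}\bra{v_i}$ with $j_i\in\{0,1\}$ and $\ket{v_i}$ an arbitrary, possibly coherent, vector, which in general is not strictly incoherent. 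The crucial observation is that in case (ii) one has $K_i\rho K_i^\dagger=\bra{v_i}\rho\ket{v_i}\,\ketbra{j_i}{j_i}$, a subnormalized \emph{incoherent} state.

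Next I would split the index set as $I_1$ (type (i)) and $I_2$ (type (ii)) and set $\tau'=\sum_{i\in I_2}K_i\rho K_i^\dagger$, a subnormalized incoherent state with trace $q'=\mathrm{Tr}\,\tau'\le p$. Since $\{K_i\}_{i\in I_1}$ are strictly incoherent, Proposition~\ref{prop:compSIO} completes them to a deterministic SIO, so they realize the stochastic SIO transformation $\rho\to p\sigma-\tau'$; when $p>q'$ this reads $\rho\to(p-q')\tilde\sigma$ with $\tilde\sigma=(p\sigma-\tau')/(p-q')$ a genuine state (it is positive semidefinite with unit trace because $p\sigma-\tau'=\sum_{i\in I_1}K_i\rho K_i^\dagger\ge0$). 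Applying Proposition~\ref{prop:MixingSIO} with the incoherent state $\tau=\tau'/q'$ and admixture weight $q'$, which satisfies $q'\le 1-(p-q')$ precisely because $p\le1$, yields a SIO achieving $\rho\to(p-q')\tilde\sigma+q'\tau=p\sigma$, as required. The two degenerate cases are easy: if $q'=0$ then $I_1$ alone does the job, and if $p=q'$ then $p\sigma=\tau'$ forces $\sigma$ to be incoherent, and any incoherent qubit state is reached by a deterministic SIO (a measure-and-prepare map with Kraus operators $\sqrt{\sigma_j}\ketbra{j}{k}$), hence stochastically with any probability $p$ after rescaling the Kraus operators by $\sqrt{p}$.

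The only genuinely delicate step is the qubit classification of incoherent Kraus operators in the second paragraph: in higher dimensions a rank-two incoherent Kraus operator need not be strictly incoherent, and it is exactly the two-dimensional structure that forces every non-strict incoherent Kraus operator to be rank one with incoherent output. Once that is in place, the rest is routine bookkeeping of probabilities, with Propositions~\ref{prop:compSIO} and \ref{prop:MixingSIO} carrying the weight of re-injecting the incoherent remainder $\tau'$.
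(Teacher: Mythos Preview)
Your proposal is correct and follows essentially the same approach as the paper. The paper's own proof is a terse version of exactly this argument: it observes that every incoherent qubit Kraus operator is either strictly incoherent or has incoherent output regardless of input, keeps the strictly incoherent ones to define a stochastic SIO, and invokes Proposition~\ref{prop:MixingSIO} to absorb the remaining incoherent contribution; you simply spell out the classification, the probability bookkeeping, and the degenerate cases more explicitly.
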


\begin{proof}
	An incoherent Kraus operator $K$ is of the form
	\begin{align}
	K=\sum_i c(i) \ketbra{j(i)}{i},
	\end{align}
	and it is strictly incoherent if $j(i)$ is one-to-one~\cite{WinterPhysRevLett.116.120404}.
	Therefore all incoherent qubit Kraus operators are either also strictly incoherent or their output is, independent of the input, incoherent. Let us use the strictly incoherent ones to define a stochastic SIO. Then Prop.~\ref{prop:MixingSIO} finishes the proof. Note that this proof technique does not work in higher dimensions, since then, there exist $j(i)$ that have neither the same output for all $i$ (and have thus incoherent output), nor are they one-to-one.
\end{proof}	

Another result we will need later is stated in the following Lemma.
\begin{lem}
	\label{lem:2}For any correlated state $\rho^{AB}\neq\rho^{A}\otimes\rho^{B}$
	there exists a two-element POVM $\{M^{A},\openone^{A}-M^{A}\}$ on
	Alice's side such that $\mathrm{Tr}\left[M^{A}\rho^{AB}\right]>0$
	and 
	\begin{equation}
	\frac{\mathrm{Tr}_{A}\left[M^{A}\rho^{AB}\right]}{\mathrm{Tr}\left[M^{A}\rho^{AB}\right]}\neq\rho^{B}:=\mathrm{Tr}_A\left[\rho^{AB}\right].\label{eq:Lemma2}
	\end{equation}
\end{lem}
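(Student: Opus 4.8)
The plan is to prove the contrapositive: I will show that if \emph{every} POVM element $M^A$ on Alice's side that occurs with strictly positive probability leaves Bob's conditional state unchanged, then $\rho^{AB}=\rho^{A}\otimes\rho^{B}$, contradicting the hypothesis. So I assume that for all $M^A$ with $0\le M^A\le\openone^A$ and $\mathrm{Tr}[M^A\rho^{AB}]>0$ one has $\mathrm{Tr}_A[(M^A\otimes\openone^B)\rho^{AB}]=\mathrm{Tr}[M^A\rho^{AB}]\,\rho^B$, and aim to deduce $\rho^{AB}=\rho^{A}\otimes\rho^{B}$.

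First I would upgrade this identity from POVM elements to arbitrary Hermitian operators on Alice's system. Given a nonzero Hermitian $H$, pick $0<\varepsilon<1/(2\|H\|)$ so that $M^A_\varepsilon:=\tfrac12\openone^A+\varepsilon H$ is a genuine POVM element, $0<M^A_\varepsilon<\openone^A$, and still occurs with positive probability, $\mathrm{Tr}[M^A_\varepsilon\rho^{AB}]\ge\tfrac12-\varepsilon\|H\|>0$. Inserting $M^A_\varepsilon$ into the assumed identity and using linearity of the partial trace, the contributions of $\tfrac12\openone^A$ on the two sides coincide (both equal $\tfrac12\rho^B$) and cancel; dividing by $\varepsilon$ gives $\mathrm{Tr}_A[(H\otimes\openone^B)\rho^{AB}]=\mathrm{Tr}[H\rho^{AB}]\,\rho^B$. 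Since $H$ was an arbitrary nonzero Hermitian operator (and the identity is trivial for $H=0$), it holds for every Hermitian operator on Alice's system.

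Next I would expand $\rho^{AB}$ in an orthonormal Hermitian operator basis $\{G_\mu\}$ on Alice's system, normalized by $\mathrm{Tr}[G_\mu G_\nu]=\delta_{\mu\nu}$, writing $\rho^{AB}=\sum_\mu G_\mu\otimes C_\mu$ with $C_\mu:=\mathrm{Tr}_A[(G_\mu\otimes\openone^B)\rho^{AB}]$, which are Hermitian because $\rho^{AB}$ and the $G_\mu$ are. Applying the identity of the previous paragraph with $H=G_\mu$, together with the elementary relation $\mathrm{Tr}[(G_\mu\otimes\openone^B)\rho^{AB}]=\mathrm{Tr}_B[C_\mu]=\mathrm{Tr}[C_\mu]$, yields $C_\mu=\mathrm{Tr}[C_\mu]\,\rho^B$ for every $\mu$. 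Substituting back and recalling $\rho^A=\mathrm{Tr}_B[\rho^{AB}]=\sum_\mu\mathrm{Tr}[C_\mu]\,G_\mu$, I obtain
\[
\rho^{AB}=\sum_\mu G_\mu\otimes\big(\mathrm{Tr}[C_\mu]\,\rho^B\big)=\Big(\sum_\mu\mathrm{Tr}[C_\mu]\,G_\mu\Big)\otimes\rho^B=\rho^A\otimes\rho^B,
\]
the desired contradiction; hence a POVM element with the claimed property must exist.

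The argument is essentially routine operator algebra, and it needs no restriction on the dimensions of Alice's or Bob's systems, so the same proof covers qudits. The step I expect to require the most care is the perturbation: one must choose $M^A_\varepsilon$ so that it is simultaneously a legitimate POVM element \emph{and} occurs with strictly positive probability, so that the hypothesis genuinely applies to it; the remaining work is bookkeeping of the partial-trace identities $\mathrm{Tr}_A[(G_\mu\otimes\openone^B)\rho^{AB}]=C_\mu$ and $\mathrm{Tr}[(G_\mu\otimes\openone^B)\rho^{AB}]=\mathrm{Tr}[C_\mu]$, and of the basis expansion.
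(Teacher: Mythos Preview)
Your proof is correct, and it takes a genuinely different route from the paper's. The paper argues by a case distinction: fixing an eigenbasis $\{\ket{i}^B\}$ of $\rho^B$, it first treats states that are \emph{not} quantum-incoherent (i.e.\ not of the form $\sum_i p_i\rho_i^A\otimes\ketbra{i}{i}^B$), invoking an external result of Chitambar et al.\ to produce a POVM element whose post-measurement state on Bob is non-diagonal in $\{\ket{i}^B\}$; then, for quantum-incoherent but non-product states, it finds $\rho_i^A\neq\rho_j^A$ and a POVM element distinguishing them, which shifts the ratio of Bob's diagonal entries.

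Your argument is instead a clean contrapositive via linearization: you push the ``conditional state equals $\rho^B$'' hypothesis from POVM elements to all Hermitian $H$ by perturbing around $\tfrac12\openone^A$, then read off $\rho^{AB}=\rho^A\otimes\rho^B$ from an operator-basis expansion. This is self-contained (no external citation), insensitive to any coherence-theoretic structure, and works verbatim in arbitrary finite dimensions. The paper's approach, by contrast, is tailored to the surrounding context and leans on a result the reader must look up; its virtue is that the quantum-incoherent/non-quantum-incoherent dichotomy is exactly what is exploited later in the theorem that cites this lemma. One small cosmetic point: when you write $\mathrm{Tr}[H\rho^{AB}]$ you mean $\mathrm{Tr}[(H\otimes\openone^B)\rho^{AB}]=\mathrm{Tr}[H\rho^A]$; making that explicit would match the care you took elsewhere.
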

\begin{proof}
	In the following, let $\{\ket{i}^{B}\}$ be an eigenbasis of Bob's
	state $\rho^{B}$. Consider now states of the form 
	\begin{equation}
	\rho^{AB}=\sum_{i}p_{i}\rho_{i}^{A}\otimes\ket{i}\!\bra{i}^{B}.\label{eq:QI}
	\end{equation}
	Following the notion of~\cite{ChitambarPhysRevLett.116.070402},
	we call these states quantum-incoherent with respect to an eigenbasis
	of Bob. 
	
	If the state $\rho^{AB}$ shared by Alice and Bob is not of the form~(\ref{eq:QI}),
	there exists a POVM element $M^{A}$ such that $\mathrm{Tr}\left[M^{A}\rho^{AB}\right]>0$
	and $\mathrm{Tr}_{A}\left[M^{A}\rho^{AB}\right]$ is not diagonal
	in the basis $\{\ket{i}^{B}\}$, see Thm.~2 in \cite{ChitambarPhysRevLett.116.070402}.
	This proves the Lemma for states $\rho^{AB}$ which are not quantum-incoherent
	with respect to an eigenbasis of Bob.
	
	For completing the proof, let now $\rho^{AB}$ be of the form~(\ref{eq:QI}).
	If the state $\rho^{AB}$ is not a product state, the decomposition~(\ref{eq:QI})
	must have at least two different states $\rho_{i}^{A}\neq\rho_{j}^{A}$.
	Because these states are different, there necessarily exists a POVM
	element $M^{A}$ such that 
	\begin{equation}
	\mathrm{Tr}\left[M^{A}\rho_{i}^{A}\right]\neq\mathrm{Tr}\left[M^{A}\rho_{j}^{A}\right].\label{eq:Inequality}
	\end{equation}
	In the last step of the proof, note that Eq.~(\ref{eq:Inequality})
	implies the following:
	\begin{equation}	\frac{\braket{i}{\rho^{B}|i}}{\braket{j}{\rho^{B}|j}}\neq\frac{\braket{i}{\mathrm{Tr}_{A}\left[M^{A}\rho^{AB}\right]|i}}{\braket{j}{\mathrm{Tr}_{A}\left[M^{A}\rho^{AB}\right]|j}}.
	\end{equation}
	Thus, also in this case Eq.~(\ref{eq:Lemma2}) is fulfilled, and
	the proof of the Lemma is complete.
\end{proof}

\section{Technical proofs}

Here we give the missing proofs of the results in the main text and their partial generalizations beyond qubits. For readability, we restate the results.

\begin{thmMain}
	A qubit state $\sigma$ is reachable via a stochastic SIO or IO transformation from a fixed initial qubit state $\rho$ with a given probability $p$ iff
	\begin{subequations}
		\begin{align}
		&r^2 s_{z}^{2}+\left(1-r_{z}^{2}\right)s^2\le r^2, \label{eq:firstThmMain-2}\\
		&p^2s^2 \le \frac{r^2}{1+|r_z|}\left[2p-(1-|r_z|)\right]. \label{eq:secondThmMain-2}
		\end{align}
	\end{subequations}
\end{thmMain}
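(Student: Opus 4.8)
The plan is to work entirely with SIO: Theorem~\ref{thm:equivSIOandIO} reduces the IO case to the SIO case, so it suffices to characterize probabilistic SIO transformations of a qubit. I would first exploit the symmetries recorded in the Methods to normalize the problem. Since $z$-rotations and their inverses are both in SIO, and relabelling the incoherent basis (conjugation by $\sigma_x$) is a strictly incoherent unitary, one may assume without loss of generality that $\rho$ has Bloch vector $(r,0,r_z)$ with $r\ge 0$, $r_z\ge 0$, and that $\sigma$ has Bloch vector $(s,0,s_z)$ with $s\ge 0$. A general stochastic SIO is then a family of strictly incoherent qubit Kraus operators, each of which is either diagonal, $\mathrm{diag}(a_i,b_i)$, or anti-diagonal, i.e. $\sigma_x$ times a diagonal operator; setting $\alpha_i=|a_i|^2$, $\beta_i=|b_i|^2$, I record the subnormalisation condition $\sum_i K_i^\dagger K_i\le\openone$ as two scalar inequalities on these weights.

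For necessity of \eqref{eq:firstThmMain-2} I would introduce the functional $M(\tau)=(\tau_x^2+\tau_y^2)/(1-\tau_z^2)$ on qubit Bloch vectors and establish two facts: (i) $M$ is invariant under a single diagonal strictly incoherent Kraus operation — a short direct computation, using that for the postselected output one has $s_x^2+s_y^2=\alpha\beta r^2/p^2$ while $1-s_z^2=\alpha\beta(1-r_z^2)/p^2$ — and hence also under a single anti-diagonal one, since conjugation by $\sigma_x$ leaves $M$ unchanged; and (ii) the sublevel sets $\{M\le c\}$ are solid ellipsoids, so $M$ is quasi-convex. Because the postselected output of a stochastic SIO is a convex combination of single-Kraus-operator outputs, each of which lies in the ellipsoid $\{M\le M(\rho)\}$ (with equality generically), quasi-convexity gives $M(\sigma)\le M(\rho)$, which is exactly \eqref{eq:firstThmMain-2}; in particular any $\sigma$ violating it is unreachable with any positive probability.

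For necessity of \eqref{eq:secondThmMain-2}: writing $p\sigma=\sum_i K_i\rho K_i^\dagger$ and reading off matrix entries, the trace yields $2p=u(1+r_z)+v(1-r_z)$ with $u=\sum_{\mathrm{diag}}\alpha_i+\sum_{\mathrm{anti}}\beta_i$ and $v=\sum_{\mathrm{diag}}\beta_i+\sum_{\mathrm{anti}}\alpha_i$, while the triangle inequality applied to the single off-diagonal entry, followed by Cauchy--Schwarz, gives $p\,s\le r\sqrt{uv}$; the subnormalisation condition forces $u\le 1$ and $v\le 1$. Maximising $uv$ under the linear trace constraint with a box constraint active produces precisely the right-hand side of \eqref{eq:secondThmMain-2}. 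For sufficiency I would reverse this: given $p,s,s_z$ obeying both inequalities, pick one diagonal Kraus operator carrying $\rho$ to a suitable point of the boundary ellipsoid $\{M=M(\rho)\}$ with an appropriate partial success probability, then invoke Propositions~\ref{prop:compSIO} and \ref{prop:MixingSIO} to complete the operation for free and to mix in dephased (hence incoherent) states, so that the postselected output is exactly $\sigma$ with total probability $p$; one checks that feasibility of this construction is equivalent to \eqref{eq:firstThmMain-2}--\eqref{eq:secondThmMain-2}, treating separately the regime $p\le 1-|r_z|$ in which, as noted in the Methods, the cylinder contains the ellipsoid and \eqref{eq:secondThmMain-2} is already subsumed by \eqref{eq:firstThmMain-2}.

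The main obstacle I anticipate is the necessity of \eqref{eq:secondThmMain-2}: one must control an arbitrary, a priori unbounded, number of Kraus operators and show that no decomposition can beat the cylinder bound, which is where the Cauchy--Schwarz reduction and the small box-constrained optimisation must be arranged so as to be simultaneously valid and tight. A secondary difficulty is to make the sufficiency construction explicit enough that its parameter range matches the claimed region exactly, including the discontinuity phenomenon (the strictly positive minimal success probability for coherent targets from a mixed $\rho$).
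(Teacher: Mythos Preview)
Your necessity arguments are correct and considerably cleaner than the paper's. The paper proves Theorem~\ref{thm:Main} by reducing to a four--Kraus-operator normal form, eliminating the rank-one operators on the boundary, and then carrying out a lengthy trigonometric parametrisation and calculus optimisation. Your route---the invariant $M(\tau)=(\tau_x^2+\tau_y^2)/(1-\tau_z^2)$ (which is just $C_{\Delta,R}^2$) together with quasi-convexity for \eqref{eq:firstThmMain-2}, and the triangle/Cauchy--Schwarz reduction plus the small box-constrained maximisation of $uv$ for \eqref{eq:secondThmMain-2}---is more conceptual and handles arbitrarily many Kraus operators directly, without appealing to the four-operator normal form.

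The sufficiency construction, however, has a genuine gap. A single \emph{diagonal} Kraus operator followed by mixing with incoherent states via Proposition~\ref{prop:MixingSIO} does not reach the whole region described by \eqref{eq:firstThmMain-2}--\eqref{eq:secondThmMain-2}. Concretely, take $r_z>0$ and target $\sigma$ with $s_z=0$, $s=r$, $p=1$; both inequalities are satisfied (the second with equality). Any decomposition $p\sigma=p'\sigma'+q\tau$ with $\tau$ incoherent forces $p's'=ps=r$, and since $\sigma'$ lies on the ellipse one has $s'=r\sqrt{(1-s_z'^2)/(1-r_z^2)}$, hence $p'=\sqrt{(1-r_z^2)/(1-s_z'^2)}$. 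But for a single diagonal Kraus operator the constraints $\alpha,\beta\le 1$ give $p'\le(1-r_z)/(1-s_z')$ when $r_z>0$, and one checks that $\sqrt{(1-r_z^2)/(1-s_z'^2)}\le(1-r_z)/(1-s_z')$ forces $s_z'\ge r_z$, while positivity of $q\tau$ forces $p'(1+s_z')\le 1$, i.e.\ $s_z'\le r_z^2/(2-r_z^2)<r_z$. No admissible $s_z'$ exists. The point \emph{is} reachable, but it needs one diagonal and one anti-diagonal Kraus operator acting \emph{in the same instrument} (here $K_1=\tfrac{1}{\sqrt2}\openone$, $K_2=\tfrac{1}{\sqrt2}\sigma_x$), so that their off-diagonal contributions add coherently; incoherent mixing cannot substitute for this. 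The paper's explicit two-Kraus-operator construction (its $K_1,K_2$ after discarding $K_3,K_4$) is exactly what is missing from your scheme. Once you allow a diagonal/anti-diagonal pair, your Cauchy--Schwarz bound is saturated and sufficiency goes through.
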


\begin{proof}
	According to Thm.~\ref{thm:equivSIOandIO}, we can focus on SIO transformations.
	In order to implement a stochastic
	qubit state transformation, we need a quantum instrument
	with two possible outcomes, success and failure, modelled by $\mathcal{E}_{s}^{\text{SIO}}(\rho)$
	and $\mathcal{E}_{f}^{\text{SIO}}(\rho)$. In the case of SIO transformations, both $\mathcal{E}_{s}^{\text{SIO}}(\rho)$
	and $\mathcal{E}_{f}^{\text{SIO}}(\rho)$ have to be decomposable
	into SIO Kraus operators. Due to Prop.~\ref{prop:compSIO}, we can
	focus exclusively on $\mathcal{E}_{s}^{\text{SIO}}$. 
	According to \cite{PhysRevLett.119.140402}, every $\mathcal{E}_{s}^{\text{SIO}}$
	can be represented by four SIO Kraus operators 
	\begin{align}
	K_{1} & =\begin{pmatrix}a_{1} & 0\\
	0 & b_{1}
	\end{pmatrix},K_{2}=\begin{pmatrix}0 & b_{2}\\
	a_{2} & 0
	\end{pmatrix},\nonumber \\
	K_{3} & =\begin{pmatrix}a_{3} & 0\\
	0 & 0
	\end{pmatrix},K_{4}=\begin{pmatrix}0 & b_{3}\\
	0 & 0
	\end{pmatrix}.
	\end{align}
	Since overall phases of Kraus operators are physically irrelevant,
	we assume from here on $a_{i},b_{3}\ge0$. Defining $\vec{a}=(a_{1},a_{2},a_{3})$
	and $\vec{b}=(b_{1},b_{2},b_{3})$, the condition that $\mathcal{E}_{s}^{\text{SIO}}$
	is trace non-increasing is equivalent to $l_{a}^2:=|\vec{a}|^{2}\le1$
	and $l_{b}^2:=|\vec{b}|^{2}\le1$. Due to symmetries and as explained
	in \cite{PhysRevLett.119.140402}, we can restrict our analysis to
	the case $s_{y}=r_{y}=0$ and $s_{x},r_{x},s_{z},r_{z}\ge0$. More
	precisely, we assume $r_{x}>0$ from here on, since otherwise we have
	the trivial case of incoherent initial states. From 
	\begin{align}
	\mathcal{E}_{s}^{\text{SIO}}(\rho)=p\sigma
	\end{align}
	then follow the Equations
	\begin{align}
	& ps_{x}=r_{x}\left(a_{2}\Real(b_{2})+a_{1}\Real(b_{1})\right),\nonumber \\
	& 0=a_{2}\Imag(b_{2})-a_{1}\Imag(b_{1}),\nonumber \\
	& p(1+s_{z})=\left(a_{1}^{2}+a_{3}^{2}\right)(1+r_{z})+\left(|b_{2}|^{2}+b_{3}^{2}\right)(1-r_{z}),\nonumber \\
	& p(1-s_{z})=a_{2}^{2}(1+r_{z})+|b_{1}|^{2}(1-r_{z})
	\end{align}
	or equivalently 
	\begin{alignat}{3}
	& ps_{x} &  & = &  & \ r_{x}\left(a_{2}\Real(b_{2})+a_{1}\Real(b_{1})\right),\nonumber \\
	& 0 &  & = &  & a_{2}\Imag(b_{2})-a_{1}\Imag(b_{1}),\nonumber \\
	& 2p &  & = &  & \ l_{a}^2(1+r_{z})+l_{b}^2(1-r_{z}),\nonumber \\
	& 2ps_{z} &  & = &  & \ \left(a_{1}^{2}+a_{3}^{2}-a_{2}^{2}\right)(1+r_{z})\nonumber \\
	&  &  &  &  & +\left(|b_{2}|^{2}+b_{3}^{2}-|b_{1}|^{2}\right)(1-r_{z}).
	\end{alignat}
	The principal idea of our proof from here on is the following: For fixed $r_{x},r_{z},p$, we determine states $(s_{x},s_{z})$ on
	the boundary of the region which is achievable with stochastic SIO,
	i.e. the region for which the Equations above have a solution for
	suitable $\vec{a},\vec{b}$. Since the achievable region is convex
	and contains the free states (we can always mix incoherently
	with a free state), this will allow us to deduce the entire reachable region.

	Now assume that $(s_{x},s_{z})$ is on the boundary of the reachable region. Then one can choose $a_{3}=0$
	and $b_{3}=0$, since $K_{3}$ and $K_{4}$ destroy all coherence.
	Formally, this can be shown considering 
	\begin{align}
	\vec{a}' & =(\sqrt{a_{1}^{2}+a_{3}^{2}},a_{2},0),\nonumber \\
	\vec{b}' & =(|b_{1}|,\sqrt{|b_{2}|^{2}+b_{3}^{2}},0),
	\end{align}
	which lead to 
	\begin{align}
	ps_{x}'= & \ r_{x}\left(a_{2}'\Real(b_{2}')+a_{1}'\Real(b_{1}')\right)\nonumber \\
	= & \ r_{x}\left(a_{2}\sqrt{|b_{2}|^{2}+b_{3}^{2}}+\sqrt{a_{1}^{2}+a_{3}^{2}}\ |b_{1}|\right)\nonumber \\
	\ge & \ ps_{x},\nonumber \\
	2ps_{z}'= & \ 2ps_{z},\nonumber \\
	l_{a'}^2= & \ l_{a}^2,\nonumber \\
	l_{b'}^2= & \ l_{b}^2,\nonumber \\
	0= & a'_{2}\Imag(b'_{2})-a'_{1}\Imag(b'_{1}).
	\end{align}
	Remember that we consider fixed $r_{x},r_{z}$ and $p>0$. Thus $s_{x}'\ge s_{x}$
	and $s_{z}'=s_{z}$. This mixing argument with the free states excludes boundaries of
	the achievable region parallel to the x-axis. Therefore $s_{x}'>s_{x}$
	for $s_{z}'=s_{z}$ cannot happen if both $(s_{x},s_{z})$ and $(s_{x}',s_{z}')$
	lie on the boundary and we will assume from here on $a_{3}=b_{3}=0$
	and $b_{1},b_{2}\ge0$. This leads to the Equations 
	\begin{alignat}{3} \label{eq:reducedEqSxSzP}
	& ps_{x} &  & = &  & \ r_{x}\left(a_{2}b_{2}+a_{1}b_{1}\right),\nonumber \\
	& 2p &  & = &  & \ l_{a}^2(1+r_{z})+l_{b}^2(1-r_{z}),\nonumber \\
	& 2ps_{z} &  & = &  & \ \left(a_{1}^{2}-a_{2}^{2}\right)(1+r_{z})+\left(b_{2}^{2}-b_{1}^{2}\right)(1-r_{z}).
	\end{alignat}
	Next we notice that the second line in the above Equations defines
	an ellipse. Remembering that we excluded the trivial case of $r_{z}=1$
	by assuming $r_{x}>0$, we can therefore use the parametrization 
	\begin{align}
	l_{a}= & \sqrt{\frac{2p}{1+r_{z}}}\cos\left(t\right),\nonumber \\
	l_{b}= & \sqrt{\frac{2p}{1-r_{z}}}\sin\left(t\right).
	\end{align}
	Without loss of generality, we choose $0\le t\le\pi/2$
	and the condition $l_{a},l_{b}\le1$ leads to 
	\begin{align} \label{eq:constraintsT}
	\cos\left(t\right)\le & \sqrt{\frac{1+r_{z}}{2p}},\nonumber \\
	\sin\left(t\right)\le & \sqrt{\frac{1-r_{z}}{2p}},
	\end{align}
	which restricts the range of $t$ further.
	Next we substitute 
	\begin{align}
	a_{1} & =\sqrt{\frac{2p}{1+r_{z}}}\cos\left(t\right)\cos\left(\frac{\theta-\phi}{2}\right),\nonumber \\
	a_{2} & =\sqrt{\frac{2p}{1+r_{z}}}\cos\left(t\right)\sin\left(\frac{\theta-\phi}{2}\right),\nonumber \\
	b_{1} & =\sqrt{\frac{2p}{1-r_{z}}}\sin\left(t\right)\sin\left(\frac{\theta+\phi}{2}\right),\nonumber \\
	b_{2} & =\sqrt{\frac{2p}{1-r_{z}}}\sin\left(t\right)\cos\left(\frac{\theta+\phi}{2}\right),
	\end{align}
	which automatically satisfies the ellipse Equation. Since all left hand sides of these Equations are positive by assumptions, we can choose without loss of generality $0\le\theta\le\pi/2$ and $-\theta\le\phi\le\theta (\Leftrightarrow 0\le \frac{\theta-\phi}{2},\frac{\theta+\phi}{2}\le \frac{\pi}{2})$.  The remaining
	two Equations are then (since $p>0$) 
	\begin{align} \label{eq:sxAndszByAngles}
	s_{x}= & \frac{r_{x}\sin(2t)\sin(\theta)}{\sqrt{1-r_{z}^{2}}},\nonumber \\
	s_{z}= & \cos(2t)\sin(\theta)\sin(\phi)+\cos(\theta)\cos(\phi).
	\end{align}
	When we know for every reachable $s_{x}$ the largest possible $s_{z}$, we achieved our goal of determining the boundary of the reachable region. 
	Therefore we
	fix $s_{x}$ and and maximize $s_{z}$. 
	For fixed $s_{x}$, we obtain
	from the first Equation a relation between $t$ and $\theta$, 
	\begin{align}
	\sin(\theta(t))=\frac{\sqrt{1-r_{z}^{2}}s_{x}}{r_{x}\sin(2t)}.
	\end{align}
	Using $0\le\theta\le\pi/2$, we can rewrite the second Equation as
	\begin{align*}
	s_{z}(t,\phi)= & \cos(2t)\sin(\theta(t))\sin(\phi)+\sqrt{1-\sin^{2}(\theta(t))}\cos(\phi),
	\end{align*}
	which is maximal either on the boundary or for
	\begin{align*}
	0= & \frac{\partial s_{z}(t,\phi)}{\partial\phi}\\
	= & \sin(\theta(t))\cos(2t)\cos(\phi)-\sqrt{1-\sin^{2}(\theta(t))}\sin(\phi).
	\end{align*}
	Since we have $-\pi/2\le-\theta\le\phi\le\theta\le\pi/2$, this is equivalent to
	\begin{align}
	\phi=\arctan\left(\frac{\sin(\theta(t))\cos(2t)}{\sqrt{(1-\sin^{2}(\theta(t)))}}\right).
	\end{align}
	Using that $\arctan(x)$ is monotonically increasing in $x$, we find
	\begin{align}
	\phi&\ge\arctan\left(\frac{-\sin(\theta)}{\sqrt{1-\sin(\theta)^2}}\right)=-\theta, \nonumber \\ \phi&\le\arctan\left(\frac{\sin(\theta)}{\sqrt{1-\sin(\theta)^2}}\right)=\theta
	\end{align}
	and therefore $\phi$ inside the allowed region.
	Then the $s_{z}(t)$, the $s_{z}$ optimized over $\phi$, is independent
	of $t$ and given by 
	\begin{align}\label{eq:szMax}
	s_{z}(t)=\sqrt{1-\frac{\left(1-r_{z}^{2}\right)s_{x}^{2}}{r_{x}^{2}}}.
	\end{align}
	Note that the expression under the square root is, due to Eq.~(\ref{eq:sxAndszByAngles}), never negative.
	
	Now we need to check the boundaries. 
	To do this, we express $t$ in terms of $\theta$ and define $X=\sin^2(\theta)$ ( therefore $(1-r_{z}^{2})s^2_{x}/r^2_{x}\le X \le 1$, again from Eq.~(\ref{eq:sxAndszByAngles})). For the moment, we assume $\cos(2t(\theta))\ge0$. This leads to
	\begin{align}
	s_z^+(\phi=\theta,\theta)=&\cos(2 t(\theta)) \sin^2 (\theta)+\cos^2(\theta) \nonumber \\
	=&\sqrt{1-\frac{(1-r_z^2)s_x^2}{r_x^2 \sin^2(\theta)}} \sin^2 (\theta)+\cos^2(\theta) \nonumber \\
	=&1-X+\sqrt{1-\frac{(1-r_z^2)s_x^2}{r_x^2 X}}X \nonumber \\
	=&s_z(X).
	\end{align}
	Since
	\begin{align}
	0=\frac{\partial}{\partial X} \left( 1-X+\sqrt{1-y/X}\ X \right)
	\end{align}
	has for $y\ne0$ no solutions, $s_z(X)$ attains its extrema on the boundaries. The exact maximum on the boundary depends on $t$, but it is lower than the maximum of
	\begin{align}
	s_z^+(X=(1-r_{z}^{2})s^2_{x}/r^2_{x})=&1-\frac{(1-r_z^2)s_x^2}{r_x^2}, \nonumber \\
	s_z^+(X=1)=&\sqrt{1-\frac{(1-r_z^2)s_x^2}{r_x^2}}.
	\end{align}
	and thus smaller than the extrema inside the allowed region. In the case of $\cos(2t(\theta))\le0$, we have
	\begin{align}
	s_z^-(\phi=\theta,\theta)=&\cos(2 t(\theta)) \sin^2 (\theta)+\cos^2(\theta) \nonumber \\
	=&-\sqrt{1-\frac{(1-r_z^2)s_x^2}{r_x^2 \sin^2(\theta)}} \sin^2 (\theta)+\cos^2(\theta) \nonumber \\
	=&1-X-\sqrt{1-\frac{(1-r_z^2)s_x^2}{r_x^2 X}}X \nonumber \\
	\le&s_z^+(\phi=\theta,\theta).
	\end{align}
	For the boundary with $\phi=-\theta$, the above considerations are the same, with the roles of $\cos(2t(\theta))\ge0$ and $\cos(2t(\theta))\le0$ inverted. 
	We thus confirmed that the maximal $s_z$ for given $s_x$ is indeed given by Eq.~(\ref{eq:szMax}) and independent of $\theta$ and $t$. 
	
	In order to finish the proof, we need to determine the reachable range of $s_x$ which depends according to Eq.~(\ref{eq:sxAndszByAngles}) on $t$ and therefore through Eqs.~(\ref{eq:constraintsT}) on $r_z$ and $p$. By the convexity of the reachable region, it is again sufficient to find the maximal reachable $s_x$. This corresponds to finding the allowed $t$ closest to $\pi/4$ (see again Eq.~(\ref{eq:sxAndszByAngles})), for which we will consider different cases. The first case is that neither of the conditions in Eq.~(\ref{eq:constraintsT}) restricts $t$, which is equivalent to 
	\begin{align}
	p\le \frac{1-r_z}{2}
	\end{align}
	and therefore
	\begin{align}
	s_x \le \frac{r_x}{\sqrt{1-r_z^2}}.
	\end{align}
	If
	\begin{align}
	p\le \frac{1+r_z}{2},
	\end{align} the constraints are 
	\begin{align}
	0\le t\le \arcsin \left(\sqrt{\frac{1-r_z}{2p}} \right).
	\end{align}
	For $p<1-r_z$, the upper bound on $t$ is larger than $\pi/4$, and we find the same bounds on $s_x$ as in the first case. Using
	\begin{align}
	\sin\left(2 \arcsin x \right)= 2x \sqrt{1-x^2},
	\end{align} we find
	\begin{align}
	s_x \le \frac{r_x}{\sqrt{1+r_z}}\frac{1}{p} \sqrt{2p-(1-r_z)}
	\end{align} otherwise. In the last case, for
	\begin{align}
	p\ge\frac{1+r_z}{2},
	\end{align}
	we have a lower and an upper bound on $t$,
	\begin{align}
	\arccos\left(\sqrt{\frac{1+r_z}{2 p}}\right)\le t\le \arcsin \left(\sqrt{\frac{1-r_z}{2p}} \right).
	\end{align}
	From Eq.~(\ref{eq:reducedEqSxSzP}), we see that the lower bound is always smaller than the upper. In addition, 
	\begin{align}
	\arccos\left(\sqrt{\frac{1+r_z}{2 p}}\right) \le \arccos\left(\frac{1}{\sqrt{2}}\right)=\frac{\pi}{4}.
	\end{align}
	Therefore, we end up with the same conclusions as in the second case.
	
	Finally, using the symmetry
	and mixing arguments, the reachable region is defined by the inequalities 
	\begin{align}
	&s_{z}^{2}\le1-\frac{1-r_{z}^{2}}{r_{x}^{2}+r_{y}^{2}}\left(s_{x}^{2}+s_{y}^{2}\right) \nonumber \\
	&\begin{cases}
	p< 1-|r_z|: &s_x^2+s_y^2 \le \frac{r_x^2+r_y^2}{1-r_z^2}  \\
	p\ge 1-|r_z|:&s_x^2+s_y^2 \le \frac{r_x^2+r_y^2}{1+|r_z|} \frac{1}{p^2}\left(2p-(1-|r_z|)\right)  \\
	\end{cases}
	\end{align}
	Rearranging the terms in the above Equations and using the short hand notations leads to 
	\begin{align} 
	&r^2 s_{z}^{2}+\left(1-r_{z}^{2}\right)s^2 \le r^2, \label{eqA:ellipse} \\
	&\begin{cases}
	p< 1-|r_z|: & \left(1-r_z^2\right)s^2 \le r^2,  \\
	p\ge 1-|r_z|:&p^2s^2 \le \frac{r^2}{1+|r_z|}\left(2p-(1-|r_z|)\right),  \\
	\end{cases}
	\end{align}
	formally also including the trivial cases of $r_{x}=r_{y}=0$. Now one can easily see that the condition for $p\le 1-|r_z|$ is always satisfied if condition (\ref{eqA:ellipse}) is satisfied.
	If we insert $p=1-|r_z|$ into the condition for $p\ge 1-|r_z|$, we obtain after simplifications
	\begin{align}
	(1-r_z^2) s^2\le r^2,
	\end{align}
	which is also always satisfied if condition~(\ref{eqA:ellipse}) is satisfied. Therefore the condition 
	\begin{align}
	p^2s^2 \le \frac{r^2}{1+|r_z|}\left(2p-(1-|r_z|)\right)
	\end{align}
	is for $p\le 1-|r_z|$ automatically satisfied, if condition~(\ref{eqA:ellipse}) holds.
	This leads us to the Theorem.
\end{proof}

As stated in the main text, these results allow us to evaluate the optimal conversion probability $P\left(\rho\rightarrow\sigma\right)$ via IO and SIO for any two states $\rho$ and $\sigma$ of a single qubit. We prove this in form of a Corollary.

\begin{cor}\label{cor:MaxProb}
	The maximal probability $P\left(\rho\rightarrow\sigma\right)$ 
	for a successful transformation from a coherent qubit state $\rho$
	to a coherent qubit state $\sigma$ using IO or SIO is zero if 
	\begin{align}\label{eq:CorProbZero}
	r^2 s_{z}^{2}+\left(1-r_{z}^{2}\right)s^2>r^2
	\end{align}
	and 
	\begin{align}\label{eq:CorProbability}
	\min\left\{ \frac{r^2}{\left(1+|r_z|\right)s^2}\left(1+\sqrt{1-\frac{s^2\left(1-r_z^2\right)}{r^2}}\right),1\right\} 
	\end{align}
	otherwise. 
\end{cor}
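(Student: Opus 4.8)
The plan is to extract $P(\rho\to\sigma)$ directly from Theorem~\ref{thm:Main}. By definition $P(\rho\to\sigma)$ is the supremum of those $p\in(0,1]$ for which both \eqref{eq:firstThmMain-2} and \eqref{eq:secondThmMain-2} admit a solution, with this supremum understood as $0$ when no such $p$ exists. The first case is then immediate: condition~\eqref{eq:firstThmMain-2} does not involve $p$, so if $r^2 s_z^2+(1-r_z^2)s^2>r^2$ it fails for every $p$, the feasible set is empty, and $P(\rho\to\sigma)=0$ (this is the discontinuity already anticipated in the Methods Section).

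Assume now that \eqref{eq:firstThmMain-2} holds, so that only \eqref{eq:secondThmMain-2} remains. I would rewrite it as $f(p)\le 0$ with $f(p):=s^2 p^2-\frac{2r^2}{1+|r_z|}\,p+\frac{r^2(1-|r_z|)}{1+|r_z|}$. Since $\sigma$ is coherent, $s^2>0$, so $f$ is an upward parabola and $\{p:f(p)\le0\}=[p_-,p_+]$ with
\[
p_\pm=\frac{r^2}{(1+|r_z|)s^2}\left(1\pm\sqrt{1-\frac{s^2(1-r_z^2)}{r^2}}\right).
\]
The place where \eqref{eq:firstThmMain-2} enters is in making $p_\pm$ real: it gives $(1-r_z^2)s^2\le r^2(1-s_z^2)\le r^2$, hence the discriminant $1-s^2(1-r_z^2)/r^2$ is nonnegative.

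It then remains to intersect $[p_-,p_+]$ with $(0,1]$ and read off the supremum, which is attained since the constraint set for $p$ is closed. Here $p_+>0$ because $r>0$, and to see that the feasible $p$-interval is non-empty --- so that the answer is genuinely $\min\{p_+,1\}$ rather than being forced down to $0$ --- I would test $p=1-|r_z|$ (which is positive, as $|r_z|<1$ for coherent $\rho$): a short computation gives $f(1-|r_z|)=\frac{1-|r_z|}{1+|r_z|}\bigl(s^2(1-r_z^2)-r^2\bigr)\le 0$, again by the above consequence of \eqref{eq:firstThmMain-2}. Hence $1-|r_z|\in[p_-,p_+]$, so $p_-\le 1-|r_z|\le 1$, and therefore $[p_-,p_+]\cap(0,1]$ is a non-empty set with supremum $\min\{p_+,1\}$. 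Substituting the expression for $p_+$ gives exactly~\eqref{eq:CorProbability}. There is no deep obstacle here: the whole statement is ``solve a quadratic in $p$ subject to $0<p\le1$'', and the only points requiring care are the two invocations of \eqref{eq:firstThmMain-2}, one to keep the square root well defined and one to guarantee that the admissible range of $p$ is non-empty.
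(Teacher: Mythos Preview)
Your proof is correct and follows the same route as the paper: both reduce the problem to solving the quadratic inequality~\eqref{eq:secondThmMain-2} in $p$ under the assumption that~\eqref{eq:firstThmMain-2} holds, and then read off $\min\{p_+,1\}$. Your version is in fact more carefully argued than the paper's, which simply takes the larger root of the equality case and appeals to $p_{\max}$ being a probability; you additionally verify explicitly via the test point $p=1-|r_z|$ that the feasible interval $[p_-,p_+]$ meets $(0,1]$, and that the discriminant is nonnegative thanks to~\eqref{eq:firstThmMain-2}.
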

\begin{proof} 
	From Thm.~\ref{thm:Main} and its discussion in the Methods Section, we get that a transformation
	from $\rho$ to $\sigma$ (with $\rho$ coherent, i.e. $r>0$
	and therefore $r_{z}^{2}<1$) is possible with probability $p>0$
	iff 
	\begin{align}
	r^2 s_{z}^{2}+\left(1-r_{z}^{2}\right)s^2\le r^2.
	\end{align}
	As soon as we are inside this ellipsoid, the maximal probability of
	success is bounded by Eq.~(\ref{eq:secondThmMain-2}). Now we want to maximize $p$ such that this inequality is still satisfied. This is the case if we choose the larger $p$ for which 
	\begin{align}
	p^2s^2 = \frac{r^2}{1+|r_z|}\left(2p-(1-|r_z|)\right).
	\end{align}
	Together with the assumptions that $p_{\max}$ is
	a probability, this finishes the proof. 
\end{proof}

As announced in the main text, we show now how these results can be partially extended beyond qubits.  
Denoting by $C$ any coherence measure with the properties defined in~\cite{BaumgratzPhysRevLett.113.140401}, it holds that~\cite{Du:2015:CMO:2871378.2871381}
\begin{align}\label{eq:boundsTrafo}
P\left(\rho\rightarrow\sigma\right)  \le \frac{C(\rho)}{C(\sigma)},
\end{align}
which we prove now for completeness.
Every stochastic coherence transformation from $\rho$ to $\sigma$ can be described by an incoherent quantum instrument with two possible outcomes, success and failure. We denote by $K_n$ the incoherent Kraus operators modelling the case of success and by $L_m$ the ones describing the event of failure. With
\begin{align}
p_n=&\tr\left(K_n \rho K_n^\dagger\right), \nonumber \\
q_m=&\tr\left(L_m \rho L_m^\dagger \right), \nonumber \\
\sigma_n=&K_n \rho K_n^\dagger/p_n, \nonumber \\
\chi_m=&L_m \rho L_m^\dagger/q_m, \nonumber \\
P\left(\rho\rightarrow\sigma\right)=&\sum_n p_n, \nonumber \\
q=&\sum_m q_m, 
\end{align}
we first use property (C2b), then (C3) and finally (C1) defined in \cite{BaumgratzPhysRevLett.113.140401} to arrive at
\begin{align}
C(\rho)\ge& \sum_n p_n C(\sigma_n)+ \sum_m q_m C(\chi_m) \nonumber \\
=&P\left(\rho\rightarrow\sigma\right) \sum_n \frac{p_n}{P\left(\rho\rightarrow\sigma\right)} C(\sigma_n)+ q\sum_m \frac{q_m}{q} C(\chi_m) \nonumber \\
\ge&P\left(\rho\rightarrow\sigma\right) C\left(\sum_n \frac{p_n}{P\left(\rho\rightarrow\sigma\right)} \sigma_n\right)+ qC\left(\sum_m \frac{q_m}{q} \chi_m\right) \nonumber \\
\ge&P\left(\rho\rightarrow\sigma\right) C\left(\sigma\right).
\end{align}

Note that the bounds given in Eq.~(\ref{eq:boundsTrafo}) cannot be used to exclude the existence of a stochastic transformation from $\rho$ to $\sigma$ (unless we have the trivial cases $C(\sigma)=\infty$ or $\rho$ incoherent and $\sigma$ not). However, the first condition in Cor.~\ref{cor:MaxProb} is a (nontrivial) necessary condition for the existence of a stochastic transformation. In the case of SIO, we can generalize this necessary condition to arbitrary dimensions using the $\Delta$~robustness of coherence $C_{\Delta,R}$ introduced in~\cite{ChitambarPhysRevLett.117.030401,ChitambarPhysRevA.94.052336} by
\begin{align}
C_{\Delta,R}(\rho)=&\min \Bigg\{ t\ge0\Bigg|\frac{\rho+t\tau}{1+t}\in \mathcal{I},\tau\ge0,\Delta\rho=\Delta \tau \Bigg\},
\end{align}
where $\mathcal{I}$ denotes the set of incoherent states.
\begin{thm}\label{thm:necessaryStochSIO}
	A necessary condition for the existence of a stochastic SIO transformation from $\rho$ to $\sigma$ is
	\begin{equation}\label{eq:necessarySIO}
	C_{\Delta,R}(\sigma)\le C_{\Delta,R}(\rho).
	\end{equation} 
\end{thm}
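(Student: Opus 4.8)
The plan is to show that a stochastic SIO turns any feasible point in the definition of $C_{\Delta,R}(\rho)$ into a feasible point for $C_{\Delta,R}(\sigma)$ carrying the same weight. Write the transformation as $\sum_n K_n\rho K_n^\dagger = p\sigma$ with strictly incoherent Kraus operators $K_n$, $\sum_n K_n^\dagger K_n\le\openone$, and $p=\mathrm{Tr}\big(\sum_n K_n\rho K_n^\dagger\big)>0$, and denote the success branch by $\mathcal{E}(X)=\sum_n K_n X K_n^\dagger$. First I would record two structural facts, both immediate from the form $K_n=\sum_i c_{i,n}\ketbra{f_n(i)}{i}$ with $f_n$ a permutation of $\{1,\dots,d\}$ (cf.\ Prop.~\ref{prop:compSIO}): (i) the success branch commutes with total dephasing, $\Delta\circ\mathcal{E}=\mathcal{E}\circ\Delta$, which follows because $\Delta(K_n X K_n^\dagger)=\sum_i|c_{i,n}|^2 X_{ii}\,\ketbra{f_n(i)}{f_n(i)}=K_n(\Delta X)K_n^\dagger$ since $f_n$ is injective; and (ii) $E:=\sum_n K_n^\dagger K_n=\sum_i\big(\sum_n|c_{i,n}|^2\big)\ketbra{i}{i}$ is diagonal, whence $\mathrm{Tr}(\mathcal{E}(X))=\mathrm{Tr}(EX)=\mathrm{Tr}(E\,\Delta X)=\mathrm{Tr}(\mathcal{E}(\Delta X))$ for every Hermitian $X$; in particular $\mathrm{Tr}(\mathcal{E}(X))$ depends on $X$ only through $\Delta X$.

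Next I would take an optimal point for $C_{\Delta,R}(\rho)$: let $t=C_{\Delta,R}(\rho)$ (finite, otherwise there is nothing to prove), $\tau\ge0$, $\Delta\tau=\Delta\rho$, and $\omega:=(\rho+t\tau)/(1+t)\in\mathcal{I}$, noting that $\Delta\omega=\Delta\rho$. Applying $\mathcal{E}$ to $(1+t)\omega=\rho+t\tau$ gives $(1+t)\mathcal{E}(\omega)=p\sigma+t\mathcal{E}(\tau)$. By fact (ii), $\mathrm{Tr}(\mathcal{E}(\tau))=\mathrm{Tr}(\mathcal{E}(\Delta\rho))=\mathrm{Tr}(\mathcal{E}(\rho))=p$ and likewise $\mathrm{Tr}(\mathcal{E}(\omega))=p$, so that $\tau':=\mathcal{E}(\tau)/p\ge0$ and $\omega':=\mathcal{E}(\omega)/p$ are states; moreover $\omega'\in\mathcal{I}$, because $\Delta(\mathcal{E}(\omega))=\mathcal{E}(\Delta\omega)=\mathcal{E}(\omega)$ by fact (i) together with $\omega\in\mathcal{I}$. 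Dividing the displayed identity by $p$ yields $(\sigma+t\tau')/(1+t)=\omega'\in\mathcal{I}$, and by fact (i) again, $\Delta\tau'=\mathcal{E}(\Delta\tau)/p=\mathcal{E}(\Delta\rho)/p=\Delta(\mathcal{E}(\rho))/p=\Delta\sigma$. Thus $\tau'$ is feasible in the definition of $C_{\Delta,R}(\sigma)$ with weight $t$, which gives $C_{\Delta,R}(\sigma)\le t=C_{\Delta,R}(\rho)$, as claimed.

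The only genuine content sits in the two structural facts of the first step; everything afterwards is bookkeeping with normalizations, and the single point needing care is that $p>0$ (which holds for any genuine stochastic transformation) so that $\mathcal{E}(\tau)/p$ and $\mathcal{E}(\omega)/p$ are well defined. I do not anticipate any serious obstacle: the argument is the SIO analogue of the standard monotonicity proof for robustness-type measures, with the dephasing constraint propagated through the transformation by the commutation relation $\Delta\circ\mathcal{E}=\mathcal{E}\circ\Delta$.
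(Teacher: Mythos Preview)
Your proof is correct. Both your argument and the paper's rest on the same key structural fact, namely that a strictly incoherent map commutes with dephasing, $\Delta\circ\mathcal{E}=\mathcal{E}\circ\Delta$. Where you differ is in packaging: you work directly with the original definition of $C_{\Delta,R}$ and push an optimal feasible pair $(t,\tau)$ through $\mathcal{E}$, using the additional observation that $E=\sum_n K_n^\dagger K_n$ is diagonal to control the normalizations and verify $\Delta\tau'=\Delta\sigma$. The paper instead passes through the auxiliary quantity $s_\tau=\max\{s:\Delta\tau+s(\Delta\tau-\tau)\ge0\}$, shows $s_\sigma\ge s_\rho$ by applying $\Lambda$ to the extremal $\tilde{\rho}=\Delta\rho+s_\rho(\Delta\rho-\rho)$, and only afterwards identifies $C_{\Delta,R}=1/s_\tau$ via the simplified form $C_{\Delta,R}(\rho)=\min\{t\ge0:\rho\le(1+t)\Delta\rho\}$. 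Your route is slightly more economical in that it does not require first deriving this simplified form; the paper's detour, on the other hand, yields that simplified (semidefinite-program) characterization of $C_{\Delta,R}$ as a by-product. The two arguments are in fact dual to one another: with $t=1/s_\rho$, the paper's $\tilde{\rho}$ coincides with the optimal $\tau$ in your construction.
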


\begin{proof}
	Assume that there exists a SIO transformation which maps $\rho$ to $\sigma$ with probability $p\ne0$. Due to Prop.~\ref{prop:compSIO}, we can restrict ourselves to the case of successful transformations which we write as,
	\begin{align}
	\Lambda [\rho] = p\sigma.
	\end{align}
	From here on, we will suppress unnecessary brackets for readability and write for example $\Lambda\rho$ instead of $\Lambda [\rho]$.
	Since $\Lambda$ can be decomposed into strictly incoherent Kraus operators, we have $\Delta \Lambda=\Lambda \Delta$ and therefore
	\begin{align}
	\Lambda \Delta \rho =\Delta \Lambda \rho =p\Delta \sigma.
	\end{align} 
	Defining 
	\begin{align}
	s_\tau = \max \{s| \Delta \tau +s ( \Delta \tau-  \tau) \ge 0\},
	\end{align}
	ensures that 
	\begin{align}
	\tilde{\rho}=\Delta \rho +s_\rho (\Delta\rho- \rho)
	\end{align}
	is a valid density operator with the property
	\begin{align}
	\Lambda \tilde{\rho}= p (\Delta \sigma +s_\rho(\Delta\sigma- \sigma))=:p\tilde{\sigma},
	\end{align}
	where $\tilde{\sigma}$ is a valid density operator too. This implies that
	\begin{align}\label{eq:necSIOTrafo}
	s_\sigma\ge s_\rho
	\end{align} 
	is a necessary condition for the existence of a stochastic transformation from $\rho$ to $\sigma$ using SIO. From the definition of $s_\tau$ follows directly that this quantity can be calculated using a semidefinite program. In addition, we will show now how Eq.~(\ref{eq:necSIOTrafo}) can be reformulated in terms of the $\Delta$~robustness of coherence $C_{\Delta,R}$~\cite{ChitambarPhysRevLett.117.030401,ChitambarPhysRevA.94.052336}
	\begin{align}
	C_{\Delta,R}(\rho)=&\min \Bigg\{ t\ge0\Bigg|\frac{\rho+t\tau}{1+t}\in \mathcal{I},\tau\ge0,\Delta\rho=\Delta \tau \Bigg\}.
	\end{align}
	Because $\Delta \rho=\Delta \tau$, this is equivalent to
	\begin{align*}
	C_{\Delta,R}(\rho)=&\min \Big\{ t\ge0\Big|\rho=(1+t)\Delta \rho-t\tau,\tau\ge0,\Delta\rho=\Delta \tau \Big\}.
	\end{align*}
	Using the same technique as in~\cite{ProofRobustness}, we can further simplify this expression to
	\begin{align} \label{eq:deltaRobSimp}
	C_{\Delta,R}(\rho)=&\min \Big\{ t\ge0\Big|\rho\le(1+t)\Delta \rho \Big\}.
	\end{align}
	To prove this, we first note that for $t,\tau\ge0$, 
	\begin{align}
	\rho=(1+t)\Delta \rho-t\tau
	\end{align}
	implies $\rho \le (1+t)\Delta \rho$. To show the converse, assume that $\rho \le (1+t)\Delta \rho$. Then we can define $\tau:=\left[(1+t)\Delta\rho-\rho\right]/t$ and it is easy to check that $\Delta \tau =\Delta \rho,\ \rho=(1+t)\Delta \rho-t\tau$ and $\tau\ge0$.
	Substituting $s$ by $1/t$, we find
	\begin{align}
	s_\tau =& \max \{s| \Delta \tau +s ( \Delta\tau-  \tau) \ge 0\} \nonumber \\
	=& \max \{s\ge0| \Delta \tau +s ( \Delta \tau- \tau) \ge 0\} \nonumber \\
	=& \max \{1/t| \Delta \tau + (\Delta \tau-  \tau)/t \ge 0,t\ge 0\} \nonumber \\
	=& \max \{1/t| \tau\le(1+t)\Delta \tau,t\ge 0\}.
	\end{align}
	A comparison with Eq.~(\ref{eq:deltaRobSimp}) shows that $C_{\Delta,R}(\tau)=1/s_\tau$, and therefore 
	Eq.~(\ref{eq:necSIOTrafo}) is equivalent to 
	\begin{align}
	C_{\Delta,R}(\sigma)\le C_{\Delta,R}(\rho).
	\end{align}
\end{proof}

As shown in~\cite{ChitambarPhysRevA.94.052336}, for the case of qubits, Eq.~(\ref{eq:necessarySIO}) is equivalent to conditions~(\ref{eq:firstThmMain-2}) and~(\ref{eq:CorProbZero}) and for higher dimensions, $C_{\Delta,R}$ can be evaluated efficiently using a semidefinite program (see the above proof and also~\cite{ProofRobustness}). The other necessary condition for stochastic transformations on qubits was that the initial state is not incoherent. For higher dimensions, this can be generalized by the statement that the coherence rank or number~\cite{Nathan,WinterPhysRevLett.116.120404,Chin1,Chin2} can only decrease under a stochastic IO (and therefore SIO) transformation, which we show now for completeness.

The coherence rank $r_C$ of pure states is defined as the number of non-zero coefficients needed to expand the state in the incoherent basis~\cite{Nathan,WinterPhysRevLett.116.120404}. For mixed states, the coherence rank is defined by~\cite{Chin1}
\begin{align}
r_C(\rho)=	\min\Big\{\max_i r_C(\ket{\psi_i}) \Big| \rho=\sum_i p_i \ketbra{\psi_i}{\psi_i}, p_i\ge0 \Big\}.
\end{align}
It is well known that the coherence rank of a pure state can only decrease under the action of an incoherent Kraus operator~\cite{WinterPhysRevLett.116.120404}. From this follows the statement directly: Let $\{p_i,\ket{\psi_i}\}$ be an optimal decomposition of $\rho$ in the sense that $r_C(\rho)=\max_i r_C(\ket{\psi_i})$. Applying the Kraus operators of the stochastic incoherent operation to the $\ket{\psi_i}$ leads to a decomposition of the final state with the promised property.

Now we turn to the proofs of our results concerning optimal conversion with assistance. We begin with the poof of Theorem 2 from the main text.

\begin{thmMain}
	\label{thm:PurificationAssistance}
	Let Alice and Bob share a pure two-qubit state  $\ket{\psi}^{AB}$ and denote Bob's local state by $\rho^B$. The maximal probability  $P_\mathrm{a}(\ket{\psi}^{AB}\rightarrow\sigma^{B})$ to prepare the qubit state $\sigma^B$ on Bob's side via one-way LQICC is given by
	
	\begin{equation}
	P_\mathrm{a}\left(\ket{\psi}^{AB}\rightarrow\sigma^{B}\right)=
	\min\left\{ 1,\left(1-\left|r_z\right|\right)\frac{1+\sqrt{1-s^2}}{s^2}\right\},\label{eq:PurificationAssistance-2}
	\end{equation}
	where $\vec{r}$ and $\vec{s}$ are the Bloch vectors of $\rho^B$ and $\sigma^B$, respectively.
\end{thmMain}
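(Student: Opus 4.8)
The plan is to reduce assisted conversion to a pure-state steering problem and then invoke Theorem~\ref{thm:Main} (equivalently Cor.~\ref{cor:MaxProb}) on Bob's side. Since $\ket{\psi}^{AB}$ is pure with marginal $\rho^B$, a one-way LQICC protocol amounts to Alice performing a measurement, communicating the outcome $k$, and Bob applying a stochastic IO to the local conditional state he then holds; he has no other access to the global state. Consequently, if Alice's measurement steers Bob into the ensemble $\{q_k,\tau_k^B\}$, Bob's success probability for outcome $k$ is at most $P(\tau_k^B\to\sigma^B)$, and by the well-known Schr{\"o}dinger--Hughston--Jozsa--Wootters steering theorem the realizable ensembles are exactly those with $\sum_k q_k\tau_k^B=\rho^B$ (supported on $\mathrm{supp}\,\rho^B$). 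I would thus first establish
\[
P_\mathrm{a}\!\left(\ket{\psi}^{AB}\to\sigma^B\right)=\sup\Big\{\textstyle\sum_k q_k\,P(\tau_k^B\to\sigma^B)\ :\ \sum_k q_k\tau_k^B=\rho^B\Big\}.
\]

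Next I would record the single-qubit estimate that drives everything. Writing $c:=\tfrac{1+\sqrt{1-s^2}}{s^2}$ and $g(t_z):=\min\{1,(1-|t_z|)c\}$, I claim that for every qubit state $\tau$ with Bloch vector $(t_x,t_y,t_z)$ and every coherent $\sigma$ one has $P(\tau\to\sigma)\le g(t_z)$, with equality when $\tau$ is pure. This follows from Cor.~\ref{cor:MaxProb}: viewed as a function of $t^2=t_x^2+t_y^2$ with $t_z$ fixed, the conversion probability is (where positive) strictly increasing in $t^2$, so it is maximized at the largest admissible value $t^2=1-t_z^2$, i.e.\ at $\tau$ pure; substituting $t^2=1-t_z^2$ into Cor.~\ref{cor:MaxProb} (the ellipsoid condition~(\ref{eq:firstThmMain-2}) being then automatic since $|\vec s|\le1$) yields exactly $g(t_z)$. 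Since $g$ is concave on $[-1,1]$ (a minimum of an affine function and the concave function $t_z\mapsto(1-|t_z|)c$) and any steered ensemble obeys $\sum_k q_k t_{k,z}=r_z$, Jensen's inequality gives $\sum_k q_k P(\tau_k^B\to\sigma^B)\le\sum_k q_k g(t_{k,z})\le g(r_z)$, which is the upper bound claimed in~(\ref{eq:PurificationAssistance-2}).

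It remains to exhibit a protocol attaining $g(r_z)$. By the rotational invariance about the $z$-axis I may take $\vec r=(r,0,r_z)$ with $r,r_z\ge0$. I would then have Alice steer Bob into the equal-weight ensemble $\{\tfrac12,\tau_\pm\}$ with Bloch vectors $\big(r,\pm\sqrt{1-r_z^2-r^2},\,r_z\big)$: these are pure states, they average to $\vec r$, and each has $z$-component $r_z$, so by the equality case above Bob converts either of them to $\sigma^B$ with probability exactly $g(r_z)$, proving $P_\mathrm{a}\ge g(r_z)$ and hence equality. (If $\rho^B$ is pure the two states coincide with $\rho^B$, and if $\sigma^B$ is incoherent the claim is trivial; both degenerate cases are consistent.) The main obstacle is the reduction in the first step — carefully arguing that Bob's entire capability is captured by the single-shot quantity $P(\tau_k^B\to\sigma^B)$ applied to a steered ensemble, and that every ensemble with the right average is genuinely realizable by Alice; once this is in place, the monotonicity check and the concavity/Jensen argument are routine given Theorem~\ref{thm:Main}.
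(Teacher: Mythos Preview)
Your proposal is correct and follows essentially the same route as the paper: reduce one-way LQICC to steering an ensemble for $\rho^B$ via HJW, bound each branch by the single-qubit formula from Cor.~\ref{cor:MaxProb}, and achieve the bound by decomposing $\rho^B$ into two pure states sharing its $z$-coordinate. Your use of concavity of $g(t_z)=\min\{1,(1-|t_z|)c\}$ together with Jensen is a clean repackaging of the paper's two-step inequality $\sum_i q_i\min\{1,(1-|r_z^i|)c\}\le (1-\sum_i q_i|r_z^i|)c\le(1-|r_z|)c$, and in fact lets you skip the paper's preliminary reduction to pure-state decompositions since your pointwise bound $P(\tau\to\sigma)\le g(t_z)$ already covers mixed $\tau$.
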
	
\begin{proof}
	In the following, we will prove the more general
	case in which Alice holds an arbitrary purification of
	Bob's qubit state. Recall that by performing local measurements on Alice's side and using
	classical communication, Bob can obtain any decomposition
	$\{q_{i},\rho_i^{B}\}$ of his local state $\rho^{B}=\sum_{i}q_{i}\rho_i^{B}$~\cite{HUGHSTON199314}.
	After Alice's measurement has been performed, Bob applies incoherent
	operations to stochastically transform his post-measurement states
	$\rho_i^{B}$ into the desired state $\sigma^{B}$. First we will show that there always exists an optimal decomposition containing only pure states. 
	
	To do this, we use a decomposition of  
	every mixed qubit state $\rho$ into two pure states, which we will also use later in this proof. The Bloch vector $\boldsymbol{r}$ corresponding to $\rho$ can be written as a convex combination
	of two points $\boldsymbol{t}$ and $\boldsymbol{u}$ on the surface
	of the Bloch sphere having the same $z$-coordinate as $\boldsymbol{r}$,
	i.e., 
	\begin{subequations}\label{eq:decomposition-2}
		\begin{align}
		\boldsymbol{r} & =q\boldsymbol{t}+(1-q)\boldsymbol{u},\\
		r_{z} & =t_{z}=u_{z},\\
		|\boldsymbol{t}| & =|\boldsymbol{u}|=1.
		\end{align}
	\end{subequations}
	
	Now assume that an optimal decomposition of Bob's local state contains a mixed state $\rho_x$ which occurs with probability $q_x$. Since every decomposition can be obtained, Bob can also obtain a decomposition in which the pair $\{q_x,\rho_x\}$ is replaced by the two corresponding pure states from Eqs.~(\ref{eq:decomposition-2}) and probabilities $q_xq,q_x(1-q)$. 
	From Cor.~\ref{cor:MaxProb} follows that the transformation probability from both of these states to any target state is at least as high as the probability from $\rho_x$: In case the transformation from $\rho_x$ to a target state is forbidden by Eq.~(\ref{eq:CorProbZero}), there is nothing to prove. If not, the transformation probability from $\rho_x$ and the two pure states to the target is determined by Eq.~(\ref{eq:CorProbability}), since a pure initial state can never satisfy Eq.~(\ref{eq:CorProbZero}). Remember that by choice, we have $r_z=t_z=u_z$ and $t,u\ge r$. Now note that for a fixed target state $\sigma$ (fixed $\boldsymbol{s}$) and fixed $r_z$, the quantity in Eq.~(\ref{eq:CorProbability}) increases if $r$ increases. From this follows the claim, which implies that the new decomposition is also optimal. Eliminating all mixed states using this procedure, we end up with an optimal decomposition $\{q_{i},\ket{\psi_{i}}^{B}\}$ which only contains pure states.

	If
	we denote the corresponding maximal conversion probability by $P_\mathrm{a}\left(\ket{\psi_{i}}^{B}\rightarrow\sigma^{B} \right) $,
	our figure of merit can be written as 
	\begin{equation}
	P_\mathrm{a}\left(\ket{\psi}^{AB}\rightarrow\sigma^{B}\right)=\max\sum_{i}q_{i}P\left(\ket{\psi_{i}}^{B}\rightarrow\sigma^{B}\right),\label{eq:ConcaveRoof}
	\end{equation}
	where the maximization is performed over all pure state decompositions
	$\{q_{i},\ket{\psi_{i}}^{B}\}$ of Bob's local state.
	
	In the next step, we deduce from Eq.~(\ref{eq:CorProbability}) that the maximal probability for stochastic
	conversion between two single-qubit states $\ket{\psi_{i}}^{B}$
	and $\sigma^{B}$ can be expressed as
	\begin{equation}
	P\left(\ket{\psi_{i}}^{B}\rightarrow\sigma^{B}\right)=
	\min\left\{1, \left(1-|r_z^i|\right)\frac{1+\sqrt{1-s^2}}{s^2}\right\}
	.\label{eq:QubitPure}
	\end{equation}
	This result allows us to bound the average conversion probability
	for a decomposition $\{q_{i},\ket{\psi_{i}}^{B}\}$ of the state $\rho^{B}$
	as follows:
	\begin{align}
	\sum_{i}q_{i}P&\left(\ket{\psi_{i}}^{B}\rightarrow\sigma^{B}\right) \nonumber\\
	& =\sum_{i}q_{i}\min\left\{1, \left(1-|r_z^i|\right)\frac{1+\sqrt{1-s^2}}{s^2}\right\} \nonumber \\
	& \leq\sum_{i}q_{i}\times \left(1-|r_z^i|\right)\frac{1+\sqrt{1-s^2}}{s^2}\nonumber \\
	& =\left(1-\sum_{i}q_{i}|r_z^i|\right)\frac{1+\sqrt{1-s^2}}{s^2}\nonumber \\
	& \leq\left(1-|r_z|\right)\frac{1+\sqrt{1-s^2}}{s^2},
	\label{eq:EnsembleBound}
	\end{align}
	where in the last inequality we used the fact that $\sum_{i}q_{i}|r_z^i|=\sum_{i}q_{i}|\!\braket{\psi_{i}|\sigma_{z}}{\psi_{i}}\!|\geq\left|\mathrm{Tr}[\rho^{B}\sigma_{z}]\right|=|r_z|$.
	Since Eq.~(\ref{eq:EnsembleBound}) holds for any decomposition $\{q_{i},\ket{\psi_{i}}^{B}\}$
	of the state $\rho^{B}$, it implies that the probability for assisted
	conversion $P_\mathrm{a}\left(\ket{\psi}^{AB}\rightarrow\sigma^{B}\right)$ is bounded
	as 
	\begin{equation}
	P_\mathrm{a}\left(\ket{\psi}^{AB}\rightarrow\sigma^{B}\right)\leq\min\left\{ 1,\left(1-|r_z|\right)\frac{1+\sqrt{1-s^2}}{s^2}\right\} .\label{eq:Bound}
	\end{equation}

	To complete the proof of the Theorem, let Alice perform a two-outcome
	measurement such that Bob's post-measurement states $\ket{\psi_{i}}^{B}$ 
	fulfill 
	\begin{align}
	\mathrm{Tr}[\rho^{B}\sigma_{z}] & =\braket{\psi_{1}|\sigma_{z}}{\psi_{1}}=\braket{\psi_{2}|\sigma_{z}}{\psi_{2}},\label{eq:QubitOptimal-2}
	\end{align}
	Such a measurement always exists, see discussion above Eqs.~(\ref{eq:decomposition-2}).
	
	Depending on the outcome
	$i$, Bob then applies a stochastic incoherent operation to convert
	the state $\ket{\psi_{i}}^{B}$ into the desired state $\sigma^{B}$.
	This conversion protocol gives a lower bound on our figure of merit:
	\begin{align}
	P_\mathrm{a}\left(\ket{\psi}^{AB}\rightarrow\sigma^{B}\right)  \geq& qP\left(\ket{\psi_{1}}^{B}\rightarrow\sigma^{B}\right)\nonumber \\
	&+[1-q]P\left(\ket{\psi_{2}}^{B}\rightarrow\sigma^{B}\right)\nonumber \\
	=&\min\left\{1, \left(1-|r_z|\right)\frac{1+\sqrt{1-s^2}}{s^2}\right\} ,
	\label{eq:QubitOptimal-3}
	\end{align}
	where in the last equality we used Eqs.~(\ref{eq:QubitPure}) and
	(\ref{eq:QubitOptimal-2}). Noting that the lower bound~(\ref{eq:QubitOptimal-3})
	coincides with the upper bound~(\ref{eq:Bound}), we conclude that
	the presented protocol achieves the claimed conversion probability~(\ref{eq:PurificationAssistance-2})
	and that this probability is optimal. 
\end{proof}

Next we prove our results concerning two-qubit Werner states.

\begin{thm}
	The optimal probability $P_\mathrm{a}\left(\rho_{\mathrm{w}}^{AB}\rightarrow\sigma^{B}\right)$ for converting $\rho_{\mathrm{w}}^{AB}$ into the qubit state $\sigma^B$ via one-way LQICC is given by,
	\begin{equation} \label{eq:Werner-3}
	P_\mathrm{a}\left(\rho_{\mathrm{w}}^{AB}\rightarrow\sigma^{B}\right)=\begin{cases}
	1 & \mathrm{if}\,\,\,q_\mathrm{w}\geq \frac{s^2}{\sqrt{1-s_z^2}},\\  %\sqrt{\frac{s_{x}^{2}+s_{y}^{2}}{1-s_{z}^{2}}}, \\
	0 & \mathrm{otherwise}.
	\end{cases}
	\end{equation}
\end{thm}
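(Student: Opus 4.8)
The plan is to prove matching lower and upper bounds on $P_\mathrm{a}$, with the same architecture as the proof of Theorem~\ref{thm:PurificationAssistance}; the new difficulty is that the Werner state is \emph{not} a purification of $\rho^B=\mathrm{Tr}_A[\rho_\mathrm{w}^{AB}]=\openone/2$, so Bob can no longer be steered into an arbitrary decomposition of $\rho^B$. The first step is to pin down which ensembles Alice can steer on Bob's side. A general one-way LQICC protocol reduces to: Alice applies a local instrument, announces the outcome, Bob applies a local incoherent operation, and Bob's conditional state enters only through the POVM $\{M^A_k\}$ induced on Alice's qubit. Using $\mathrm{Tr}_A\!\left[(M^A\otimes\openone)\,\ketbra{\phi^+}{\phi^+}\right]=\tfrac12(M^A)^{T}$ and $\mathrm{Tr}_A\!\left[(M^A\otimes\openone)\,\openone/4\right]=\tfrac14\mathrm{Tr}[M^A]\,\openone$, one finds that Bob's normalized conditional state is always of the form $q_\mathrm{w}\mu_k+(1-q_\mathrm{w})\openone/2$, where $\{p_k,\mu_k\}$ is some decomposition of $\openone/2$ into qubit states; equivalently every state Alice can hand Bob has Bloch vector of length at most $q_\mathrm{w}$, with equality iff $M^A_k$ is rank one. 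By Cor.~\ref{cor:MaxProb}, $P_\mathrm{a}(\rho_\mathrm{w}^{AB}\to\sigma^B)$ is then the supremum over such decompositions of $\sum_k p_k\,P\!\left(q_\mathrm{w}\mu_k+(1-q_\mathrm{w})\openone/2\to\sigma^B\right)$.

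For the lower bound I would have Alice measure projectively in a basis lying in the equatorial plane of the Bloch sphere. Either outcome leaves Bob --- after a $z$-axis rotation, which is free in both IO and SIO --- with the single state whose Bloch vector is $(q_\mathrm{w},0,0)$, i.e.\ $r_z=0$, $r^2=q_\mathrm{w}^2$; Bob then applies the optimal single-qubit conversion of Cor.~\ref{cor:MaxProb} (composing with one more free $z$-rotation to match the azimuth of $\sigma^B$). With $r_z=0$ the feasibility condition~(\ref{eq:firstThmMain}) becomes $q_\mathrm{w}^2 s_z^2+s^2\le q_\mathrm{w}^2$; whenever this holds one has $q_\mathrm{w}^2\ge s^2$, so the bracketed factor in Cor.~\ref{cor:MaxProb} is $\ge1$ and the conversion is in fact \emph{deterministic}. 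Rearranging the feasibility condition gives exactly the claimed threshold on $q_\mathrm{w}$, so $P_\mathrm{a}=1$ there.

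For the upper bound I would show that when the threshold fails $\sigma^B$ is unreachable, even probabilistically, from \emph{any} state Alice can steer, whence $P_\mathrm{a}=0$. A steered state has $r^2+r_z^2\le q_\mathrm{w}^2$, and by the $p$-independent condition~(\ref{eq:firstThmMain}) of Theorem~\ref{thm:Main} --- equivalently the failure of~(\ref{eq:CorProbZero}) --- $\sigma^B$ can be reached from it with positive probability only if $s^2(1-r_z^2)\le r^2(1-s_z^2)$. Substituting $r^2\le q_\mathrm{w}^2-r_z^2$ and using $1-s^2-s_z^2\ge0$ reduces this, independently of the direction of $\vec r$, to $s^2\le q_\mathrm{w}^2(1-s_z^2)$, which is precisely the threshold. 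Combining the two bounds yields the dichotomy: below the threshold no steered state reaches $\sigma^B$ so $P_\mathrm{a}=0$, and at or above it the equatorial protocol already gives $P_\mathrm{a}=1$, so no intermediate value can occur.

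The step I expect to be the crux is the first one: rigorously establishing Alice's steering power for a \emph{mixed} shared state --- that her conditional influence is entirely captured by a POVM on her qubit and that the induced family of Bob's conditional states is exactly $\{q_\mathrm{w}\mu+(1-q_\mathrm{w})\openone/2\}$. Once that is in hand the optimization over Alice's strategies is essentially free, since the feasibility region of Theorem~\ref{thm:Main} depends on the steered state only through quantities controlled by $|\vec r|\le q_\mathrm{w}$ and the equatorial measurement saturates this with the most favorable orientation; Lemma~\ref{lem:2} can additionally be cited if one wishes to present the whole $q_\mathrm{w}>0$ range as an instance of the principle that correlations always help Bob.
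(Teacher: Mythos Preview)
Your proof is correct and follows essentially the same architecture as the paper's: identify Bob's steered ensemble as $\{q_\mathrm{w}\mu_k+(1-q_\mathrm{w})\openone/2\}$, use Alice's equatorial measurement to hand Bob the state with Bloch vector $(q_\mathrm{w},0,0)$ and convert deterministically from there, and for the upper bound show that no steered state can reach $\sigma^B$ even stochastically once the threshold fails. The only noteworthy difference is packaging: the paper organizes the upper bound through the $\Delta$-robustness of coherence $C_{\Delta,R}$ (using its convexity to reduce to rank-one POVMs and its SIO-monotonicity to compare $\sigma^B$ with the optimal steered state $\mu^B$), whereas you work directly with the Bloch-vector ellipse~(\ref{eq:firstThmMain}) and the constraint $r^2+r_z^2\le q_\mathrm{w}^2$; for qubits these are equivalent since $C_{\Delta,R}(\rho)=r/\sqrt{1-r_z^2}$ and the ellipse condition is exactly $C_{\Delta,R}(\sigma)\le C_{\Delta,R}(\rho)$.
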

\begin{proof}
	Suppose that Alice performs a general local measurement with POVM
	elements $\{M_{i}^{A}\}$. Conditioned on the measurement outcome
	$i$, Bob finds his system in the state $\rho_{i}^{B}=\mathrm{Tr}_{A}[M_{i}^{A}\rho_{\mathrm{w}}^{AB}]/p_{i}$,
	where $p_{i}=\mathrm{Tr}[M_{i}^{A}\rho_{\mathrm{w}}^{AB}]$ is the
	corresponding probability. To determine the set of states that Bob
	can achieve in this setting with nonzero probability, recall from the discussion below Thm.~\ref{thm:necessaryStochSIO} that
	Bob can transform a qubit state  $\rho$ into another qubit state $\tilde{\rho}$
	via stochastic incoherent operations if and only if
	\begin{equation}
	C_{\Delta,R}(\rho)\geq C_{\Delta,R}(\tilde{\rho}),
	\end{equation}
	where $C_{\Delta,R}$ is the $\Delta$-robustness of coherence~\cite{ChitambarPhysRevLett.117.030401,ChitambarPhysRevA.94.052336,ChitambarPhysRevA.95.019902}, which,
	for a single-qubit state $\rho$, can be expressed as
	\begin{align}
	C_{\Delta,R}(\rho)=\frac{r}{\sqrt{1-r_z^2}}.
	\end{align}	
	Thus, for characterizing the set of states achievable with nonzero
	probability, we need to evaluate the maximal $\Delta$-robustness
	$C_{\Delta,R}$ for any possible post-measurement state of Bob. Noting
	that $C_{\Delta,R}$ is convex, which follows directly from Eq.~(\ref{eq:deltaRobSimp}), we
	can restrict ourselves to rank-one POVMs on Alice's side.
	
	In the next step, note that for any rank one POVM element $M^{A}$,
	the corresponding post-measurement state of Bob has the form 
	\begin{equation}
	\rho^{B}=q_\mathrm{w}\ket{\eta}\!\bra{\eta}^{B}+(1-q_\mathrm{w})\frac{\openone^{B}}{2}.
	\end{equation}
	While the probability $q_\mathrm{w}$ here is fixed via the initial Werner state
	\begin{equation}
	\rho_{\mathrm{w}}^{AB}=q_\mathrm{w}\ket{\phi^{+}}\!\bra{\phi^{+}}+(1-q_\mathrm{w})\frac{\openone}{4},\label{eq:Werner-1}
	\end{equation}
	we can arbitrarily vary the state $\ket{\eta}$ by suitable adjusting
	Alice's POVM elements. Among all such states, the maximal $\Delta$-robustness
	$C_{\Delta,R}$ is attained for 
	\begin{equation}
	\mu^{B}=q_\mathrm{w}\ket{+}\!\bra{+}^{B}+(1-q_\mathrm{w})\frac{\openone^{B}}{2},
	\end{equation}
	i.e. it holds $C_{\Delta,R}(\rho^{B})\leq C_{\Delta,R}(\mu^{B})$. To
	see this, note that for single-qubit states the $\Delta$-robustness
	does not increase under incoherent operations~\cite{ChitambarPhysRevLett.117.030401,ChitambarPhysRevA.94.052336,ChitambarPhysRevA.95.019902},
	and moreover for any $\ket{\eta}^{B}$ there exists an incoherent
	operation converting $\mu^{B}$ into $\rho^{B}$~\cite{Streltsov1367-2630-20-5-053058}.
	Combining these arguments, we see that any post-measurement state
	of Bob has not more $\Delta$-robustness of coherence than the state
	$\mu^{B}$. Thus, we obtain the following condition for assisted state
	conversion of the Werner state~(\ref{eq:Werner-1}) into a state
	$\sigma^{B}$ on Bob's side:
	\begin{equation}
	P_\mathrm{a}\left(\rho_{\mathrm{w}}^{AB}\rightarrow\sigma^{B}\right)>0\,\,\,\Rightarrow\,\,\,C_{\Delta,R}(\sigma^{B})\leq C_{\Delta,R}(\mu^{B}).\label{eq:Condition-1}
	\end{equation}
	
	We will now show that the state $\mu^{B}$ is in fact achievable with
	unit probability: $P_\mathrm{a}\left(\rho_{\mathrm{w}}^{AB}\rightarrow\mu^{B}\right)=1$.
	For this, Alice first performs a local measurement on the Werner state
	in the $\{\ket{+},\ket{-}\}$ basis. Conditioned on the measurement
	outcome Bob finds his system either in the desired state $\mu^{B}$,
	or in the state $\sigma_{z}\mu^{B}\sigma_{z}$. In the latter case,
	Bob performs an incoherent $\sigma_{z}$ rotation, thus obtaining
	$\mu^{B}$ with unit probability. Note that via local incoherent operations
	Bob can transform the state $\mu^{B}$ into another state $\sigma^{B}$
	with unit probability if and only if $C_{\Delta,R}(\sigma^{B})\leq C_{\Delta,R}(\mu^{B})$. This follows from the discussion of the geometrical interpretation of Thm.~\ref{thm:Main} in the Methods Section of the main text. All states reachable from $\mu^B$ with non-zero probability are inside the ellipse through $\mu^B$, which is equivalent to $C_{\Delta,R}(\sigma^{B})\leq C_{\Delta,R}(\mu^{B})$. These states are  reachable with certainty, since $\mathrm{Tr}(\mu^B\sigma_z)=0$. Combining these arguments, we obtain
	the following condition:
	\begin{equation}
	C_{\Delta,R}(\sigma^{B})\leq C_{\Delta,R}(\mu^{B})\,\,\,\Rightarrow\,\,\,P_\mathrm{a}\left(\rho_{\mathrm{w}}^{AB}\rightarrow\sigma^{B}\right)=1.\label{eq:Condition-2}
	\end{equation}
	Both conditions~(\ref{eq:Condition-1}) and~(\ref{eq:Condition-2})
	imply the following: \begin{subequations}
		\begin{align}
		C_{\Delta,R}(\sigma^{B}) & \leq C_{\Delta,R}(\mu^{B})\,\,\,\Leftrightarrow\,\,\,P_\mathrm{a}\left(\rho_{\mathrm{w}}^{AB}\rightarrow\sigma^{B}\right)=1,\\
		C_{\Delta,R}(\sigma^{B}) & >C_{\Delta,R}(\mu^{B})\,\,\,\Leftrightarrow\,\,\,P_\mathrm{a}\left(\rho_{\mathrm{w}}^{AB}\rightarrow\sigma^{B}\right)=0.
		\end{align}
	\end{subequations}The proof of the Theorem is complete by noting
	that $C_{\Delta,R}(\mu^{B})=q_\mathrm{w}$.
\end{proof}

We will finish this section with the proof of our claim from the main text that correlations enhance conversion probabilities, as long as the global state is not quantum-incoherent. 

\begin{thm}
	If Bob's system is a qubit, then for any state $\rho^{AB}$ which
	is correlated and not quantum-incoherent the set of accessible states
	for Bob via stochastic one-way LQICC is strictly larger, when compared to
	$\rho^{A}\otimes\rho^{B}$.
\end{thm}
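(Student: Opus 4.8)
The plan is to translate ``accessible set'' into a statement about the $\Delta$-robustness $C_{\Delta,R}$ and then exhibit one explicit witness state. From Thm.~\ref{thm:Main} (together with the discussion around Eq.~(\ref{eq:deltaRobSimp}) and the qubit identity $C_{\Delta,R}(\rho)=r/\sqrt{1-r_z^2}$), the set of qubit states Bob can reach from a local state $\tau$ by a stochastic incoherent operation is the ellipsoid $E(\tau):=\{\sigma:C_{\Delta,R}(\sigma)\le C_{\Delta,R}(\tau)\}$. Since measuring Alice's half of $\rho^{A}\otimes\rho^{B}$ leaves Bob's reduced state equal to $\rho^{B}$, the states Bob can obtain from $\rho^{A}\otimes\rho^{B}$ by stochastic one-way LQICC are exactly those of $E(\rho^{B})$. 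From $\rho^{AB}$, on the other hand, Alice can steer Bob into $\rho^{B}_{M}:=\mathrm{Tr}_{A}[M^{A}\rho^{AB}]/\mathrm{Tr}[M^{A}\rho^{AB}]$ for any POVM element $M^{A}$ with $\mathrm{Tr}[M^{A}\rho^{AB}]>0$, after which Bob runs a stochastic incoherent operation; hence Bob's accessible set contains $\bigcup_{M^{A}}E(\rho^{B}_{M})\supseteq E(\rho^{B})$. It therefore suffices to produce a single $M^{A}$ with $\mathrm{Tr}[M^{A}\rho^{AB}]>0$ and $C_{\Delta,R}(\rho^{B}_{M})>C_{\Delta,R}(\rho^{B})$: the state $\rho^{B}_{M}$ is then itself accessible from $\rho^{AB}$ but lies outside $E(\rho^{B})$.

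Before constructing such an $M^{A}$, note that $\rho^{B}$ must be mixed, since a pure $\rho^{B}$ forces $\rho^{AB}=\rho^{A}\otimes\rho^{B}$, contradicting the hypothesis that $\rho^{AB}$ is correlated; thus $C_{\Delta,R}(\rho^{B})<1$. I would then split into two cases. If $\rho^{B}$ is incoherent, then $C_{\Delta,R}(\rho^{B})=0$ and $E(\rho^{B})$ is only the incoherent axis; since $\rho^{AB}$ is not quantum-incoherent, Thm.~2 of \cite{ChitambarPhysRevLett.116.070402} (the same input used in the proof of Lemma~\ref{lem:2}) supplies a POVM element $M^{A}$ with $\mathrm{Tr}[M^{A}\rho^{AB}]>0$ for which $\rho^{B}_{M}$ is not diagonal in the incoherent basis, so that $C_{\Delta,R}(\rho^{B}_{M})>0$ and the claim follows.

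The main case is $\rho^{B}$ mixed and coherent. Here $c:=C_{\Delta,R}(\rho^{B})\in(0,1)$, so $E(\rho^{B})$ is a genuine non-degenerate ellipsoid, hence a strictly convex body, and $\rho^{B}$---which lies on its boundary $\{\sigma:C_{\Delta,R}(\sigma)=c\}$---is an extreme point of $E(\rho^{B})$. By Lemma~\ref{lem:2} there is a two-outcome POVM $\{M^{A}_{1},M^{A}_{2}\}$ on Alice with $q:=\mathrm{Tr}[M^{A}_{1}\rho^{AB}]\in(0,1)$ and post-measurement states $\rho^{B}_{1}\neq\rho^{B}$; then $\rho^{B}=q\,\rho^{B}_{1}+(1-q)\,\rho^{B}_{2}$ with $\rho^{B}_{1}\neq\rho^{B}_{2}$, so $\rho^{B}$ is a proper convex combination of two distinct states. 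If both $\rho^{B}_{1}$ and $\rho^{B}_{2}$ were in $E(\rho^{B})$ this would contradict extremality of $\rho^{B}$; equivalently, convexity of $C_{\Delta,R}$ would force $C_{\Delta,R}(\rho^{B}_{1})=C_{\Delta,R}(\rho^{B}_{2})=c$, placing the entire segment $[\rho^{B}_{1},\rho^{B}_{2}]$ on the strictly convex surface $\{\sigma:C_{\Delta,R}(\sigma)=c\}$, which is impossible. Hence at least one of $\rho^{B}_{1},\rho^{B}_{2}$ has $C_{\Delta,R}>c$, providing the desired $M^{A}$ and witness.

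The only delicate point I anticipate is this last geometric step---upgrading ``$\rho^{B}_{1}\neq\rho^{B}$'' from Lemma~\ref{lem:2} into ``some post-measurement state escapes $E(\rho^{B})$''---which works precisely because $E(\rho^{B})$ is strictly convex when $\rho^{B}$ is a coherent mixed qubit. This is also the reason the degenerate case of incoherent $\rho^{B}$, where $E(\rho^{B})$ collapses to a segment, must be handled separately via \cite{ChitambarPhysRevLett.116.070402} rather than by the same convexity argument.
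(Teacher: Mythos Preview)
Your proof is correct and follows essentially the same route as the paper: invoke Lemma~\ref{lem:2} to obtain a nontrivial decomposition $\rho^{B}=q\rho_{1}^{B}+(1-q)\rho_{2}^{B}$ and then argue that one of the post-measurement states must fall outside the reachable ellipsoid of $\rho^{B}$. The paper compresses this last step into a single unargued sentence, whereas you correctly spell out the strict-convexity/extremality reasoning for coherent $\rho^{B}$ and separately treat the degenerate incoherent case via Thm.~2 of \cite{ChitambarPhysRevLett.116.070402}; this case split is genuinely needed and is the main value your write-up adds.
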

\begin{proof}
	As is shown in Lem.~\ref{lem:2}, for any correlated state $\rho^{AB}\neq\rho^{A}\otimes\rho^{B}$
	there exists a two-element POVM $\{M_{1}^{A},M_{2}^{A}\}$ on Alice's
	side such that $\rho_{1}^{B}\neq\rho_{2}^{B}$, where $\rho_{i}^{B}$
	is the state of Bob conditioned on the measurement outcome of Alice:
	\begin{equation}
	\rho_{i}^{B}=\frac{\mathrm{Tr}_{A}\left[M_{i}^{A}\rho^{AB}\right]}{p_{i}},
	\end{equation}
	and $p_{i}=\mathrm{Tr}\left[M_{i}^{A}\rho^{AB}\right]$ is the corresponding
	probability for obtaining the outcome $i$. Noting that 
	\begin{equation}
	\rho^{B}=p_{1}\rho_{1}^{B}+p_{2}\rho_{2}^{B},
	\end{equation}
	we conclude that \textendash{} whenever the state $\rho^{AB}$ is
	not quantum-incoherent \textendash{} either $\rho_{1}^{B}$ or $\rho_{2}^{B}$
	must be outside of the reachable ellipsoid of Bob's reduced state
	$\rho^{B}$. This completes the proof of the Theorem.
\end{proof}

\section{Implications of the theoretical results}
In this section, we discuss the implication of our theoretical results, which we briefly mentioned in the discussion of the main text.

In the scenario
considered so far we assumed that incoherent operations are applied
on one copy of the state $\rho$. In the following we will extend
our investigations to asymptotic conversion scenarios, where incoherent
operations are performed on a large number of copies of the state
$\rho$. The figure of merit in this setting is the asymptotic conversion
rate 
\begin{equation*}
R(\rho\rightarrow\sigma)=\sup\left\{ r:\lim_{n\rightarrow\infty}\left(\inf_{\Lambda}\left\Vert \Lambda\left(\rho^{\otimes n}\right)-\sigma^{\otimes\left\lfloor rn\right\rfloor }\right\Vert _{1}\right)=0\right\} ,\label{eq:R}
\end{equation*}
where $||M||_{1}=\mathrm{Tr}\sqrt{M^{\dagger}M}$ is the trace norm,
the infimum is performed over all incoherent operations $\Lambda$,
and $\left\lfloor x\right\rfloor $ is the largest integer smaller
or equal to the real number $x$.

It is now important to note that the single copy conversion probability
$P(\rho\rightarrow\sigma)$ is a lower bound for the conversion rate:
\begin{equation}
R(\rho\rightarrow\sigma)\geq P(\rho\rightarrow\sigma).\label{eq:bound-1}
\end{equation}
In fact, asymptotic conversion at rate $P(\rho\rightarrow\sigma)$
can be achieved by applying stochastic IO on each copy of the state
$\rho$. Denoting by $C_{\mathrm{d}}$ the distillable coherence and by $C_{\mathrm{c}}$ the coherence cost~\cite{WinterPhysRevLett.116.120404}, the bounds
\begin{equation}
\frac{C_{\mathrm{d}}(\rho)}{C_{\mathrm{c}}(\sigma)} \leq R(\rho\rightarrow\sigma)\leq\min\left\{ \frac{C_{\mathrm{d}}(\rho)}{C_{\mathrm{d}}(\sigma)},\frac{C_{\mathrm{c}}(\rho)}{C_{\mathrm{c}}(\sigma)}\right\} .\label{eq:bound-2}
\end{equation}
appeared in \cite{WinterPhysRevLett.116.120404}.

As was shown again in~\cite{WinterPhysRevLett.116.120404}, the distillable
coherence admits the following closed expression: 
\begin{equation}\label{eq:distillableClosed}
C_{\mathrm{d}}(\rho)=S(\Delta[\rho])-S(\rho),
\end{equation}
where $S(\rho)=-\mathrm{Tr}[\rho\log_{2}\rho]$ is the von Neumann
entropy and $\Delta[\rho]=\sum_{i}\ketbra{i}{i}\rho\ketbra{i}{i}$
is the dephasing operator. Moreover, the coherence cost $C_{\mathrm{c}}$
is equal to the coherence of formation $C_{\mathrm{f}}$~\cite{WinterPhysRevLett.116.120404}:
\begin{equation} \label{eq:coherenceFormation}
C_{\mathrm{c}}(\rho)=C_{\mathrm{f}}(\rho)=\min\sum_{i}p_{i}S\left(\Delta\left[\psi_{i}\right]\right).
\end{equation}
Here, the minimization is performed over all pure state decompositions
of the state $\rho=\sum_{i}p_{i}\psi_{i}$. 

Up until here, the results concerning asymptotic conversions were valid for general dimensions. From here on, we will specialize them exclusively to qubits.
For single-qubit states, Eq.~(\ref{eq:coherenceFormation}) can be further simplified as follows~\cite{YuanPhysRevA.92.022124}:
\begin{equation}
C_{\mathrm{c}}(\rho)=C_{\mathrm{f}}(\rho)=h\left(\frac{1+\sqrt{1-4|\rho_{01}|^{2}}}{2}\right),\label{eq:CcQubit}
\end{equation}
where $h(x)=-x\log_{2}x-(1-x)\log_{2}(1-x)$ is the binary entropy
and $\rho_{01}=\braket{0}{\rho|1}$.

\begin{figure}
	\includegraphics[height=4cm]{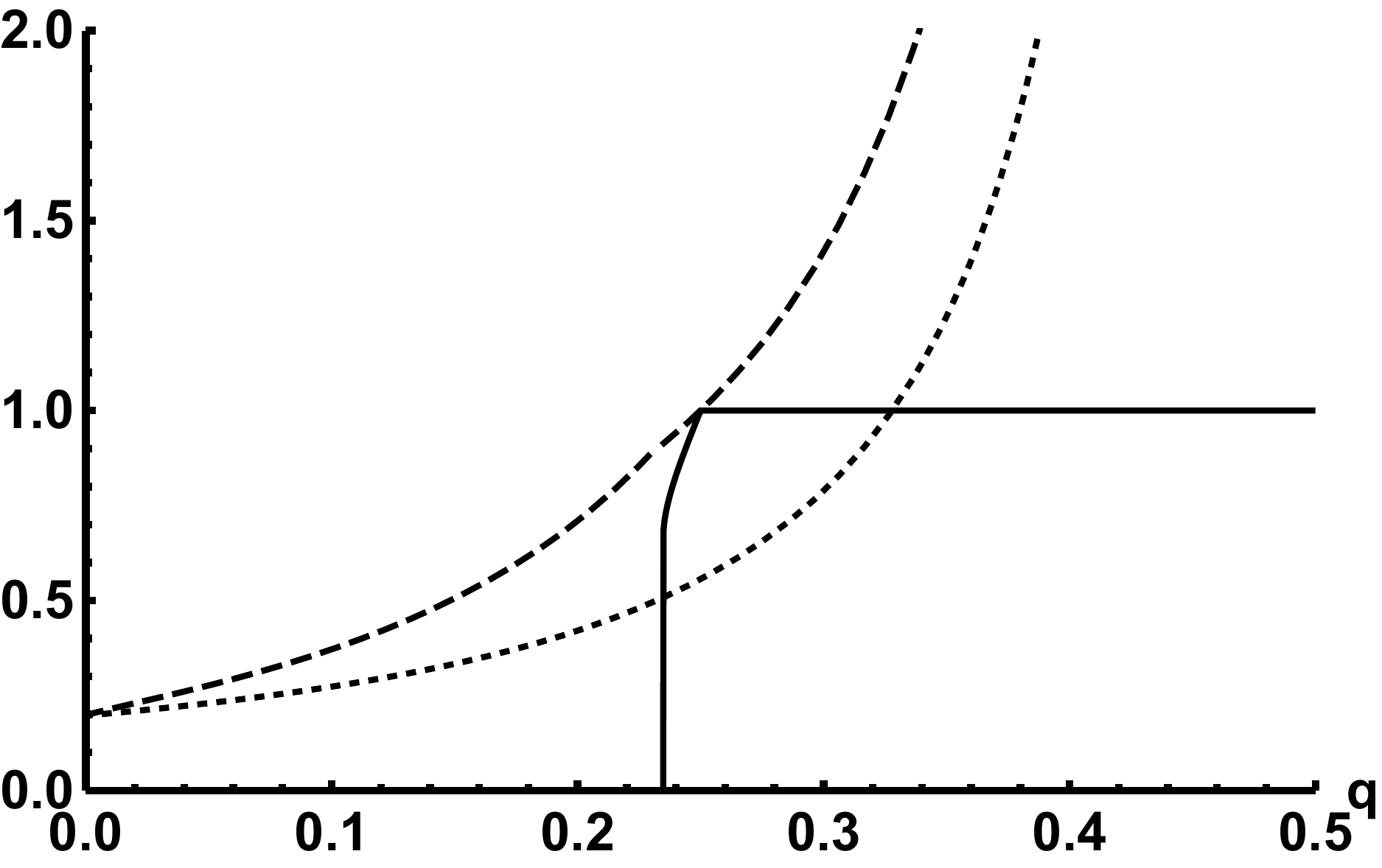}
	
	\caption{\label{fig:bounds}Comparison of upper and lower bounds on the asymptotic
		conversion rate $R(\rho\rightarrow\sigma)$ for states in Eqs.~(\ref{eq:ExampleRho})
		and (\ref{eq:ExampleSigma}). Dashed line shows the upper bound given
		by $\min\left\{ C_{\mathrm{d}}(\rho)/C_{\mathrm{d}}(\sigma),C_{\mathrm{c}}(\rho)/C_{\mathrm{c}}(\sigma)\right\} $,
		solid line shows the lower bound given by $P(\rho\rightarrow\sigma)$,
		and dotted line shows the lower bound given by $C_{\mathrm{d}}(\rho)/C_{\mathrm{c}}(\sigma)$.}
\end{figure}

We will now demonstrate the power of these results on a specific example.
For this, we consider the following single-qubit state: 
\begin{align}
\rho & =\left(\begin{array}{cc}
\frac{2}{3} & \frac{1}{4}\\
\frac{1}{4} & \frac{1}{3}
\end{array}\right).\label{eq:ExampleRho}
\end{align}
We will study the conversion of $\rho$ into a convex combination
of maximally coherent states $\ket{\pm}=(\ket{0}\pm\ket{1})/\sqrt{2}$,
i.e., the final state $\sigma$ has the form 
\begin{equation}
\sigma=q\ket{+}\!\bra{+}+(1-q)\ket{-}\!\bra{-}.\label{eq:ExampleSigma}
\end{equation}
In Fig.~\ref{fig:bounds} we compare the aforementioned upper and
lower bounds on the state conversion rate for the states $\rho$ and
$\sigma$ in Eqs.~(\ref{eq:ExampleRho}) and~(\ref{eq:ExampleSigma}).
In particular, there exists a range of the parameter $q$ where 
$P(\rho\rightarrow\sigma)$ {[}solid line
in Fig.~\ref{fig:bounds}{]} is very close to the upper bound $\min\left\{ C_{\mathrm{d}}(\rho)/C_{\mathrm{d}}(\sigma),C_{\mathrm{c}}(\rho)/C_{\mathrm{c}}(\sigma)\right\} $
{[}dashed line in Fig.~\ref{fig:bounds}{]}. 
The quality of our bound should also be compared to the lower bound
$C_{\mathrm{d}}(\rho)/C_{\mathrm{c}}(\sigma)$ {[}dotted line in Fig.~\ref{fig:bounds}{]}.
The Figure clearly shows that the two different lower bounds have their advantages for
different values of the parameter $q$: For $q$ close to $1/4$, our new bound is much tighter
than the best previously known bound~\cite{WinterPhysRevLett.116.120404}. If $q$ is below a critical value, the new bound is zero. This corresponds to the region outside the reachable ellipsoid. In addition, the new bound can never exceed one, and thus the results from~\cite{WinterPhysRevLett.116.120404} give a better bound when $\sigma$
has much lower coherence than $\rho$, which corresponds to $q\approx1/2$.

Indeed, we note that for $q=1/4$ the conversion probability $P(\rho\rightarrow\sigma)$
coincides with $C_{\mathrm{c}}(\rho)/C_{\mathrm{c}}(\sigma)$,
and in fact both are equal to $1$. This implies that the asymptotic
conversion rate is given by $R(\rho\rightarrow\sigma)=1$ in this
case. We will generalize this observation in the following Theorem.
\begin{thm}\label{thm:assympUnitRate}
	Assume qubit states $\rho$ and $\sigma$ obey 
	\begin{equation}
	s_{z}^{2}\leq r_{z}^{2}\,\,\,\mathrm{and}\,\,\,s=r.\label{eq:asymptotic}
	\end{equation}
	Then we have $R(\rho\rightarrow\sigma)=1$.
\end{thm}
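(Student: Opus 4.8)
The plan is to sandwich $R(\rho\rightarrow\sigma)$ between $1$ and $1$ by combining the two bounds recalled in Eqs.~(\ref{eq:bound-1}) and~(\ref{eq:bound-2}). For the upper bound I would use that, for a single qubit, $|\rho_{01}|=r/2$, so by Eq.~(\ref{eq:CcQubit}) the coherence cost $C_{\mathrm{c}}$ is a function of $r$ alone; since the hypothesis~(\ref{eq:asymptotic}) includes $s=r$, this yields $C_{\mathrm{c}}(\rho)=C_{\mathrm{c}}(\sigma)$, and the right-hand side of Eq.~(\ref{eq:bound-2}) then gives $R(\rho\rightarrow\sigma)\le C_{\mathrm{c}}(\rho)/C_{\mathrm{c}}(\sigma)=1$.

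The delicate point is that the matching lower bound cannot be read off from the term $C_{\mathrm{d}}(\rho)/C_{\mathrm{c}}(\sigma)$ in Eq.~(\ref{eq:bound-2}): by the irreversibility of coherence theory one always has $C_{\mathrm{d}}(\rho)\le C_{\mathrm{c}}(\rho)=C_{\mathrm{c}}(\sigma)$, with strict inequality as soon as $\rho$ is mixed, so this bound is in general strictly below $1$. Instead I would invoke the single-shot bound $R(\rho\rightarrow\sigma)\ge P(\rho\rightarrow\sigma)$ of Eq.~(\ref{eq:bound-1}) and show $P(\rho\rightarrow\sigma)=1$ via Cor.~\ref{cor:MaxProb} (which applies since $r>0$; the case $r=0$ of incoherent states is degenerate and excluded here). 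Substituting $s=r$, the exclusion condition~(\ref{eq:CorProbZero}) reduces to $s_z^2>r_z^2$, which is ruled out by the hypothesis $s_z^2\le r_z^2$, so $\sigma$ lies inside the reachable ellipsoid of $\rho$. Substituting $s=r$ into Eq.~(\ref{eq:CorProbability}) makes the argument of the minimum equal to $\frac{r^2}{(1+|r_z|)s^2}\bigl(1+\sqrt{1-s^2(1-r_z^2)/r^2}\bigr)=\frac{1}{1+|r_z|}(1+|r_z|)=1$, hence $P(\rho\rightarrow\sigma)=\min\{1,1\}=1$.

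Putting the two halves together gives
\begin{equation}
1=P(\rho\rightarrow\sigma)\le R(\rho\rightarrow\sigma)\le\frac{C_{\mathrm{c}}(\rho)}{C_{\mathrm{c}}(\sigma)}=1,
\end{equation}
which is the claim. I expect the only genuinely non-routine step to be recognizing that the ``obvious'' lower bound $C_{\mathrm{d}}(\rho)/C_{\mathrm{c}}(\sigma)$ is too weak for mixed $\rho$, and that one must route the argument through the exact single-shot conversion probability $P(\rho\rightarrow\sigma)$ established earlier; once this is seen, the proof reduces to substituting $s=r$ into the formulas of Cor.~\ref{cor:MaxProb} and Eq.~(\ref{eq:CcQubit}).
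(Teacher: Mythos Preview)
Your argument is correct and matches the paper's proof: both sandwich $R(\rho\rightarrow\sigma)$ between $P(\rho\rightarrow\sigma)=1$ below and $C_{\mathrm{c}}(\rho)/C_{\mathrm{c}}(\sigma)=1$ above, the latter via $|\rho_{01}|=r/2$ and Eq.~(\ref{eq:CcQubit}). The only cosmetic difference is that the paper obtains $P(\rho\rightarrow\sigma)=1$ by citing the deterministic-conversion criteria of~\cite{PhysRevLett.119.140402}, whereas you derive it more self-containedly by substituting $s=r$ into Cor.~\ref{cor:MaxProb}.
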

\begin{proof}
	In the first step of the proof note that $P(\rho\rightarrow\sigma)=1$
	for any two states $\rho$ and $\sigma$ fulfilling Eqs.~(\ref{eq:asymptotic}),
	which follows directly from Eqs.~(3a) and (3b) in~\cite{PhysRevLett.119.140402}.
	This proves that $R(\rho\rightarrow\sigma)\geq1$ in this case. 
	
	In the next step we will show that states fulfilling Eqs.~(\ref{eq:asymptotic})
	have equal coherence cost: 
	\begin{equation}
	C_{\mathrm{c}}(\rho)=C_{\mathrm{c}}(\sigma).\label{eq:Cc}
	\end{equation}
	Since $C_{\mathrm{c}}(\rho)/C_{\mathrm{c}}(\sigma)$ is an upper bound
	on the conversion rate, this will then complete the proof of the Theorem.
	For proving Eq.~(\ref{eq:Cc}), note that $r^2=r_{x}^{2}+r_{y}^{2}=4|\rho_{01}|^{2}$.
	Thus, Eqs.~(\ref{eq:asymptotic}) directly imply the equality $|\rho_{01}|^{2}=|\sigma_{01}|^{2}$.
	Now note that for any single-qubit state $\rho$ the coherence cost
	is a simple function of $|\rho_{01}|^{2}$, see Eq.~(\ref{eq:CcQubit}).
	This completes the proof of Eq.~(\ref{eq:Cc}) and also the proof
	of the Theorem. 
\end{proof}

As we show now, the above Theorem cannot be formulated as an if and only if statement.
From Eq.~(\ref{eq:coherenceFormation}) follows that the coherence cost of a pure state $\psi$ is equal to $S(\Delta[\psi])$ and therefore equal to its distillable coherence (compare Eq.~(\ref{eq:distillableClosed})). Now there exist pure states $\psi$ and mixed states $\rho$ such that 
\begin{align}
0<C_\mathrm{d}(\psi)=C_\mathrm{d}(\rho).
\end{align}
Using Eq.~(\ref{eq:bound-2}), we can conclude
\begin{align}
1=\frac{C_{\mathrm{d}}(\rho)}{C_{\mathrm{d}}(\psi)} =\frac{C_{\mathrm{d}}(\rho)}{C_{\mathrm{c}}(\psi)} \leq R(\rho\rightarrow\psi)\leq \frac{C_{\mathrm{d}}(\rho)}{C_{\mathrm{d}}(\psi)}=1,
\end{align}
which proves $R(\rho\rightarrow\psi)=1$. However, this case is not covered by Thm.~\ref{thm:assympUnitRate}.

\begin{figure}
	\includegraphics[height=4cm]{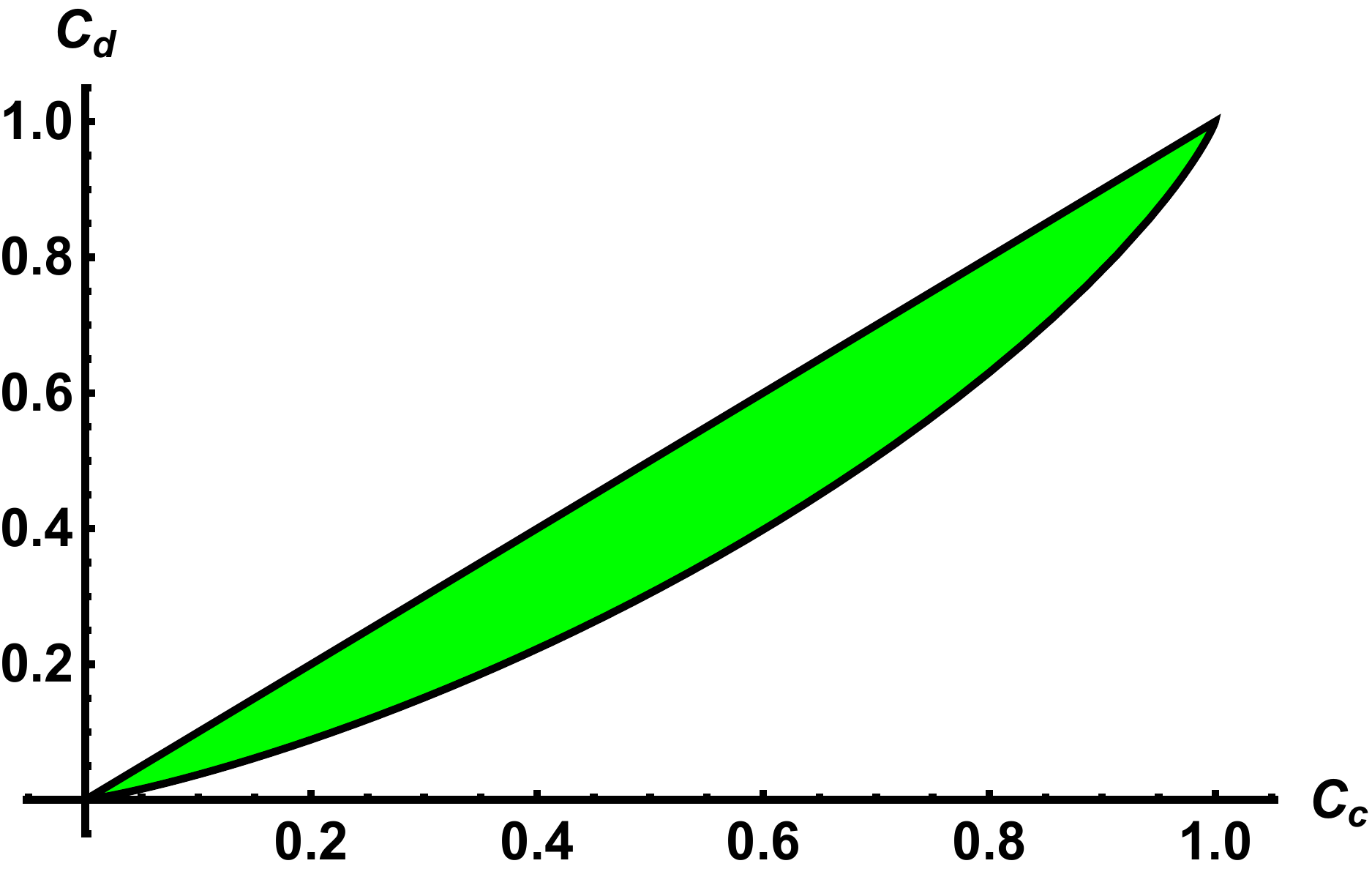}
	
	\caption{\label{fig:irreversibility}Allowed region for distillable coherence
		$C_{\mathrm{d}}$ and coherence cost $C_{\mathrm{c}}$ for single-qubit
		states. The upper curve is given by $C_{\mathrm{d}}(\rho)=C_{\mathrm{c}}(\rho)$,
		which is attained for pure states. The lower curve is obtained from
		the family of states given in Eq.~(\ref{eq:ExampleSigma}), see Prop.~\ref{prop:reversiblityQubits} and its discussion for details.}
\end{figure}
We will now apply the methods we developed for studying
the irreversibility of coherence theory. For any quantum resource
theory, the conversion rate $R$ fulfills the following inequality
for any two nonfree states $\rho$ and $\sigma$: 
\begin{equation} \label{eq:irrevers}
R(\rho\rightarrow\sigma)\times R(\sigma\rightarrow\rho)\leq1.
\end{equation}
The resource theory is called \emph{reversible} if Eq.~(\ref{eq:irrevers})
is an equality for all nonfree states. Otherwise, the resource theory
is called \emph{irreversible}. Examples for reversible resource theories
are the theories of entanglement and coherence, when restricted to
pure states only. However, both theories are not reversible for general
mixed states~\cite{HorodeckiPhysRevLett.80.5239,WinterPhysRevLett.116.120404}.
General properties of reversible resource theories have been investigated
in~\cite{HorodeckiPhysRevLett.89.240403,PhysRevLett.115.070503}.

In the following, we will study the irreversibility of coherence theory
in more detail. In particular, we will investigate which values of
distillable coherence $C_{\mathrm{d}}$ a single-qubit state can attain,
for a fixed amount of coherence cost $C_{\mathrm{c}}$. The most interesting
family of states in this context is given by $\sigma$ in Eq.~(\ref{eq:ExampleSigma}):

\begin{prop}\label{prop:reversiblityQubits}
	Among all single-qubit states, the family of states given in Eq.~(\ref{eq:ExampleSigma}) has the minimal distillable coherence $C_{\mathrm{d}}$
	for a fixed coherence cost $C_{\mathrm{c}}$ and vice versa maximal $C_\mathrm{c}$ for fixed $C_\mathrm{d}$.
\end{prop}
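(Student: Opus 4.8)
The plan is to reduce the statement to the closed-form single-qubit expressions for $C_{\mathrm{c}}$ and $C_{\mathrm{d}}$ and then to one scalar convexity inequality. Recall from Eq.~(\ref{eq:CcQubit}) that, since $4|\rho_{01}|^{2}=r^{2}$, the cost $C_{\mathrm{c}}(\rho)$ of a qubit depends only on $r$, while from Eq.~(\ref{eq:distillableClosed}) together with the facts that the eigenvalues of $\Delta[\rho]$ are $\tfrac{1\pm|r_{z}|}{2}$ and those of $\rho$ are $\tfrac{1\pm\sqrt{r^{2}+r_{z}^{2}}}{2}$, one has
\begin{equation*}
C_{\mathrm{d}}(\rho)=h\!\left(\tfrac{1+|r_{z}|}{2}\right)-h\!\left(\tfrac{1+\sqrt{r^{2}+r_{z}^{2}}}{2}\right),
\end{equation*}
so $C_{\mathrm{d}}$ depends only on $(r,|r_{z}|)$. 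The family~(\ref{eq:ExampleSigma}), parametrised by $s=|2q-1|\in[0,1]$, has Bloch vector $(2q-1,0,0)$, hence $r=s$ and $r_{z}=0$, giving $C_{\mathrm{c}}(\sigma)=h\!\left(\tfrac{1+\sqrt{1-s^{2}}}{2}\right)$ and $C_{\mathrm{d}}(\sigma)=1-h\!\left(\tfrac{1+s}{2}\right)$. Since fixing $C_{\mathrm{c}}$ is equivalent to fixing $r$, the first half of the Proposition reduces to showing that for each fixed $r\in(0,1]$ the minimum of $C_{\mathrm{d}}$ over the admissible range $r_{z}^{2}\le1-r^{2}$ is attained at $r_{z}=0$ (the case $r=0$ is trivial, both measures vanishing).

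Setting $t=|r_{z}|$, $g(t)=h\!\left(\tfrac{1+t}{2}\right)-h\!\left(\tfrac{1+\sqrt{r^{2}+t^{2}}}{2}\right)$, and $\phi(x)=1-h\!\left(\tfrac{1+x}{2}\right)$ (so $\phi$ is increasing, $\phi(0)=0$, $\phi(1)=1$), a one-line rearrangement using $h(1/2)=1$ gives $g(t)-g(0)=\phi\!\left(\sqrt{r^{2}+t^{2}}\right)-\phi(r)-\phi(t)$. Hence the claim is exactly $\phi\!\left(\sqrt{r^{2}+t^{2}}\right)\ge\phi(r)+\phi(t)$. Writing $\psi(w)=\phi(\sqrt w)$ on $[0,1]$ this becomes $\psi(r^{2}+t^{2})\ge\psi(r^{2})+\psi(t^{2})$; since $\psi(0)=0$, it suffices to prove that $\psi$ is convex, because a convex function on $[0,1]$ vanishing at the origin is superadditive (write $a=\tfrac{a}{a+b}(a+b)+\tfrac{b}{a+b}0$ and use Jensen).

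The one genuine computation is therefore the convexity of $\psi(w)=1-h\!\left(\tfrac{1+\sqrt w}{2}\right)$, equivalently the concavity of $F(w)=h\!\left(\tfrac{1+\sqrt w}{2}\right)$ on $[0,1]$ (the base of the logarithm in $h$ is irrelevant, being a positive rescaling). Differentiating twice with $u=\sqrt w$ one obtains $F''(w)=\tfrac{1}{8u^{3}}\bigl[\ln\tfrac{1+u}{1-u}-\tfrac{2u}{1-u^{2}}\bigr]$, so concavity amounts to $\tfrac12\ln\tfrac{1+u}{1-u}\le\tfrac{u}{1-u^{2}}$ for $u\in[0,1)$, which follows by the termwise comparison of the power series $\tfrac12\ln\tfrac{1+u}{1-u}=\sum_{k\ge0}\tfrac{u^{2k+1}}{2k+1}$ and $\tfrac{u}{1-u^{2}}=\sum_{k\ge0}u^{2k+1}$. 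This makes $F$ strictly concave, $\psi$ strictly convex, and hence $g(t)>g(0)$ for $t>0$, so the family~(\ref{eq:ExampleSigma}) is the unique minimiser of $C_{\mathrm{d}}$ at each fixed $C_{\mathrm{c}}>0$.

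For the ``vice versa'' statement I would not redo any estimate. Along the family, parametrised by $s\in[0,1]$, both $C_{\mathrm{c}}(\sigma(s))$ and $C_{\mathrm{d}}(\sigma(s))$ written above are strictly increasing in $s$ and sweep out $[0,1]$, so $s\mapsto(C_{\mathrm{c}},C_{\mathrm{d}})$ is a strictly monotone curve and the two optimality claims describe the same boundary curve: given any coherent $\rho$, let $\sigma(s_{c})$ be the family member with $C_{\mathrm{c}}(\sigma(s_{c}))=C_{\mathrm{c}}(\rho)$; the first half gives $C_{\mathrm{d}}(\sigma(s_{c}))\le C_{\mathrm{d}}(\rho)$; if $\sigma(s_{d})$ has $C_{\mathrm{d}}(\sigma(s_{d}))=C_{\mathrm{d}}(\rho)$, then monotonicity of $C_{\mathrm{d}}(\sigma(\cdot))$ forces $s_{c}\le s_{d}$, whence monotonicity of $C_{\mathrm{c}}(\sigma(\cdot))$ gives $C_{\mathrm{c}}(\rho)=C_{\mathrm{c}}(\sigma(s_{c}))\le C_{\mathrm{c}}(\sigma(s_{d}))$, as required. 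The only step I expect to be a real obstacle is the convexity/power-series argument of the third paragraph; the rest is bookkeeping with the closed-form qubit formulas.
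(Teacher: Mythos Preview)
Your argument is correct, but it takes a genuinely different route from the paper's proof. The paper does not compute $C_{\mathrm{d}}$ explicitly at all: it observes that $C_{\mathrm{c}}$ depends only on $r$, then for an arbitrary qubit $\rho$ rotates it incoherently so that its Bloch vector lies in the $x$--$z$ plane (call this $\tilde\rho$), and forms $\tau=\tfrac12\tilde\rho+\tfrac12\sigma_x\tilde\rho\sigma_x$. Since $\sigma_x$ is an incoherent unitary, $\tau$ has the same $r$ (hence the same $C_{\mathrm{c}}$) and by construction $t_z=0$, so $\tau$ belongs to the family~(\ref{eq:ExampleSigma}); convexity of the relative entropy of coherence then gives $C_{\mathrm{d}}(\tau)\le C_{\mathrm{d}}(\rho)$ in one line. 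Your approach instead writes $C_{\mathrm{d}}$ in closed form, reduces the claim to the superadditivity $\psi(r^2+t^2)\ge\psi(r^2)+\psi(t^2)$ of $\psi(w)=1-h\!\bigl(\tfrac{1+\sqrt w}{2}\bigr)$, and establishes that via the power-series inequality $\mathrm{artanh}(u)\le u/(1-u^2)$. The paper's symmetry-plus-convexity argument is shorter and works for any coherence monotone that is convex and invariant under incoherent unitaries, without knowing a closed form; your analytic route is self-contained (it does not invoke the convexity of $C_{\mathrm{d}}$ as an external fact) and, because the power-series comparison is strict for $u>0$, it yields strict uniqueness of the minimiser at $r_z=0$, which the paper's averaging argument does not. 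You also treat the ``vice versa'' direction explicitly by the monotone-curve argument, whereas the paper's proof addresses only the first direction.
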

\begin{proof}
	In the first step, we recall that for any single-qubit state $\rho$
	the coherence cost depends only on the absolute value of the offdiagonal
	element $|\rho_{01}|=|\braket{0|\rho}{1}|$, see Eq.~(\ref{eq:CcQubit}). In particular, $C_\mathrm{c}$ is a strictly monotonically increasing function of $|\rho_{01}|$. Moreover, recall that $|\rho_{01}|$ is directly related
	to the Euclidian distance of the state to the incoherent axis in the
	Bloch space: $r_{x}^{2}+r_{y}^{2}=4|\rho_{01}|^{2}$~\footnote{Compare also the proof of Thm.~\ref{thm:assympUnitRate}.}.
	This means that all states with a fixed coherence cost have the same
	distance to the incoherent axis in the Bloch space. 
	
	In the next step, we note that for any single-qubit
	state $\rho$ with Bloch vector $\boldsymbol{r}=(r_x,r_y,r_z)^T$ we can introduce the state $\tilde{\rho}$ having the Bloch coordinates 
	\begin{equation}
	\tilde{r}_x=\sqrt{r_x^2+r_y^2},\,\,\,\,\tilde{r}_y=0,\,\,\,\,\tilde{r}_z=r_z.
	\end{equation}
	The state $\tilde{\rho}$ can be obtained from $\rho$ via an incoherent unitary, and thus both states have the same coherence cost and distillable coherence. In the next step, we introduce the state $\tau$ as follows:
	\begin{equation}
	\tau=\frac{1}{2}\tilde{\rho}+\frac{1}{2}\sigma_{x}\tilde{\rho}\sigma_{x}.
	\end{equation}
	Note that $\tau$ has the same distance to the incoherent axis -- and thus the
	same coherence cost -- as $\rho$ and $\tilde{\rho}$, i.e., 
	\begin{equation}
	C_\mathrm{c} (\tau) = C_\mathrm{c} (\tilde{\rho}) = C_\mathrm{c} (\rho).
	\end{equation}
	Moreover, it is straightforward to
	see that $\tau$ lies on the maximally coherent plane, i.e., the plane spanned by Bloch vectors corresponding to maximally coherent states. By construction, the Bloch vector of $\tau$ also lies in the $x$-$z$ plane, which implies that $\tau$ has the desired form~(\ref{eq:ExampleSigma}).
	
	In the final step, recall that the distillable
	coherence is convex, and thus
	\begin{equation}
	C_{\mathrm{d}}(\tau)\leq\frac{1}{2}C_{\mathrm{d}}(\tilde{\rho})+\frac{1}{2}C_{\mathrm{d}}(\sigma_{x}\tilde{\rho}\sigma_{x})=C_{\mathrm{d}}(\tilde{\rho})=C_{\mathrm{d}}(\rho),
	\end{equation}
	where we used the facts that the Pauli matrix
	$\sigma_{x}$ is an incoherent unitary, and thus preserves~$C_{\mathrm{d}}$, and that $\rho$ and $\tilde{\rho}$ have the same distillable coherence. This completes the proof.
\end{proof}

This result allows us to plot the allowed region of coherence cost
and distillable coherence in Fig.~\ref{fig:irreversibility}. The
upper curve is given by $C_{\mathrm{d}}(\rho)=C_{\mathrm{c}}(\rho)$,
which is attained if $\rho$ is a pure state.
From results in~\cite{PhysRevLett.115.020403,WinterPhysRevLett.116.120404} follows directly, that the same region is attainable for distillable entanglement and entanglement cost when considering maximal correlated two-qubit states.

\section{Experimental Aspects}
In this section, we describe the experimental details, which include: state preparation, implementation of the incoherent Kraus operators and quantum state tomography. 
\subsection{Details on state preparation}
In  module (I) described in the main text, we can prepare three different classes of states. The first class consists of single-qubit states of the form 
\begin{equation}\label{suppsinqubit}
\rho^B=\frac{1}{2}\left(\openone+r_x\sigma_x+r_z\sigma_z\right)
\end{equation}
on Bob's side, where $r_{x,z}$ are real numbers and denote $x, z$ Bloch coordinates. The second class consists of pure two qubit entangled states of the form 
\begin{equation}\label{supptwoqubit}
\ket{\Psi}^{AB}=\sqrt{\mu_0}\ket{0}^A\ket{\beta_0}^B+\sqrt{\mu_1}\ket{1}^A\ket{\beta_1}^B,
\end{equation}
where $\mu_0, \mu_1$ denote eigenvalues of Bob's local states and $\{\ket{\beta_0}, \ket{\beta_1}\}$ the local basis of Bob. The third class are
two qubit Werner states 
\begin{equation}\label{supptwoqubitwerner}
\rho^{AB}_w=q_\mathrm{w}\ketbra{\phi^+}{\phi^+}+\frac{1-q_\mathrm{w}}{4}\openone^{AB},
\end{equation}
where $\ket{\phi^+}$ denotes a maximally entangled state and $q_\mathrm{w}$ is the purity of the Werner state.

In particular, two type-I phase-matched $\beta$-barium borate (BBO) crystals, whose optical axes are normal to each other, are pumped by the continuous laser at $404$~nm, with a power of 80~mW, for the generation of photon pairs with a central wavelength at $\lambda$=808~nm via a spontaneous parametric down-conversion process (SPDC). A half-wave plate (H) working at 404~nm set before the lens and BBO crystals is used to control the polarization of the pump laser. Two polarization-entangled photons 
\begin{equation}
\ket{\Phi(\theta)}=\cos2\theta\ket{HH}+\sin2\theta\ket{VV}
\end{equation}
are generated, and then separately distributed through two single-mode fibers (SMF), where one represents Bob and the other Alice. Two interference filters with a 3~nm full width at half maximum (FWHM) are placed to filter out proper transmission peaks. HWPs at both ends of the SMFs are used to control the polarization of both photons. A quarter-wave plate in Bob's arm is used to compensate the phase for the desired prepared state.

For preparing single-qubit states in Eq.~(\ref{suppsinqubit}), we set the rotation angle of the $404$~nm HWP to $0^\circ$, resulting in a state $\ket{H}^A\ket{H}^B$. By using Alice's photons as trigger, we can experimentally generate a pure incoherent state $\ket{H}^B$. We replace HWP$_1$ with a PBS, a $400\lambda$ QP and another two HWPs (H$_{2,3}$) on Bob's side, for generating single-qubit states $\rho^B$. The rotation angle of H$_2$ is set to $\gamma_1$, rotating the state $\ket{H}$ to another pure state 
\begin{equation}
\cos2\gamma_1\ket{H}+\sin2\gamma_1\ket{V}.
\end{equation}
Then after the birefringent crystal, the pure state is completely dephased, resulting in a incoherent mixed state 
\begin{equation}
\cos^22\gamma_1\ketbra{H}{H}+\sin^22\gamma_1\ketbra{V}{V}.
\end{equation}
The rotation angle of the inserted H$_2$ is set to $\gamma_2$, resulting in the transformation 
\begin{equation}
\begin{aligned}
&\ket{H}\longrightarrow\cos2\gamma_2\ket{H}+\sin2\gamma_2\ket{V},\\
&\ket{V}\longrightarrow\sin2\gamma_2\ket{H}-\cos2\gamma_2\ket{V}.
\end{aligned}
\end{equation}
The final prepared state reads
\begin{equation}
\begin{aligned}
&\rho^B=(\cos^2\gamma_1\cos^2\gamma_2+\sin^2\gamma_1\sin^2\gamma_2)\ketbra{H}{H}\\
&+(\sin^2\gamma_1\cos^2\gamma_2+\cos^2\gamma_1\sin^2\gamma_2)\ketbra{V}{V}\\
&+\frac{1}{2}\cos2\gamma_1\sin2\gamma_2(\ketbra{H}{V}+\ketbra{V}{H}),
\end{aligned}
\end{equation}
with Bloch coordinates
\begin{equation}
\begin{aligned}
&r_x=\cos2\gamma_1\sin2\gamma_2,\\
&r_y=0,\\
&r_z=\cos2\gamma_1\cos2\gamma_2.
\end{aligned}
\end{equation}
Thus we can prepare desired single-qubit states as described in Eq.~(\ref{suppsinqubit}).

For generating two qubit entangled states as given in Eq.~(\ref{supptwoqubit}), we set the rotation angle of the $404$~nm H to $\alpha^\circ$, where $\cos2\alpha=\mu_0$ and $\sin2\alpha=\mu_1$. Then passing through H$_1$ with rotation angle $\beta$, results in $\ket{\Psi(\mu,\beta)}^{AB}$ with desired $\mu$ and $\beta$. Using our experimental setup, the maximally entangled state can be prepared with a fidelity of $0.986$.

For preparing Werner states as in Eq.~(\ref{supptwoqubitwerner}), we make use of an unbalanced Mach - Zehnder interferometer. In particular, two 50/50 beam splitters (BS) are inserted into one branch. In the transmission path, the two-photon state is prepared as the Bell state 
\begin{equation}
\ket{\phi^+}=\frac{1}{\sqrt{2}}(\ket{HH}+\ket{VV})
\end{equation}
when the rotation angle of the 404~nm H is set as $22.5^\circ$. In the reflected path, three $400\lambda$ quartz crystals and a half-wave plate with $22.5^\circ$ are used to dephase the two-photon state into a completely mixed state $\openone^{AB}/4$. The ratio of the two states mixed at the output port of the second BS can be changed by the two adjustable apertures (AA) for the generation of Werner states in Eq.~(\ref{supptwoqubitwerner}) with arbitrary $q_\mathrm{w}$. Out of the state preparation module, the two photons are distributed to Alice and Bob, as shown in Fig.~1 in the main text. Actually, the two BSs are not ideally 50/50, and the transmission rate for $H$ and $V$ polarized photons are not exactly the same, resulting in a decrease of fidelity to $F=0.971$ when we prepare maximally entangled states though we have slightly adjusted the rotation angle of the $404$~nm H. Note that in our experiments, we adopt $0\equiv H$ and $1\equiv V$.

\subsection{Details of the experimental strictly incoherent operations}
The experimental set up for implementing the non-unitary incoherent operations is illustrated in Fig. 1(d) in the main text, which is the combination of BD$_{1,2,3}$ and H$_{4,5,6,7,8,9}$. In our experiments, we focus on strictly incoherent operations of the form
\begin{align}
K_1=\begin{pmatrix}\cos\theta_0 & 0 \\0 & \cos\theta_1 \end{pmatrix},\quad K_2=\begin{pmatrix}0&\sin\theta_1\\ \sin\theta_0&0 \end{pmatrix}.
\end{align}
For experimentally realizing these operators, the angle of H$_{6,8,9}$ is set to $45^\circ$ for applying a bit flip $\sigma_x$ on the polarization, the angle of H$_{4,5}$ are set to $\frac{\theta_0}{2}$ and $\frac{\theta_1+90^\circ}{2}$, and H$_7$ is used for phase compensation, respectively.

Without loss of generality, we suppose initially we have a qubit state
\begin{equation}
\rho_0=\frac{1}{2}\left(\openone+r_x\sigma_x+r_z\sigma_z\right)
\end{equation}
in the basis $\{\ket{H}, \ket{V}\}$. Considering the path degree of freedom, which is a two dimensional system $e_0, e_1$, the overall state can be written as
\begin{equation}
\rho_0\otimes\ketbra{e_0}{e_0},
\end{equation} 
where we assume the initial state is in path $e_0$.
Then BD$_1$ displaces the horizontally polarized component of a photon from the vertical component to a distance of about 6~mm. Accordingly, the quantum state is entangled by a controlled-NOT gate (the polarization encoded qubit is the contolling qubit) acting on the whole state, resulting in
\begin{align}
&\rho_1=\begin{pmatrix}
\frac{1+r_z}{2}&0&0&\frac{r_x}{2}\\
0&0&0&0\\
0&0&0&0\\
\frac{r_x}{2}&0&0&\frac{1-r_z}{2}\\
\end{pmatrix}
\end{align}
Then after H$_{4,5}$, we implement a controlled-rotation operation on the whole state, 
\begin{equation}
\begin{aligned}
&\ketbra{H}{H}\otimes\ketbra{e_0}{e_0}\xrightarrow{\mathrm{H}_{4}}\ketbra{\theta_0^+}{\theta_0^+}\otimes\ketbra{e_0}{e_0},\\
&\ketbra{V}{V}\otimes\ketbra{e_0}{e_0}\xrightarrow{\mathrm{H}_{4}}\ketbra{\theta_0^-}{\theta_0^-}\otimes\ketbra{e_0}{e_0},\\
&\ketbra{H}{H}\otimes\ketbra{e_1}{e_1}\xrightarrow{\mathrm{H}_{5}}\ketbra{\theta_1^+}{\theta_1^+}\otimes\ketbra{e_1}{e_1},\\
&\ketbra{V}{V}\otimes\ketbra{e_1}{e_1}\xrightarrow{\mathrm{H}_{5}}\ketbra{\theta_1^-}{\theta_1^-}\otimes\ketbra{e_1}{e_1}.
\end{aligned}
\end{equation}
where we have
\begin{equation}
\begin{aligned}
&\ket{\theta_0^+}=\cos\theta_0\ket{H}+\sin\theta_0\ket{V},\\
&\ket{\theta_0^-}=\sin\theta_0\ket{H}-\cos\theta_0\ket{V},\\
&\ket{\theta_1^+}=\cos\theta_1\ket{H}+\sin\theta_1\ket{V},\\
&\ket{\theta_1^-}=\sin\theta_1\ket{H}-\cos\theta_1\ket{V}.
\end{aligned}
\end{equation}
Then the overall state after this transformation is given by
\begin{widetext}
	
	\begin{align}
	\rho_2=\begin{pmatrix}
	\frac{1+r_z}{2}\cos^2\theta_0&\frac{r_x}{2}\cos\theta_0\sin\theta_1&\frac{1+r_z}{4}\sin2\theta_0&-\frac{r_x}{2}\cos\theta_0\cos\theta_1\\
	\frac{r_x}{2}\cos\theta_0\sin\theta_1&\frac{1-r_z}{2}\sin^2\theta_1&\frac{r_x}{2}\sin\theta_0\sin\theta_1&-\frac{1-r_z}{4}\sin2\theta_1\\
	\frac{1+r_z}{4}\sin2\theta_0&\frac{r_x}{2}\sin\theta_0\sin\theta_1&\frac{1+r_z}{2}\sin^2\theta_0&-\frac{r_x}{2}\sin\theta_0\cos\theta_1\\
	-\frac{r_x}{2}\cos\theta_0\cos\theta_1&-\frac{1-r_z}{4}\sin2\theta_1&-\frac{r_x}{2}\sin\theta_0\cos\theta_1&\frac{1-r_z}{2}\cos^2\theta_1\\
	\end{pmatrix}
	\end{align}
\end{widetext}
in the basis $\{\ket{H}\otimes\ket{e_0}, \ket{H}\otimes\ket{e_1}, \ket{V}\otimes\ket{e_0}, \ket{V}\otimes\ket{e_1}\}$.

BD$_2$ has the length equal to $\frac{2}{3}$ the length of BD$_{1,3}$, which displace the horizontal component 4~mm away from the vertical component, resulting in the following transformations by extending the system to a higher-dimensional space,
\begin{equation}
\begin{aligned}
&\ket{H}\otimes\ket{e_0}\xrightarrow{\mathrm{BD}_2}\ket{H}\otimes\ket{d_0},\\
&\ket{V}\otimes\ket{e_0}\xrightarrow{\mathrm{BD}_2}\ket{V}\otimes\ket{d_1},\\
&\ket{H}\otimes\ket{e_1}\xrightarrow{\mathrm{BD}_2}\ket{H}\otimes\ket{d_2},\\
&\ket{V}\otimes\ket{e_1}\xrightarrow{\mathrm{BD}_2}\ket{V}\otimes\ket{d_3}.
\end{aligned}
\end{equation}
Then the quantum state after BD$_2$ will be\\

\begin{widetext}
	\begin{equation}
	\begin{aligned}
	\rho_3=\begin{pmatrix}
	\frac{1+r_z}{2}\cos^2\theta_0&0&\frac{r_x}{2}\cos\theta_0\sin\theta_1&0&0&\frac{1+r_z}{4}\sin2\theta_0&0&-\frac{r_x}{2}\cos\theta_0\cos\theta_1\\
	0&0&0&0&0&0&0&0\\
	\frac{r_x}{2}\cos\theta_0\sin\theta_1&0&\frac{1-r_z}{2}\sin^2\theta_1&0&0&\frac{r_x}{2}\sin\theta_0\sin\theta_1&0&-\frac{1-r_z}{4}\sin2\theta_1\\
	0&0&0&0&0&0&0&0\\
	0&0&0&0&0&0&0&0\\
	\frac{1+r_z}{4}\sin2\theta_0&0&\frac{r_x}{2}\sin\theta_0\sin\theta_1&0&0&\frac{1+r_z}{2}\sin^2\theta_0&0&-\frac{r_x}{2}\sin\theta_0\cos\theta_1\\
	0&0&0&0&0&0&0&0\\
	-\frac{r_x}{2}\cos\theta_0\cos\theta_1&0&-\frac{1-r_z}{4}\sin2\theta_1&0&0&-\frac{r_x}{2}\sin\theta_0\cos\theta_1&0&\frac{1-r_z}{2}\cos^2\theta_1\\
	\end{pmatrix}
	\end{aligned}
	\end{equation}
\end{widetext}
in the basis $\{\ket{H}\otimes\ket{d_0}, \ket{H}\otimes\ket{d_1}, \ket{H}\otimes\ket{d_2}, \ket{H}\otimes\ket{d_3}, \ket{V}\otimes\ket{d_0}, \ket{V}\otimes\ket{d_1}, \ket{V}\otimes\ket{d_2}, \ket{V}\otimes\ket{d_3}\}$.
As H$_{6,8}$ are set to $45^\circ$, performing a $\sigma_x$ operation on the polarization state in path $d_{0, 3}$, and H$_7$ is set to $0^\circ$ for applying a $\sigma_z$ operation on the polarization state in path $d_{1, 2}$. \\

\begin{widetext}
	\begin{equation}
	\begin{aligned}
	\rho_4=\begin{pmatrix}
	0&0&0&0&0&0&0&0\\
	0&0&0&0&0&0&0&0\\
	0&0&\frac{1-r_z}{2}\sin^2\theta_1&-\frac{1-r_z}{4}\sin2\theta_1&\frac{r_x}{2}\cos\theta_0\sin\theta_1&-\frac{r_x}{2}\sin\theta_0\sin\theta_1&0&0\\
	0&0&-\frac{1-r_z}{4}\sin2\theta_1&\frac{1-r_z}{2}\cos^2\theta_1&-\frac{r_x}{2}\cos\theta_0\cos\theta_1&-\frac{r_x}{2}\sin\theta_0\cos\theta_1&0&0\\
	0&0&\frac{r_x}{2}\cos\theta_0\sin\theta_1&-\frac{r_x}{2}\cos\theta_0\cos\theta_1&\frac{1+r_z}{2}\cos^2\theta_0&\frac{1+r_z}{4}\sin2\theta_0&0&0\\
	0&0&-\frac{r_x}{2}\sin\theta_0\sin\theta_1&-\frac{r_x}{2}\sin\theta_0\cos\theta_1&\frac{1+r_z}{4}\sin2\theta_0&\frac{1+r_z}{2}\sin^2\theta_0&0&0\\
	0&0&0&0&0&0&0&0\\
	0&0&0&0&0&0&0&0\\
	\end{pmatrix}
	\end{aligned}
	\end{equation}
\end{widetext}

The action of BD$_3$ will coherently combine the paths $d_{0,2}$, $d_{1,3}$, followed by H$_9$ with rotation angle $45^\circ$ which flips the polarization state in path $e_0$. The final quantum state after that two Hs will be\\

\begin{widetext}
	\begin{align}
	\rho_5=\begin{pmatrix}
	\frac{1+r_z}{2}\cos^2\theta_0&\frac{r_x}{2}\cos\theta_0\cos\theta_1&\frac{r_x}{2}\cos\theta_0\sin\theta_1&\frac{1+r_z}{4}\sin2\theta_0\\
	\frac{r_x}{2}\cos\theta_0\cos\theta_1&\frac{1-r_z}{2}\cos^2\theta_1&-\frac{1-r_z}{4}\sin2\theta_1&-\frac{r_x}{2}\sin\theta_0\cos\theta_1\\
	\frac{r_x}{2}\cos\theta_0\sin\theta_1&-\frac{1-r_z}{4}\sin2\theta_1&\frac{1-r_z}{2}\sin^2\theta_1&\frac{r_x}{2}\sin\theta_0\sin\theta_1\\
	\frac{1+r_z}{4}\sin2\theta_0&-\frac{r_x}{2}\sin\theta_0\cos\theta_1&\frac{r_x}{2}\sin\theta_0\sin\theta_1&\frac{1+r_z}{2}\sin^2\theta_0\\
	\end{pmatrix}
	\end{align}
\end{widetext}

Note that we can tilt BD$_3$ for phase compensation, removing the negative signs in the interference terms between the horizontal and vertical component in each arm.
When tracing over the path degree, we obtain\\
\begin{widetext}
	\begin{align}
	\rho_f=\begin{pmatrix}
	\frac{1+r_z}{2}\cos^2\theta_0+\frac{1-r_z}{2}\sin^2\theta_1&\frac{r_x}{2}\cos\theta_0\cos\theta_1+\frac{r_x}{2}\sin\theta_0\sin\theta_1\\
	\frac{r_x}{2}\cos\theta_0\cos\theta_1+\frac{r_x}{2}\sin\theta_0\sin\theta_1&\frac{1+r_z}{2}\sin^2\theta_0+\frac{1-r_z}{2}\cos^2\theta_1\\
	\end{pmatrix}
	\end{align}
\end{widetext}
which is exactly the state after implementing $K_1$ and $K_2$.
Combination of two BDs (BD$_1$ and BD$_3$) results in a natural robust interference, and the phase between $d_0$, $d_2$ and $d_1$, $d_3$ can be removed either by adjusting the position of BD$_{1,3}$ or tilting the Hs in each arm.

Note that in our experiments, if we obtain a final state with Bloch coordinates $\tilde{r}_x>0$ and $\tilde{r}_z>0$, then we can also obtain the state $(-\tilde{r}_x, 0, \tilde{r}_z)$, $(\tilde{r}_x, 0, -\tilde{r}_z)$, and $(-\tilde{r}_x, 0, -\tilde{r}_z)$ via a $\sigma_z$ operation, a $\sigma_x$ operation and the combination of a $\sigma_z$ operation and a $\sigma_x$ operation. In Fig. 1(d), H$_{10}$ is set to $0^\circ$ for implementing a $\sigma_z$ operation, and H$_{11}$ is set to $45^\circ$ for implementing a $\sigma_x$ operation. Namely, we only need to focus on the states with Bloch coordinates $\tilde{r}_x>0$ and $\tilde{r}_z>0$, and complete the whole boundary via simple incoherent operations.

\subsection{State tomography and data collection}
After the implementation of the incoherent operations, we perform quantum state tomography. In particular, when we conduct the experiment without assistance, the single-qubit state after the incoherent operation can be directly identified via the combination of two Hs, two Qs and two PBSs in Module (III). For deterministic state conversion, we directly read the total coincident counts from the two SPDs; and for stochastic state conversion, we discard the counts from $K_2$. For experimentally determining the conversion probability in the case of stochastic conversion, we also collect data in an orthogonal basis. The probability for stochastic conversion can then be evaluated as 
\begin{equation}
P_1=\frac{N_1}{N_{\textrm{total}}}, 
\end{equation}
where $N_1$ denotes the total coincident counts from $K_1$, and $N_{\textrm{total}}$ denotes the total coincident counts from $K_1$ and $K_2$, in basis $\{\ket{H}, \ket{V}\}$.

When we conduct the experiments with assistance, Alice can perform arbitrary local projective measurements on her photons, and broadcast the measurement outcomes to Bob. Specifically, Alice chooses the optimal measurement, which helps Bob to get maximal coherence. When Bob gets the information from Alice, which is either $0$ or $1$, he can then implement the aforementioned incoherent operations, obtaining the final target states.

For data collections, we used single-mode fibers on Bob's arm and multi-mode fibers on Alice's arm for directing photons from space to detectors. The use of multi-mode fibers can increase and stabilize the collection effeciency of Alice's photons.
On the other hand, the use of single-mode fibers on Bob's side is preferable for cleaning up the high order optical mode, resulting in best interference between the light beams which are displaced by the BDs. The power of the 404~nm continues laser is set to about 80~mW, and the coincidence window is set at 4~ns, resulting in 2000 coincident events in a second. When adding white noise on Alice's arm, the coicident counts decrease to around 25\% when compared to the case without noise.

\end{document}